\newcommand\raggedtop{% 
  \topskip=1\topskip plus 10pt} 
\DeclareMathOperator{\ad}{ad}
\DeclareMathOperator{\Ad}{Ad}
\DeclareMathOperator{\ran}{ran} 
\DeclareMathOperator{\id}{id}
\DeclareMathOperator{\Hom}{Hom}
\DeclareMathOperator{\Fun}{Fun}
\DeclareMathOperator{\iFun}{\textsc{Fun}} %{\underline{Fun}}
\DeclareMathOperator{\Map}{Map}
\DeclareMathOperator{\iMap}{\textsc{Map}}        %{\underline{Map}} 
\DeclareMathOperator{\Aut}{Aut}
\DeclareMathOperator{\Der}{Der}
\DeclareMathOperator{\Lie}{Lie}
\DeclareMathOperator{\ee}{e}
\DeclareMathOperator{\ZZ}{Z\hspace{-1pt}}
\DeclareMathOperator{\DD}{D\hspace{-1pt}}
\DeclareMathOperator{\OOO}{Op\hspace{-1pt}}
\DeclareMathOperator{\iOOO}{\textsc{Op}\hspace{-1pt}}     %{\underline{O}\hspace{-1pt}}
\numberwithin{equation}{subsection} 
\numberwithin{subsection}{section} 
\newcommand{\ceqref}[1]{{\textcolor{blue}{\eqref{#1}}}}
\newcommand{\cref}[1]{{\textcolor{blue}{\ref{#1}}}}
\newcommand{\ccite}[1]{{\textcolor{blue}{\!\cite{#1}}}}
\newcommand{\ddd}{{\hbox{$\bigoplus$}}}
\newcommand{\ul}[1]{{\underline{#1}}}
\newcommand{\bfs}[1]{{\boldsymbol{#1}}}
\newcommand{\mathsans}[1]{{{\sf #1}}}
\font\euler=eusm10 at 12.8 truept
\font\scripteuler=eusm7
\font\scriptscripteuler=eusm5 
\def\eul{\fam=12}
\newcommand{\matheul}[1]{{{\eul #1}}}
\newtheorem{defi}{{\sf Definition}}[section]
\newtheorem{prop}{{\sf Proposition}}[section]
\newtheorem{theor}{{\sf Theorem}}[section]
\newtheorem{lemma}{{\sf Lemma}}[section]
\begin{document}

%\hrule\vskip.5cm
%\hbox to 14.5 truecm{October 2016 \hfil DIFA 16}
%\hbox to 14.5 truecm{Version 2 \hfil}
%\vskip.5cm\hrule
\vskip1.5cm
\begin{large}
{\flushleft\textcolor{blue}{\sffamily\bfseries Operational total space theory of principal 2--bundles I:}}  
{\flushleft\textcolor{blue}{\sffamily\bfseries operational geometric framework}}  
\end{large}
\vskip1.3cm
\hrule height 1.5pt
\vskip1.3cm
{\flushleft{\sffamily \bfseries Roberto Zucchini}\\
\it Dipartimento di Fisica ed Astronomia,\\
Universit\`a di Bologna,\\
I.N.F.N., sezione di Bologna,\\
viale Berti Pichat, 6/2\\
Bologna, Italy\\
Email: \textcolor{blue}{\tt \href{mailto:roberto.zucchini@unibo.it}{roberto.zucchini@unibo.it}}, 
\textcolor{blue}{\tt \href{mailto:zucchinir@bo.infn.it}{zucchinir@bo.infn.it}}}

%roberto.zucchini@unibo.it, zucchinir@bo.infn.it}

\vskip.7cm
%\hrule height 1.5pt
\vskip.6cm 
{\flushleft\sc% \sffamily \bfseries 
Abstract:} 
%\par\noindent
It is a classic result that the geometry of the total space of a principal bundle
with reference to the action of the bundle's structure group is codified in the 
bundle's operation, %of the structure Lie algebra, %a graded Lie algebra of 
a collection of derivations comprising the de Rham differential and the contraction
and Lie derivatives of all vertical vector fields and obeying the six Cartan relations. 
In particular, connections and gauge transformations can be defined through the way they 
are acted upon by the operation's derivations. In this paper, the first of a series of two
extending the ordinary theory, we construct an operational total space theory of strict principal 2--bundles 
with regard to the action of the structure strict 2--group. Expressing this latter via a crossed module
$(\mathsans{E},\mathsans{G})$, the operation is based on the derived Lie group
$\mathfrak{e}[1]\rtimes\mathsans{G}$. In the second paper, %companion paper
an original formulation of the theory of $2$--connections and $1$-- and $2$--gauge transformations
based on the operational framework worked out here will be provided. 
\vspace{2mm}
\par\noindent
MSC: 81T13 81T45 58A50 58E40 55R65

\vfil\eject
\tableofcontents

\vfil\eject

\section{\textcolor{blue}{\sffamily Introduction}}\label{sec:intro}

\vspace{-.63mm}

Principal $2$--bundles with strict structure $2$--group are most often 
described through appropriate sets of transitional data with respect to an open cover of the base manifold, 
or cocycles, a framework  originally put forward in the foundational works by Schreiber \ccite{Schreiber:2005ff} 
and Baez and Schrieber \ccite{Baez:2004in,Baez:2005qu}. This approach points directly to Giraud’s 
non Abelian cohomology \ccite{Giraud:1971cna} which provides the associated classifying theory.
The first formulation of the total space theory of principal $2$--bundles as Lie groupoids 
was given by Bartels \ccite{Bartels:2006hgtb} and Baez and Schreiber \ccite{Baez:2004in,Baez:2005qu},
making systematic use of the ideas and techniques of categorification. The total space perspective was further developed 
by Wockel \ccite{Wockel:2008tspb}, who also obtained a classification of principal $2$--bundles up to 
Morita equivalence by non Abelian cohomology and provided a categorical characterization of the 
the gauge $2$--group. Schommer-Pries \ccite{Schommer:2011ces} extended Wockel's theory
by constructing a bicategory of principal $2$--bundles. 

Special incarnations of principal $2$--bundles are provided by the non Abelian bundle gerbes 
of Aschieri, Cantini and Jurco \ccite{Aschieri:2003mw} and the G--gerbes of Laurent--Gengoux, 
Sti\'enon and Xu \ccite{Laurent:2009nag} and Ginot and Sti\'enon \ccite{Ginot:2008gpc}. 
According to Nikolaus and Waldorf \ccite{Nikolaus:2013fvng}, there are indeed three more
distinct but equivalent ways of presenting by principal $2$--bundles with strict structure $2$--group beside 
the one already recalled: $i)$ by smooth non Abelian \v Cech cocycles of the $2$--group;
$ii)$ by classifying maps to classifying space of the $2$--group; $iii)$ by $2$--group 
bundle gerbes. \raggedbottom

The theory of $2$--connections and $1$-- and $2$--gauge transformations on principal $2$--bundles has been the 
object of intense inquiry because of its potential applications in higher gauge theory and string theory. 
It takes different forms 
depending on the description of $2$--bundles used. We mention in particular the work of Breen and Messing \ccite{Breen:2001ie}, 
Schreiber \ccite{Schreiber:2005ff,Schreiber:2013pra}, Baez and 
Schreiber \ccite{Baez:2004in,Baez:2005qu} and Jurco, Saemann and Wolf \ccite{Jurco:2014mva,Jurco:2016qwv} characterizing 
in various ways connections and gauge transformations in terms of transition data, 
Aschieri, Cantini and Jurco \ccite{Aschieri:2003mw} treating connections and gauge transformations 
in the framework of bundle gerbes and Laurent--Gengoux, Sti\'enon and Xu \ccite{Laurent:2009nag}
considering connections of G--gerbes. A theory of $2$--connections
working in Wockel’s total space principal $2$--bundle theory \ccite{Wockel:2008tspb} has been worked out only recently
by Waldorf \ccite{Waldorf:2016tsct,Waldorf:2017ptpb}.

%\vfil\eject

\subsection{\textcolor{blue}{\sffamily The operational theory and its scope }}\label{subsec:scope}

The present endeavours puts forward a proposal for an operational total space theory of principal 
$2$--bundles. We begin by reviewing the operational approach to ordinary principal bundle theory
on which the higher extension is modelled. 

Let $P$ be a principal bundle $P$ over $M$ with structure group $\mathsans{G}$.  $P$ is so a manifold with a free 
fiberwise transitive right $\mathsans{G}$--action. With any Lie algebra element
$x\in\mathfrak{g}$ there is then associated a vertical vector field $C_{Px}$. 
It is a classic result \ccite{Greub:1973ccc} that the underlying total space geometry of $P$ 
is codified in its {\it operation} $\OOO S_P$. This is the geometrical structure consisting of the graded algebra 
$\Omega^\bullet(P)$ of differential forms of $P$ and the collection of graded derivations of
$\Omega^\bullet(P)$ comprising the de Rham differential $d_P$ and the contraction and Lie derivatives 
$j_{Px}$, $l_{Px}$ of the vector fields $C_{Px}$.   %of all structure algebra elements $X$ 
The derivations obey the six Cartan relations,
\begin{align}
&[d_P,d_P]=0,
\vphantom{\Big]}
\label{intro1}
\\
&[d_P,j_{Px}]=l_{Px},
\vphantom{\Big]}
\label{intro2}
\\
&[d_P,l_{Px}]=0,
\vphantom{\Big]}
\label{intro3}
\\
&[j_{Px},j_{Py}]=0,
\vphantom{\Big]}
\label{intro4}
\\
&[l_{Px},j_{Py}]=j_{[x,y]},
\vphantom{\Big]}
\label{intro5}
\\
&[l_{Px},l_{Py}]=l_{[x,y]}.
\vphantom{\Big]}
\label{intro6}
\end{align} 
The differential forms of $P$ annihilated by all derivations $j_{Px}$, $l_{Px}$ with $x\in\mathfrak{g}$ 
are called {\it basic} and form a subalgebra $\Omega^\bullet{}_{\mathrm{b}}(P)$
of $\Omega^\bullet(P)$. $\Omega^\bullet{}_{\mathrm{b}}(P)$ is in one--to--one correspondence and thus can be identified with
the differential form algebra $\Omega^\bullet(M)$ on $M$.

A connections $\omega$ can be characterized as a $\mathfrak{g}$--valued $1$--form of $P$
acted upon in a prescribed way by the  operation's derivations.
Similarly, a gauge transformation $g$ can be defined 
as a $\mathsans{G}$--valued map behaving in a certain way under the action of the  
derivations. Gauge transformations form an infinite dimensional Lie group 
with a left action on the space of connections. 

On any sufficiently small neighborhood $U$ of $M$, on which the $\mathsans{G}$--bundle $P$ is isomorphic
to the trivial $\mathsans{G}$--bundle $U\times\mathsans{G}$, a connection $\omega$ is completely characterized by 
a basic $\mathfrak{g}$--valued $1$--form $\omega_{\mathrm{b}}$ on $U$. Similarly, 
a gauge transformation $g$ is characterized by a basic $\mathsans{G}$--valued map $g_{\mathrm{b}}$ on $U$.
Local connection and gauge transformation data 
relative to distinct overlapping trivializing neighborhoods of $M$ match in a prescribed way codified in
a $\mathsans{G}$--valued \v Cech $1$--cocycle. All these properties are described by the {\it basic} formulation 
of principal bundle geometry, which is the one implicitly used in most physical literature. 

%As a preliminary step, %in sect. \cref{sec:opbrev},  
%we reconsider the ordinary theory and reformulate the operational approach reviewed above in the language of 
%{\it graded differential geometry}.  reviewed above that better suits our purposes

The operational approach to principal bundle theory admits an elegant reformulation in the language of 
{\it graded differential geometry}.  
For a principal $\mathsans{G}$--bundle $P$, the graded differential form algebra
$\Omega^\bullet(P)$ is described as the graded function algebra $\Fun(T[1]P)$ of the shifted tangent bundle
$T[1]P$ of $P$ and the operation derivations $d_P$, $j_{Px}$, $l_{Px}$, $x\in\mathfrak{g}$, 
as graded vector fields on $T[1]P$. % (cf. subsect. \cref{subsec:ordgau}). 
Connection and gauge transformations can then be defined as earlier as degree $1$ $\mathfrak{g}$--
and degree $0$ $\mathsans{G}$--valued %internal 
functions on $T[1]P$
behaving in a prescribed way under the action of $d_P$, $j_{Px}$, $l_{Px}$. 
Gauge transformation of connections can be implemented in the familiar way. The basic formulation goes 
through essentially with no changes. 

Our investigation has shown that an analogous graded geometric operational total space formulation 
of principal 2--bundle theory can be worked out. 
The results of our research are expounded in a series of two papers.
In the first paper, referred to as I,
we set the foundations of the operational total space theory 
of principal 2--bundles. In the second paper, referred to as II \ccite{Zucchini:2019rps}, 
based on the operational setup worked out in I,
we provide an original formulation of the theory of $2$--connections and $1$-- 
and $2$--gauge transformations of strict principal 2--bundles. 

To make the nature of our work more easily appreciable
by the reader, we now outline briefly the intuitive ideas underlying the higher extension. 
The ordinary theory is the model against which the higher one is built. 
In a nutshell, the higher theory can be outlined as follows. With the
$2$--bundle's structure $2$--group, there is associated a derived graded group.
The structure $2$-group's action on the $2$--bundle is then shown to induce a
derived group's action on a synthetic form of the $2$--bundle. 
With the latter, there is associated an operation, that suitably describes the 
$2$--bundle's geometry and in particular $2$--connections and $1$-- 
and $2$--gauge transformations. 

The passage from the ordinary to the higher theory is not free of subtleties. 
The graded nature of the derived group makes it necessary to further reshape the ordinary theory 
before constructing the higher one by analogy. For a principal bundle $P$, 
the ordinary graded function algebra $\Fun(T[1]P)$ must be replaced 
by the {\it internal} graded function algebra $\iFun(T[1]P)$. While $\Fun(T[1]P)$ has a single grading, 
the shifted tangent bundle grading of $T[1]P$, $\iFun(T[1]P)$ has two, the tangent bundle 
grading and a further internal grading. 
$\Fun(T[1]P)$ can therefore be identified with the subalgebra of $\iFun(T[1]P)$ of zero internal grading. 

Extending the function algebra from $\Fun(T[1]P)$ to $\iFun(T[1]P)$ 
as indicated introduces in the original operational framework internal multiplicities.
In the familiar formulation of gauge theory, this would correspond to add the ghost degree
to the form one. It endows in this way connections and gauge transformations with ghostlike partners 
making the whole geometrical framework akin to that employed in 
the AKSZ formulation of BV theory \ccite{Alexandrov:1995kv} (see also \ccite{Zucchini:2017nax}).
In ordinary principal bundle geometry doing so is discretionary.
In higher principal bundle theory, as it turns out, it is unavoidable. 

%The graded differential geometric reformulation of the operational total space theory of
%principal bundles outlined in the previous paragraphs constitutes the model
%The operational theory of principal bundles constitutes the model
%against which we construct a corresponding operational theory of 
%principal $2$--bundles. % in sect. \cref{sec:weiloper}. 

%with regard to the action of the structure strict 2--group. 
% will become clear in due course.
%Since the Lie algebra $\mathfrak{g}$ is ungraded, 
%The ultimate reason for extending the function algebra from $\Fun(T[1]P)$ to $\iFun(T[1]P)$ 
%as indicated rests entirely in higher gauge theory. In ordinary gauge theory, doing so 
%with regard to the action of the structure strict 2--group. 
%$2$--connections and $1$-- and $2$--gauge transformations are fully characterized by the action 
%of the operation derivations on them. 

%In this series of two papers, we reexamine and reformulate 
%the total space theory of principal $2$--bundles and $2$--connections and 
%$1$-- and $2$-- gauge transformations from an alternative point of view. 

%\vfil\eject

\subsection{\textcolor{blue}{\sffamily Main features of the operational theory of principal 2--bundles}}\label{subsec:approach}

In this paper, I of the series, we build an operation describing the fibered
geometry of a given principal $2$--bundle. The attendant geometric framework  
will constitute the backdrop for the $2$--connection and $1$-- and $2$--gauge transformation
theory of the companion paper, II of the series. The formulation proposed, which is non trivial in
many respects, is outlined in this subsection. 

%We now outline the main features of our operational total space formulation 
%of principal $2$--bundle geometry expounded in this paper,  
%referring the reader to the paper's main body  
%for a full more rigorous account of the topics covered here. 

The operational framework (cf. subsect. \cref{subsec:opers}) 
is the paradigm on which the whole architecture of the present work rests. 
It is by design the most appropriate approach for the study of the total 
space differential geometry of ordinary principal bundles 
(cf. subsect. \cref{subsec:ordtotal}) and, 
once suitably adapted, has the potential of shedding light 
into that of principal 2--bundles as argued below. 

A principal $2$--bundle consists of a morphism manifold $\hat P$ with an object submanifold $\hat P_0$ 
constituting a groupoid, a base manifold $M$, compatible projection maps 
$\hat\pi:\hat P\rightarrow M$ and $\hat\pi_0:\hat P_0\rightarrow M$
forming a functor, a morphism group $\hat{\mathsans{K}}$ with an object subgroup $\hat{\mathsans{K}}_0$ organized 
as a strict Lie $2$-group and compatible right actions $\hat R:\hat P\times\hat{\mathsans{K}}\rightarrow\hat P$ and 
$\hat R_0:\hat P_0\times\hat{\mathsans{K}}_0\rightarrow\hat P_0$ building a functor and respecting $\hat\pi$ and $\hat\pi_0$.
The $2$--bundle enjoys further the property of locally trivializability, that is on any small enough neighborhood $U$ 
of $M$ the groupoid $(\hat P|_U,\hat P_0|_U)$ is equivariantly projection preservingly 
equivalent %in the appropriate categorical sense
to the groupoid $(U\times\hat{\mathsans{K}},U\times\hat{\mathsans{K}}_0)$ with the obvious projection
and right action structures. 

%$\xymatrix@C-=0.5cm{\hat{P}\ar@<-2pt>[r]\ar@<+2pt>[r]&\hat{P}_0}$
%$\xymatrix@C-=0.5cm{\hat{\mathsans{K}}\ar@<-2pt>[r]\ar@<+2pt>[r]&\hat{\mathsans{K}}_0}$
%Consider a strict principal $\hat{\mathsans{K}}$--$2$--bundle $\hat{P}$. 
%the structure $2$--group's %of the principal $2$--bundle 

By virtue of the right $\hat{\mathsans{K}}$-- and $\hat{\mathsans{K}}_0$--actions on the manifolds 
$\hat{P}$ and $\hat{P}_0$, there are two operations 
associated with any given principal $2$--bundle, which we could include in an operational 
framework. However, proceeding in this naive way would not allow us to make contact with the standard 
formulation of strict higher gauge theory. To achieve this end, we have to make a further step: 
shifting to what we call the synthetic formulation for its formal affinity to smooth infinitesimal analysis
of synthetic differential geometry. 

In the synthetic approach (cf. subsects. \cref{subsec:synth}--\cref{subsec:2prinbundop}), 
one adjoins to the given principal $2$--bundle a synthetic structure 
consisting of morphism and object manifolds $P$ and $P_0$, the base manifold $M$, 
projection maps $\pi$ and $\pi_0$, morphism and object groups $\mathsans{K}$ and $\mathsans{K}_0$ 
and right $\mathsans{K}$-- and $\mathsans{K}_0$-- actions $R$ and $R_0$ on $P$ and $P_0$. 
The synthetic setup is obtained from the original non synthetic one through the following formal 
construction. Describe the strict Lie $2$--group $(\hat{\mathsans{K}},\hat{\mathsans{K}}_0)$ by its associated 
Lie group crossed module $(\mathsans{E},\mathsans{G})$ so that 
$\hat{\mathsans{K}}=\mathsans{E}\rtimes\mathsans{G}$ and $\hat{\mathsans{K}}_0=\mathsans{G}$.
Then, $\mathsans{K}=\mathfrak{e}[1]\rtimes\mathsans{G}$ and $\mathsans{K}_0=\mathsans{G}$. 
Next, formally extend the $\hat{\mathsans{K}}$--action $\hat R$ to a $\mathsans{K}$--action. Then, $P$ is
the $\mathsans{K}$--action image of $\hat P_0$ and $P_0=\hat P_0$, $R$ is the restriction of $\hat R$ to $P$
and $R_0=\hat R$. Here, $\mathsans{K}$ and $P$ must be regarded as spaces of functions from $\mathbb{R}[-1]$ 
to $\mathsans{E}\rtimes\mathsans{G}$ and $P$, respectively. %, in the spirit of synthetic smooth geometry. 
We remark that, although the synthetic structure shares many of the properties of the underlying principal $2$--bundle,
it is not one since neither pairs $(\mathsans{K},\mathsans{K}_0)$ and $(P,P_0)$ 
have a groupoid structure. 

There are two operations $\iOOO S_{P}$ and $\iOOO S_{P0}$ which codify the $\mathsans{K}$-- and 
$\mathsans{K}_0$--actions $R$ and $R_0$ of $P$ and $P_0$. By their synthetic nature, as it will be shown 
in II, they have the right properties for making the desired contact with strict higher gauge theory.
We describe them in greater detail next. 

The Lie group $\DD\mathsans{M}=\mathfrak{e}[1]\rtimes\mathsans{G}$ and its subgroup 
$\DD\mathsans{M}_0=\mathsans{G}$ encountered above %associated with the crossed module $(\mathsans{E},\mathsans{G})$
as well as the Lie algebras $\DD\mathfrak{m}=\mathfrak{e}[1]\rtimes\mathfrak{g}$ 
and its subalgebra $\DD\mathfrak{m}_0=\mathfrak{g}$ are called derived and
play a fundamental role in the architecture of our construction. As partly anticipated, 
$\DD\mathsans{M}$ may be thought of as the Lie group of internal functions 
$\ee^{\bar\alpha L}a$, $\bar\alpha\in\mathbb{R}[-1]$, with $a\in\mathsans{G}$ and $L\in\mathfrak{e}[1]$.
Similarly, $\DD\mathfrak{m}$ may be thought of as the Lie algebra of internal functions 
$x+\bar\alpha X$, $\bar\alpha\in\mathbb{R}[-1]$, with $x\in\mathfrak{g}$ and $X\in\mathfrak{e}[1]$.
Note however that, in the spirit of the superfield formulation of supersymmetric field theories, here 
$\bar\alpha$ is simply a book--keeping parameter and not an extra variable injected in the theory and allowed 
to vary in its full range. 
 
The synthetic morphism operation $\iOOO S_{P}$ consists of the internal function algebra 
$\iFun(T[1]P)$ acted upon the de Rham vector field $d_P$ and the contraction and Lie 
vector fields $j_{PZ}$, $l_{PZ}$, $Z\in\DD\mathfrak{m}$. In this case, it is necessary and not just
merely optional to use the internal function algebra rather than the ordinary one, because the 
latter would not be preserved by the $j_{PZ}$, $l_{PZ}$ due to the non trivial gradation of 
$\DD\mathfrak{m}$. 

The synthetic object operation $\iOOO S_{P0}$ has a similar constitution. 
The underlying function algebra is the internal algebra $\iFun(T[1]P_0)$
and the operation vector fields are the de Rham vector field $d_{P_0}$ and the contraction and Lie 
vector fields $j_{P_0Z_0}$, $l_{P_0Z_0}$, $Z_0\in\DD\mathfrak{m}_0$.  
In this case, due to the trivial gradation of $\DD\mathfrak{m}_0$, we could have restricted ourselves to
the ordinary algebra, but we decided to opt for the internal one 
to allow for a simple relation to the operation $\iOOO S_{P}$. 
We remark that $\iOOO S_{P0}$ plays a subordinate role as compared to $\iOOO S_{P}$ 
and serves mainly the purpose of elucidating how the ordinary theory is extended by the higher one
in our operational setup. 

% with $P_0$ and $\mathsans{K}_0$ as relevant manifold and group. 
%for the sake of a more uniform treatment. 

With our principal $2$--bundle operational setup in place, 
the ground is set for the operational theory of $2$--connections and $1$-- 
and $2$--gauge transformations presented in II.

\vfil\eject

\section{\textcolor{blue}{\sffamily Operations and principal bundles}}\label{sec:opbrev}

The operational framework is the paradigm on which the whole architecture of the present endeavour
rests. It is indeed a most adequate approach for the study of the total space differential geometry 
of ordinary principal bundles and for this reason, once suitably adapted, holds the potential 
for providing new insights into that of principal $2$--bundles. In this section, we review it
and its application to ordinary principal bundle theory to set notation and terminology
and to make the more technically involved applications of it to principal $2$--bundle theory
more easily accessible. 

In what follows, all algebras considered will be tacitly assumed to be graded commutative, 
unital, associative and real unless otherwise stated.

%\vfil\eject

\subsection{\textcolor{blue}{\sffamily Operations and Lie group spaces: our geometric paradigm}}\label{subsec:opers}

In this subsection, we review the main definitions and results of operation and Lie group space theory.
For a comprehensive treatment, we refer the reader to \ccite{Greub:1973ccc}.
We begin by introducing operations. 

%\vspace{1.5mm}

\begin{defi}
An operation $O$ consists of a graded commutative algebra $A$ and a  Lie algebra $\mathfrak{g}$
together with a derivation $d$ of $A$ of degree $1$ and for each element $X\in\mathfrak{g}$ two derivations 
$j_X$, $l_X$ of $A$ of degree $-1$, $0$, respectively, satisfying the six Cartan relations \hphantom{xxxxxxxxxxxxxxxxxxx}
\begin{align}
&[d,d]=0,
\vphantom{\Big]}
\label{opers1}
\\
&[d,j_X]=l_X,
\vphantom{\Big]}
\label{opers2}
\\
&[d,l_X]=0,
\vphantom{\Big]}
\label{opers3}
\\
&[j_X,j_Y]=0,
\vphantom{\Big]}
\label{opers4}
\\
&[l_X,j_Y]=j_{[X,Y]},
\vphantom{\Big]}
\label{opers5}
\\
&[l_X,l_Y]=l_{[X,Y]}
\vphantom{\Big]}
\label{opers6}
\end{align}
for $X,Y\in\mathfrak{g}$, where all commutators are graded. 
\end{defi}

\noindent
We shall denote the operation $O$ by the list $(A,\mathfrak{g},d,j,l)$ 
of its constituent data, or simply $(A,\mathfrak{g})$ when no confusion is possible, 
and say that $O$ is an operation of $\mathfrak{g}$ on $A$. 
%Distinct operations and their data will be distinguished by means of apexes. 

\begin{defi}
A morphism $\chi:O\rightarrow O'$ of operations consists of a graded commutative algebra morphism
$\phi:A\rightarrow A'$ and a  Lie algebra morphism $h:\mathfrak{g}'\rightarrow\mathfrak{g}$ 
such that the relations \hphantom{xxxxxxxxxxxxxx}
\begin{align}
&d\hspace{.3pt}{}'\phi=\phi\,d,
\vphantom{\Big]}
\label{opers9}
\\
&j\hspace{.2pt}{}'{}_X\phi=\phi\,j_{h(X)},
\vphantom{\Big]}
\label{opers10}
\\
&l\hspace{.35pt}{}'{}_X\phi=\phi\,l_{h(X)}
\vphantom{\Big]}
\label{opers11}
\end{align}
are obeyed for all $X\in\mathfrak{g}'$. 
\end{defi}

\noindent 
We shall denote 
an operation morphisms $\chi:O\rightarrow O'$ by the list 
$(\phi:A\rightarrow A'$, $h:\mathfrak{g}'\rightarrow\mathfrak{g})$ of its constituent data 
and shall omit the specification of sources and targets if no confusion can arise. 
%The components of distinct operation morphisms and their data 
%will also be distinguished by apexes.

Operations and operation morphisms thereof constitute a category %denoted by
$\bfs{\mathrm{Op}}$.  

Assume that an operation morphism $\chi:O\rightarrow O'$ 
given by the pair $(\phi:A\rightarrow A',h:\mathfrak{g}'\rightarrow\mathfrak{g})$ 
is such that $A=A'$ and $\phi=\id_A$. %\pagebreak 
%as graded commutative algebras and 
%$\phi=\id_A$ as a graded commutative algebra morphism. 
We then have 
\begin{align}
&d\hspace{.3pt}{}'=d,
\vphantom{\Big]}
\label{opers12}
\\
&j\hspace{.2pt}{}'{}_X=j_{h(X)},
\vphantom{\Big]}
\label{opers13}
\\
&l\hspace{.35pt}{}'{}_X=l_{h(X)}
\vphantom{\Big]}
\label{opers14}
\end{align}
for $X\in\mathfrak{g}'$ by \ceqref{opers9}--\ceqref{opers11}. In this case, 
the morphism $\chi$ is fully specified by the underlying Lie algebra morphism $h$. We   
then denote $O'$ by $h^*O$ and call it the pull-back of $O$ by $h$.

Let $O=(A,\mathfrak{g},d,j,l)$ be an operation. Since by \ceqref{opers1} $d^2=0$,
$(A,d)$ is a cochain complex. 

\begin{defi}
The cohomology of $O$ is cohomology of $(A,d)$. 
\end{defi}

\noindent 
$(A,d)$ contains \pagebreak a distinguished subcomplex $(A_{\mathrm{basic}},d)$, 
where $A_{\mathrm{basic}}$, the basic sub\-algebra 
of $A$, consists of all horizontal and invariant elements $a\in A$, %i.e. those such that 
\begin{align}
&j_Xa=0,
\vphantom{\Big]}
\label{opers7}
%\vspace{1mm}
\\
&l_Xa=0
\vphantom{\Big]}
\label{opers8}
\end{align}
for $X\in\mathfrak{g}$. 

\begin{defi}
The basic cohomology of $O$ is the cohomology of $(A_{\mathrm{basic}},d)$. 
\end{defi}

Next, we introduce Lie group spaces. 

\begin{defi} \label{defi:lgrsp1}
A Lie group space $S$ is a graded manifold $P$ carrying a right action 
$R:P\times \mathsans{G}\rightarrow P$ of a  Lie group $\mathsans{G}$. 
\end{defi}

\noindent
We shall denote the space $S$ through the list $(P,\mathsans{G},R)$ 
of its defining data. %We shall distinguish different Lie group spaces and their data by means of apexes. 

\begin{defi} \label{defi:lgrsp2}
A morphism $T:S'\rightarrow S$ of Lie group spaces consists of a map $F:P'\rightarrow P$ of graded 
manifolds and a  Lie group morphism $\eta:\mathsans{G}'\rightarrow\mathsans{G}$ such that 
for $a\in\mathsans{G}'$ one has \hphantom{xxxxxxxxxxxxxxx}
\begin{equation}
R_{\eta(a)}\circ F=F\circ R'{}_a \vphantom{\Big]^f_g}. 
\label{opers15}
\end{equation}
\end{defi}

\noindent
We shall denote a Lie group space morphism $T:S'\rightarrow S$ employing the list 
$(F:P'\rightarrow P,~\eta:\mathsans{G}'\rightarrow\mathsans{G})$ of its constituent data
and shall omit the specification of sources and targets if no confusion is possible. 
%We distinguish different morphisms and their data using again apexes. 

Lie group spaces and their morphisms form a category $\bfs{\mathrm{Lsp}}$. 

Consider a Lie group space $S$. For each $X\in \mathfrak{g}$, the Lie algebra of $\mathsans{G}$,  
the vertical vector field $V_X$ of the action $R$ associated with $X$ is defined. \raggedbottom 
The internal function algebra $\iFun(T[1]P)$ (cf. app. \cref{subsec:convnot})
of the shifted tangent bundle $T[1]P$ of $P$ 
is acted upon by the de Rham derivation $d_P$ of degree $1$ 
and for $X\in\mathfrak{g}$ 
by the contraction and Lie derivations $j_{PX}$, $l_{PX}$ %\pagebreak 
of $V_X$ of degree $-1$, $0$, all realized as graded vector fields on $T[1]P$.

\begin{prop} \label{prop:lgrsp1}
With any Lie group space $S=(P,\mathsans{G}, R)$ there is associated the 
operation $\iOOO S=(\iFun(T[1]P),\mathfrak{g},d_P,j_P,l_P)$.
\end{prop}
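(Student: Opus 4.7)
The plan is to verify that the data $(\iFun(T[1]P),\mathfrak{g},d_P,j_P,l_P)$ meets all conditions of Definition 2.1.1, namely that $\iFun(T[1]P)$ is a graded commutative algebra, $\mathfrak{g}$ is a Lie algebra, the three operators have the stated degrees, and together they obey the six Cartan relations \eqref{opers1}--\eqref{opers6}. The algebraic nature of $\iFun(T[1]P)$ is guaranteed by its construction in the appendix. The assignments $X\mapsto j_{PX}$, $X\mapsto l_{PX}$ are $\mathbb{R}$--linear because the vertical vector field map $X\mapsto V_X$ is linear and contraction and Lie derivative depend linearly on the underlying vector field. The degrees are the standard ones for shifted-tangent realizations of contraction, de Rham, and Lie derivative.

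First I would record the well-known fact that for a right action $R:P\times\mathsans{G}\to P$, the map $X\mapsto V_X$ is a Lie algebra morphism from $\mathfrak{g}$ to the Lie algebra of vector fields on $P$, so that $[V_X,V_Y]=V_{[X,Y]}$. This is the single geometric input driving the relations involving the bracket, namely \eqref{opers5} and \eqref{opers6}. Next, I would invoke the classical Cartan calculus on an ordinary manifold: $d_P^2=0$, the magic formula $[d_P,j_{PX}]=l_{PX}$, the fact that $l_{PX}$ commutes with $d_P$, the anticommutativity $[j_{PX},j_{PY}]=0$ of contractions, and the compatibility $[l_{PX},j_{PY}]=j_{P[X,Y]}$ between Lie derivative and contraction. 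These six identities are standard on $\Omega^\bullet(P)=\Fun(T[1]P)$ and, as recalled in the introduction, underlie the operational theory of ordinary principal bundles.

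The one step that requires genuine checking rather than mere citation is the passage from $\Fun(T[1]P)$ to the internal algebra $\iFun(T[1]P)$. Here I would argue that, because $d_P$, $j_{PX}$, $l_{PX}$ are intrinsic geometric operators on $T[1]P$ (realized as graded vector fields), their extension to the internal function algebra preserves all the identities: the Cartan relations are identities of graded vector fields on $T[1]P$, so applying them to an element of $\iFun(T[1]P)$ rather than $\Fun(T[1]P)$ does not alter the outcome, provided the internal grading is carried as a passive book-keeping variable. I expect this compatibility step to be the only real obstacle, in the sense that it requires being explicit about how graded vector fields act on internal functions and checking that the gradings add up correctly in each of \eqref{opers1}--\eqref{opers6}; once this is in place, the identities follow at once from their ordinary counterparts. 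Collecting all six verified relations yields an operation of $\mathfrak{g}$ on $\iFun(T[1]P)$, which is precisely $\iOOO S$.
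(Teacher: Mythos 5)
Your proposal is correct and follows exactly the route the paper (implicitly) takes: the paper states this proposition without a formal proof, relying on the classical Cartan calculus of $d_P$, $j_{PX}$, $l_{PX}$ on $\Fun(T[1]P)$ together with the fact that $X\mapsto V_X$ is a Lie algebra morphism for a right action, and then simply observes that these graded vector fields extend to the internal algebra $\iFun(T[1]P)$ (which, unlike $\Fun(T[1]P)$, is closed under them for the group spaces of interest). Your verification supplies precisely the steps the paper takes for granted, and you correctly identify the extension to $\iFun(T[1]P)$ as the only point needing explicit attention.
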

%For any graded commutative algebra
%$A$, one can also construct in obviuos fashion the operation 
%$\OOO S_A= (\Fun(T[1]P)\otimes A,\mathfrak{g},d_P,j_P,l_P)$, the operation of the 
%Lie group space $S$ twisted by $A$.  

\noindent
The reason why the internal \pagebreak function algebra $\iFun(T[1]P)$ is used instead than the ordinary 
algebra $\Fun(T[1]P)$ is that for the Lie groups spaces considered in this paper
$\Fun(T[1]P)$ is not closed under all operation derivations $j_{PX}$, $l_{PX}$ 
while $\iFun(T[1]P)$ is. For the standard Lie group spaces encountered in geometry, 
$\Fun(T[1]P)$ is closed and thus it is possible and indeed customary to use 
it to construct the operation $\OOO S=(\Fun(T[1]P),\mathfrak{g},d_P,j_P,l_P)$. 
In what follows, we shall consider mostly the operation $\iOOO S$. 

% is yielded in this way. 
%the ordinary function algebra $\Fun(T[1]P)$ of $P$ instead of the internal $\iFun(T[1]P)$.

Similarly, morphisms of Lie group spaces induce morphisms of the associated operations 
by the equivariance condition \ceqref{opers15}.

\begin{prop}  \label{prop:lgrsp2}
Let $T:S'\rightarrow S$ be a morphism of Lie group spaces specified by the pair 
$(F:P'\rightarrow P,\eta:\mathsans{G}'\rightarrow\mathsans{G})$. 
Then, $T$ induces a morphism $\iOOO T:\iOOO S\rightarrow\iOOO S'$ of operations 
specified by the pair $(F^*:\iFun(T[1]P)\rightarrow\iFun(T[1]P')$, $h:\mathfrak{g}'\rightarrow\mathfrak{g})$,
where $F^*$ is the  pull-back of function spaces by $F$  and $h$ is the 
Lie algebra morphism associated with $\eta$ by Lie differentiation.
\end{prop}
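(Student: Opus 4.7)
The plan is to build the operation morphism $\iOOO T$ directly from the geometric data of $T$ and then verify the defining relations \ceqref{opers9}--\ceqref{opers11} by differentiating the equivariance condition \ceqref{opers15} and invoking the naturality of the de Rham, contraction and Lie derivations under smooth maps.

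First I would set up the algebra and Lie algebra data. The smooth map $F:P'\rightarrow P$ lifts canonically by shifted tangent functoriality to a morphism $T[1]F:T[1]P'\rightarrow T[1]P$ of shifted tangent bundles, whose pull--back on internal functions yields a graded commutative algebra morphism $F^*:\iFun(T[1]P)\rightarrow\iFun(T[1]P')$. Correspondingly, the Lie group morphism $\eta:\mathsans{G}'\rightarrow\mathsans{G}$ induces by Lie differentiation a Lie algebra morphism $h:\mathfrak{g}'\rightarrow\mathfrak{g}$, as claimed.

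The crucial geometric input is to differentiate the equivariance relation \ceqref{opers15} at $a=e$ along an element $X\in\mathfrak{g}'$. This yields
\begin{equation*}
dF\circ V'{}_X=V_{h(X)}\circ F,
\end{equation*}
that is, the vertical vector fields $V'{}_X$ on $P'$ and $V_{h(X)}$ on $P$ are $F$--related. Once this is established, the three intertwining relations follow by standard graded differential calculus on $T[1]P$ and $T[1]P'$. Indeed, the de Rham derivation commutes with pull--back for any smooth map, which gives $d_{P'}F^*=F^*d_P$, i.e. \ceqref{opers9}. The $F$--relatedness of vertical vector fields translates into the naturality of contraction, yielding $j_{P'X}F^*=F^*j_{P,h(X)}$, i.e. \ceqref{opers10}. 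The analogous identity \ceqref{opers11} for the Lie derivations can then be read off either from the same naturality for Lie derivatives or, more economically, by applying the Cartan relation \ceqref{opers2} to both sides of \ceqref{opers10}.

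The main subtle point, and in my view the only one deserving scrutiny, is to make sure that all three operations are performed at the level of the \emph{internal} function algebra $\iFun(T[1]P)$ rather than the ordinary one. The lift $T[1]F$ is homogeneous with respect to both the shifted tangent grading and the internal grading, so $F^*$ respects the bigrading; correspondingly the graded vector fields $d_{P'}$, $j_{P'X}$, $l_{P'X}$ act diagonally in the internal grading and the computations above remain valid verbatim. What remains is only a mechanical check that the triple $(F^*,h)$ fits Definition \cref{opers}'s notion of operation morphism and that the assignment $T\mapsto\iOOO T$ is well defined; both are routine.
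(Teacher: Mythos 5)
Your proposal is correct and follows exactly the route the paper intends: the paper gives no written proof of this proposition, only the preceding remark that the operation morphism is induced ``by the equivariance condition \ceqref{opers15}'', and your argument --- differentiating \ceqref{opers15} at the identity to get $F$--relatedness of the vertical vector fields, then invoking naturality of $d$, $j$, $l$ under pull--back (with \ceqref{opers11} recovered from the Cartan relation \ceqref{opers2}) --- is the standard elaboration of that remark. The only cosmetic imprecision is the claim that the derivations act ``diagonally'' in the internal grading (for graded $X$ the contraction $j_{PX}$ shifts internal degree, which is precisely why $\iFun$ rather than $\Fun$ is needed), but this does not affect the validity of the intertwining relations.
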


The map $\iOOO$ that associates to each 
space its operation and to each morphism of spaces its morphism of operations is a functor from the space category 
$\bfs{\mathrm{Lsp}}^{\mathrm{op}}$ into the operation category $\bfs{\mathrm{Op}}$.

Consider the special case of a morphism $T:S'\rightarrow S$ of Lie group spaces specified  by a pair
$(F:P'\rightarrow P,\eta:\mathsans{G}'\rightarrow\mathsans{G})$, where $P=P'$ as graded 
manifolds and $F=\id_P$ as a graded manifold map. Then, by \ceqref{opers15}
\begin{equation}
R'{}_a =R_{\eta(a)}
\label{opers16}
\end{equation}
with $a\in\mathsans{G}'$. We then denote $S'$ as $\eta^*S$ and call it the pull-back
of $S$ by $\eta$. Furthermore, as
the graded algebra morphism $F^*=\id_{\Fun(T[1]P)}$, the operation $\iOOO S'$ is just 
the pull--back $h^*\iOOO S$ of the operation $\iOOO S$ 
by the Lie algebra morphism $h:\mathfrak{g}'\rightarrow\mathfrak{g}$ 
associated with the Lie group morphism $\eta$ by Lie differentiation. Note that 
by construction \hphantom{xxxxxxxxxxxxxxxxxx}
\begin{equation}
\iOOO \eta^*S=h^*\iOOO S.
\label{opers17}
\end{equation}

For a Lie group space $S=(P,\mathsans{G},R)$,  
the cochain complex $(\Fun(T[1]P),d_P)$
is as well--known isomorphic to the de Rham complex $(\Omega^\bullet(P),d_{dR\,P})$. 
When the action $R$ is free, and so the quotient $P/\mathsans{G}$ is a manifold, 
the basic complex $(\Fun(T[1]P)_{\mathrm{basic}},d_P)$ is isomorphic to the de Rham complex 
$(\Omega^\bullet(P/\mathsans{G}),d_{dR\,P/\mathsans{G}})$.
Note however that $(\Fun(T[1]P)_{\mathrm{basic}},d_P)$ 
is defined even when $R$ is not free and this identification strictly speaking fails to hold 
making it possible by way of generalization to meaningfully interpret it in a weaker sense
as the complex $(\Omega^\bullet(P/\mathsans{G}),d_{dR\,P/\mathsans{G}})$.

%\vfil\eject

\subsection{\textcolor{blue}{\sffamily Total space operation of a principal bundle %: a review
}}\label{subsec:ordtotal}

The total space theory of principal bundles, surveyed in this subsection, is concerned 
with the more geometrically intuitive features of principal bundles, in particular its fibered structure
and structure group action. Our expositioin, which follows mainly \ccite{Greub:1973ccc}, 
is organized in such a way to render the naturalness of the operational formulation apparent. 

Let $\mathsans{G}$ be a Lie group. 

\begin{defi}
A principal $\mathsans{G}$--bundle $P$ 
consists of a manifold $P$, a further manifold $M$, a surjective submersion 
$\pi:P\rightarrow M$ and a right $\mathsans{G}$--action $R$ on $P$ 
acting freely and transitively on the fibers of $\pi$. 
\end{defi}

\noindent
$P$, $\mathsans{G}$, $M$ and $\pi$ are called the total space, the structure group, 
the base and the projection of the bundle, respectively. 

In a principal $\mathsans{G}$--bundle $P$, each fiber of $P$ is diffeomorphic 
to $\mathsans{G}$, the orbits of the action $R$ are the fibers themselves 
and the orbit space $P/\mathsans{G}$ is diffeomorphic to the base manifold $M$.
Further, the vertical subbundle $V_P=\ker T\pi$ of the tangent bundle $TP$ of $P$, 
where $T\pi:TP\rightarrow TM$ is the tangent map of the bundle's projection $\pi$, is trivial.  

A connection of %a principal $\mathsans{G}$--bundle 
$P$ is a $\mathsans{G}$--invariant 
distribution $H\subset TP$ that is pointwise transverse to $V_P$. The connection is said to be
flat, if the distribution $H$ is integrable. Connections of $P$ form an affine space $\mathcal{A}_P$.

A gauge transformation of %a principal $\mathsans{G}$--bundle 
$P$ is a $\mathsans{G}$--equivariant 
fiber preserving diffeomorphism $\varPhi$ of $P$. Gauge transformations form a group $\Aut^G_\pi(P)$ under composition,
called the gauge group. $\Aut^G_\pi(P)$ turns out to be isomorphic to the group $\Fun^G(P,\mathsans{G}_{\mathrm{Ad}})$
of $\mathsans{G}$--equivariant maps of $P$ into $\mathsans{G}_{\mathrm{Ad}}$, where $\mathsans{G}_{\mathrm{Ad}}$ is 
$\mathsans{G}$ with the conjugation right action. 

If $H$ is a connection and $\varPhi$ a gauge transformation of $P$, then $\varPhi^{-1}{}_*H$ is also a connection of 
$P$ since the equivariance of $\varPhi$ renders $\varPhi^{-1}{}_*H$  a $\mathsans{G}$--invariant distribution in $TP$. 
In this way, a left action $(\varPhi,H)\rightarrow {}^\varPhi H=\varPhi^{-1}{}_*H$ of the gauge group 
$\Aut^G_\pi(P)$ on the connection space $\mathcal{A}_P$ is defined.

With a principal $\mathsans{G}$--bundle $P$ there is associated a Lie group space $S_P=(P,\mathsans{G},R)$ 
and through this an operation $\OOO S_P=(\Fun(T[1]P),\mathfrak{g},d_P,j_P,l_P)$. 
$\OOO S_P$ furnishes a very elegant and natural differential geometric framework
for the study of connections and gauge transformations of $P$, since 
these are essential defined by the way they behave under the $\mathsans{G}$--action of $P$. 
We shall review the resulting theory in greater detail in sect. 2 %\cref{sec:ordgau} 
of II. 
Here, in I, we shall use the operational theory of principal bundles just outlined as a prototype
for the corresponding theory of strict principal 2--bundles introduced and 
studied in next section. 

%By way of generalization, when $P$ is endowed with a non trivial grading and $R$ is not free, 
%we can think of $P$ as a generalized principal $\mathsans{G}$--bundle over $P/\mathsans{G}$
%and define the de Rham complexes $(\Omega^\bullet(P),d_{dR\,P})$ and $(\Omega^\bullet(P/\mathsans{G}), d_{dR\,P/\mathsans{G}})$ 
%to be $(\iFun(T[1]P),d_P)$ and $(\iFun(T[1]P)_{\mathrm{basic}},d_P)$, respectively. 
%The differential geometry of $P$ as well as $P/\mathsans{G}$ is then algebraically encoded in 
%the operation $\iOOO S$. 

\vfil\eject

\section{\textcolor{blue}{\sffamily Operational total space theory of strict principal 2--bundles}}\label{sec:weiloper}

In this section, using as a model the operational total space theory of ordinary principal bundles reviewed in sect. 
\cref{sec:opbrev}, we present an operation based formulation of the total space theory of strict principal 2--bundles. 
The construction expounded below relies in an essential way on the description of the bundle's structure 
$2$--group as a Lie group crossed module and involves a synthetic recasting of the categorical 
formulation of the theory that leads one into the realm of graded differential geometry. 
The end result is a compact, elegant formulation closely related to the superfield formalism 
employed in many areas of quantum field theory. 

%In physical parlance, it 
%On the way, we revisit the
%the total space theory of such 2--bundles offering an alternative point of view.

%\vfil\eject 

\subsection{\textcolor{blue}{\sffamily Total space theory of principal 2--bundles}}\label{subsec:2prinbund}

Since the goal we set ourselves is generalizing the operation based theory of ordinary principal bundles 
to strict principal $2$--bundles, a suitable total space theory of such $2$--bundles is required. 
This has been worked out 
by Bartels \ccite{Bartels:2006hgtb} and Wockel \ccite{Wockel:2008tspb} (see also \ccite{Nikolaus:2013fvng})
systematically categorifying the standard notion of principal bundle. 
In this subsection, we preset a review of this topic  
based mainly on \ccite{Wockel:2008tspb}. Here, we claim neither originality nor completeness  
and mathematical rigour and shall restrict ourselves to provide those basic notions which 
are required to justify the constructions presented in %later subsections of 
this paper.

\begin{defi}
A strict Lie $2$--group $\hat{\matheul{K}}$ is a group object in the category $\bfs{\mathrm{DiffCat}}$ of smooth categories.
$\hat{\matheul{K}}$ consists thus of the following data.
\begin{enumerate}

\item  A Lie groupoid $\hat{\mathsans{K}}$. 

\item A smooth multiplication functor 
$\hat{\varkappa}:\hat{\mathsans{K}}\times\hat{\mathsans{K}}\rightarrow\hat{\mathsans{K}}$.

\item A smooth inversion 
functor $\hat{\iota}:\hat{\mathsans{K}}\rightarrow\hat{\mathsans{K}}$.

\item A distinguished object $\hat 1$.

\end{enumerate}
$\hat{\varkappa}$, $\hat{\iota}$ and $\hat 1$ obey further the usual axioms of group theory 
at the functor level. 
%that is a Lie groupoid $\hat{\mathsans{K}}$ endowed with a distinguished object $1$, a smooth multiplication functor 
%$\hat{\varkappa}:\hat{\mathsans{K}}\times\hat{\mathsans{K}}\rightarrow\hat{\mathsans{K}}$ and a smooth inversion 
%functor $\hat{\iota}:\hat{\mathsans{K}}\rightarrow\hat{\mathsans{K}}$ obeying the usual axioms of group theory 
%at the functor level. 
\end{defi}

\noindent
Under mild assumptions, a strict Lie $2$--group $\hat{\matheul{K}}$ 
can be described equivalently by the following set of data.
\begin{enumerate}

\item A Lie group $\hat{\mathsans{K}}$. % with multiplication $\cdot$ and inversion ${}^{-1}$.

\item A Lie subgroup $\hat{\mathsans{K}}_0$ of $\hat{\mathsans{K}}$.

%\vspace{-1.5mm}

\item A Lie groupoid structure $\xymatrix@C-=0.5cm{\hat{\mathsans{K}}\ar@<-2pt>[r]\ar@<+2pt>[r]&\hat{\mathsans{K}}_0}$
all of whose structure maps are Lie group morphisms with the identity assigning map 
$\xymatrix@C-=0.5cm{\hat{\mathsans{K}}_0\ar[r]&\hat{\mathsans{K}}}$ 
being the inclusion map $\hat{\mathsans{K}}_0\subset \hat{\mathsans{K}}$. 
%$\id:\hat{\mathsans{K}}_0\rightarrow\hat{\mathsans{K}}$ 
%restricting to the identity $\id_{\hat{\mathsans{K}}_0}:\hat{\mathsans{K}}_0\rightarrow\hat{\mathsans{K}}_0$ 
%and whose composition satisfies the relations 
%\begin{align}
%&(C\circ A)(D\circ B)=(CD)\circ (AB)
%\vphantom{\Big]}
%\label{2prinbund1}
%\\
%&(B\circ A)^{-1}=B^{-1}\circ A^{-1}
%\vphantom{\Big]}
%\label{2prinbund2}
%\end{align}
%for $A,B,C,D\in\hat{\mathsans{K}}$ whenever defined, where $\circ$ and $$ denote the groupoid and group 
%operations of $\hat{\mathsans{K}}$, respectively. 
%$\xymatrix{\hat{\mathsans{K}}\ar@<-2pt>[r]_s\ar@<+2pt>[r]^t&\hat{\mathsans{K}}_0}$
%$\xymatrix{\hat{\mathsans{K}}_0\ar[r]&\hat{\mathsans{K}}_0}$ 

\end{enumerate}
Here, we describe the groupoid $\hat{\mathsans{K}}$ underlying $\hat{\matheul{K}}$ 
through its morphism and object manifolds $\hat{\mathsans{K}}$ and $\hat{\mathsans{K}}_0$. 
With a convenient and harmless abuse of notation, we denote in the same way both the groupoid and 
its morphism manifold, since this latter obviously by itself supports the whole $2$--group structure. 
For this reason, we shall often identify $\hat{\matheul{K}}$ through the group $\hat{\mathsans{K}}$
or more explicitly the group pair $\hat{\mathsans{K}}$, $\hat{\mathsans{K}}_0$. 
Whenever necessary, we shall mention the relevant structure maps.

%$\hat{\mathsans{K}}$ and $\hat{\mathsans{K}}_0$ are the groups of morphisms and objects of the $2$--group,
%respectively. We use the same notation for the $2$--group and its morphism group because
%this latter obviously by itself supports the whole $2$--group structure.
%If confusion can arise, the appropriate interpretation will be explicitly provided. 
%Context will indicate the most appropriate interpretation.  

%A strict Lie $2$--group is defined analogously by replacing the category $\bfs{\mathrm{Cat}}$
%of categories with the category $\bfs{\mathrm{Diff}}$ of smooth categories with some further
%mild assumptions on the structure maps. 

%\footnote{$\vphantom{\bigg[}$ The multiplication and inversion of 
%$\hat{\mathsans{K}}_0\ltimes_{\pi}\ker\hat{s}$ read as 
%$(b,V)(a,U)\simeq(ba,V{\pi}(b,U))$ \linebreak and $(a,U)^{-1}\simeq(a^{-1},{\pi}(a^{-1},U^{-1}))$
%for $(a,U),(b,V)\in\hat{\mathsans{K}}_0\times\ker\hat{s}$ and the neutral element is $(1_,1_)$.
%The source, target maps $s,t$ are  

%$s(a,U)\simeq a$ and $t(a,U)\simeq(\tau(U)a,U^{-1}))$ 
%for $(a,U)\in\hat{\mathsans{K}}_0\times\ker\hat{s}$. 
%Finally the groupoid multiplication and inversion take the form
%$(b,V)\circ(a,U)\simeq(a,VU)$ and $(a,U)^{-1_\circ}\simeq(\tau(U)a,U^{-1})$
%for $(a,U),(b,V)\in\hat{\mathsans{K}}_0\times\ker\hat{s}$ with $t(a,U)=s(b,V)$
%while the identity $\id_a$ of $a\in\hat{\mathsans{K}}_0$ is $(a,1_)$.}. 

\begin{defi}
A strict principal $\hat{\matheul{K}}$--$2$--bundle $\mathcal{P}$ is a principal bundle object 
in the category $\bfs{\mathrm{DiffCat}}$ 
of smooth categories. %For a strict Lie $2$--group $\hat{\mathsans{K}}$, a 
$\mathcal{P}$ consists thus of the following data.
\begin{enumerate}

\item A Lie groupoid $\hat{{P}}$. 

\item A discrete smooth category ${M}$. 

\item A surjective submersion smooth projection functor $\hat{{\pi}}:\hat{{P}}\rightarrow{M}$. 

\item A strict Lie $2$--group $\hat{\mathsans{K}}$. %, $\hat{\mathsans{K}}_0$.

\item A smooth right $\hat{\mathsans{K}}$--action functor 
$\hat{{R}}:\hat{{P}}\times\hat{\mathsans{K}}\rightarrow\hat{{P}}$
such that 
\begin{equation}
\hat{{\pi}}\circ\hat{{R}}=\hat{{\pi}}\circ\mathrm{pr}_1
\label{2prinbund*}
\end{equation}
strictly on the nose. 

\item For each small enough open neighborhood ${U}\subset{M}$, two
reciprocally weak\-ly inverse $\hat{\mathsans{K}}$--equivariant trivializing smooth functors 
$\hat{\varPhi}_{{U}}:\hat{{\pi}}^{-1}({U})\rightarrow{U}\times\hat{\mathsans{K}}$ \linebreak and 
$\tilde{\hat{\varPhi}}_{{U}}:{U}\times\hat{\mathsans{K}}\rightarrow\hat{{\pi}}^{-1}({U})$ such that \pagebreak 
\begin{align}
&\mathrm{pr}_{\,{U}}\circ\hat{\varPhi}_{{U}}=\hat{{\pi}}\big|_{\hat{{\pi}}^{-1}({U})},
\vphantom{\Big]}
\label{2prinbund5}
\\
&\hat{{\pi}}\circ\tilde{\hat{\varPhi}}_{{U}}=\mathrm{pr}_{\,{U}}  \vspace{-1cm}%\big|_{{U}\times\hat{\mathsans{K}}}
\vphantom{\Big]}
\label{2prinbund6}
\end{align}
strictly on the nose, where ${U}\times\hat{\mathsans{K}}$ is endowed with the 
right $\hat{\mathsans{K}}$--action functor given by right $\hat{\mathsans{K}}$--multiplication on the 
factor $\hat{\mathsans{K}}$. 

\end{enumerate} 
\end{defi}

\noindent
%To emphasize its structure $2$--group, %\pagebreak 
%such a structure is called a strict principal
%$\hat{\matheul{K}}$--$2$--bundle $\hat{\mathcal{P}}$. % over ${M}$. 
The adjective 'semistrict' is used instead of 'strict' in \ccite{Wockel:2008tspb}. 

%Under mild assumptions, a strict principal $\hat{\matheul{K}}$--$2$--bundle $\hat{\mathcal{P}}$ %over ${M}$ 
%can more concretely be described through the following elements.

We can unpack the above definition and analyze a strict principal $\hat{\matheul{K}}$--$2$--bundle 
$\hat{\mathcal{P}}$ in more explicit terms under mild assumptions. Though this is a straightforward task, we decided to 
review it in some detail to make the reading of the rest of the paper easier. 

The groupoid structure of $\hat{\mathcal{P}}$ is described by the following elements.
\begin{enumerate} 

\item A smooth manifold $\hat{P}$. 

\item A submanifold $\hat{P}_0$ of $\hat{P}$.

\item A Lie groupoid structure $\xymatrix@C-=0.5cm{\hat{P}\ar@<-2pt>[r]\ar@<+2pt>[r]&\hat{P}_0}$
whose identity assigning map $\xymatrix@C-=0.5cm{\hat{P}_0\ar[r]&\hat{P}}$ 
is the inclusion map $\hat{P}_0\subset \hat{P}$. 
%$\id:\hat{\mathsans{K}}_0\rightarrow\hat{\mathsans{K}}$
%$\xymatrix{\hat{P}\ar@<-2pt>[r]_s\ar@<+2pt>[r]^t&\hat{P}_0}$%$\xymatrix{\hat{P}_0\ar[r]&\hat{P}_0}$

\end{enumerate}

\noindent 
Here, analogously to the way we did for $2$--groups, we characterize the groupoid $\hat P$ underlying $\hat{\mathcal{P}}$ 
through its morphism and object manifolds $\hat P$ and $\hat P_0$, identifying further the groupoid 
the former. The base structure of $\hat{\mathcal{P}}$ is given by an ordinary manifold, adding a further 
element.
\begin{enumerate}[resume]

\vspace{.33mm}

\item A manifold $M$. 

\vspace{.33mm}

\end{enumerate}

\noindent 
We reduce so the discrete smooth category $M$ underlying $\hat{\mathcal{P}}$, which has only identity morphisms, 
to its object manifold, which we also denote as $M$. The projection structure of $\hat{\mathcal{P}}$ is consequently given 
by an ordinary map. 
\begin{enumerate}[resume]

\vspace{.33mm}

\item A smooth surjective submersion projection map $\hat{\pi}:\hat{P}\rightarrow M$. 

\vspace{.33mm}

\end{enumerate}

\noindent 
In line with the conventions we are outlining, here we describe the projection functor $\hat\pi$ of
$\hat{\mathcal{P}}$ through its action on the morphism manifold $\hat P$ 
also denoted as $\hat\pi$. $\hat{\pi}$ restricts to a smooth surjective submersion projection map 
$\hat\pi_0:\hat P_0\rightarrow M$ giving the functor's action on the object manifold $\hat P_0$. %\pagebreak 
By functoriality, $\hat\pi$ and $\hat\pi_0$ satisfy certain relations. We mention only that 
\begin{equation}
\hat{\pi}=\hat{\pi}_0\circ\hat s=\hat{\pi}_0\circ\hat t, 
\label{2prinbund10}
\end{equation}
where $\hat s$, $\hat t$ are the source and target map of the groupoid $\hat P$. 
The structure $2$--group of $\hat{\mathcal{P}}$ is analyzed next as follows. 
\begin{enumerate}[resume]

\item A Lie group $\hat{\mathsans{K}}$. 

\item A Lie subgroup $\hat{\mathsans{K}}_0$ of $\hat{\mathsans{K}}$.

\item A Lie $2$--group structure $\xymatrix@C-=0.5cm{\hat{\mathsans{K}}\ar@<-2pt>[r]\ar@<+2pt>[r]&\hat{\mathsans{K}}_0}$

whose identity assigning map $\xymatrix@C-=0.5cm{\hat{\mathsans{K}}_0\ar[r]&\hat{\mathsans{K}}}$ 
is the inclusion map $\hat{\mathsans{K}}_0\subset \hat{\mathsans{K}}$. 

\end{enumerate}

\noindent 
We explained the reason for this earlier. The right $\hat{\mathsans{K}}$--action structure
of $\hat{\mathcal{P}}$ is given in terms of a further element.
\begin{enumerate}[resume]

\item A smooth right $\hat{\mathsans{K}}$--action map $\hat R:\hat{{P}}\times\hat{\mathsans{K}}\rightarrow\hat{{P}}$
such that 
\begin{equation}
\hat{{\pi}}\circ\hat{{R}}=\hat{{\pi}}\circ\mathrm{pr}_1
\label{2prinbund*/1}
\end{equation}

\end{enumerate}

\noindent 
(cf. eq. \ceqref{2prinbund*}). Again, we describe the $\hat{\mathsans{K}}$--action functor $\hat R$ of
$\hat{\mathcal{P}}$ through its action on the morphism manifold $\hat P$, 
which we also denote as $\hat R$. $\hat R$ restricts to a smooth right $\hat{\mathsans{K}}_0$--action map 
$\hat R_0:\hat P_0\times\hat{\mathsans{K}}_0\rightarrow\hat P_0$ 
giving the functor's action on the object manifold $\hat P_0$. 
$\hat R$ and $\hat R_0$ obey certain relations by virtue of functoriality. 
Finally, by the property of local trivializability, a full collection of local 
trivializing maps is available. 
\begin{enumerate}[resume]

\item For any small open neighborhood $U$ of $M$, two trivializing $\hat{\mathsans{K}}$--equivariant smooth maps 
$\hat{\varPhi}_U:\hat{\pi}^{-1}(U)\rightarrow U\times\hat{\mathsans{K}}$ 
and $\tilde{\hat{\varPhi}}_U:U\times\hat{\mathsans{K}}\rightarrow\hat{\pi}^{-1}(U)$ such that 
\begin{align}
&\mathrm{pr}_{\,{U}}\circ\hat{\varPhi}_{{U}}=\hat{{\pi}}\big|_{\hat{{\pi}}^{-1}({U})},
\vphantom{\Big]}
\label{2prinbund12}
\\
&\hat{{\pi}}\circ\tilde{\hat{\varPhi}}_{{U}}=\mathrm{pr}_{\,{U}}  \vspace{-1cm}%\big|_{{U}\times\hat{\mathsans{K}}}
\vphantom{\Big]}
\label{2prinbund13}
\end{align}

\end{enumerate}

\noindent
(cf. eqs. \ceqref{2prinbund5}, \ceqref{2prinbund6}). 
Analogously to before, we describe the trivializing functors $\hat{\varPhi}_U$ and $\tilde{\hat{\varPhi}}_U$ 
of $\hat{\mathcal{P}}$ on $U$ \pagebreak through their actions on the morphism manifold $\hat P$ 
denoted as $\hat{\varPhi}_U$ and $\tilde{\hat{\varPhi}}_U$ too. 
$\hat{\varPhi}_U$ and $\tilde{\hat{\varPhi}}_U$ restrict to 
trivializing $\hat{\mathsans{K}}_0$--equivariant smooth maps 
$\hat{\varPhi}_{U0}:\hat{\pi}_0{}^{-1}(U)\rightarrow U\times\hat{\mathsans{K}}_0$ 
and $\tilde{\hat{\varPhi}}_{U0}:U\times\hat{\mathsans{K}}_0\rightarrow\hat{\pi}_0{}^{-1}(U)$
giving the functors' actions on the object manifolds $\hat{\pi}_0{}^{-1}(U)$ and $U\times\hat{\mathsans{K}}_0$. 
$\hat{\varPhi}_U$ and $\tilde{\hat{\varPhi}}_U$  and $\hat{\varPhi}_{U0}$ and $\tilde{\hat{\varPhi}}_{U0}$
obey a large set of relations stemming from functoriality and reciprocal weak invertibility. 
Further, there are a 
$\hat{\mathsans{K}}$--equivariant map $\hat T_U:\hat{\pi}^{-1}(U)\rightarrow\hat{\mathsans{K}}$, where
$\hat{\mathsans{K}}$ is endowed with the right multiplication action, such that 
\begin{equation}
\hat{\varPhi}_U(X)=(\hat{\pi}(X),\hat T_U(X))
\label{2prinbund14}
\end{equation}
for $X\in\hat{\pi}^{-1}(U)$ and for each $m\in U$ 
a $\hat{\mathsans{K}}$--equivariant map $\hat{X}_{Um}:\hat{\mathsans{K}}\rightarrow\hat{\pi}^{-1}(U)$
with the property that \hphantom{xxxxxxxxxxxxxxxxx}
\begin{equation}
\tilde{\hat{\varPhi}}_U(m,A)=\hat{X}_{Um}(A)
\label{2prinbund16}
\end{equation}
for $A\in\hat{\mathsans{K}}$ obeying many further relations deriving from $\hat{\varPhi}_U$ and $\tilde{\hat{\varPhi}}_U$
being weakly inverse functors. The associated $\hat{\mathsans{K}}_0$--equivariant maps 
$\hat{T}_{U0}:\hat{\pi}_0{}^{-1}(U)\rightarrow\hat{\mathsans{K}}_0$ and 
$\hat{X}_{U0m}:\hat{\mathsans{K}}_0\rightarrow\hat{\pi}_0{}^{-1}(U)$ are again restrictions.

%In this paper, we shall often describe a strict principal $\hat{\mathsans{K}}$--$2$--bundle $\hat{{P}}$ 
%over ${M}$ as a strict principal $\hat{\mathsans{K}}$--$2$--bundle $\hat{P}$ 
%over $M$, since $\hat{P}$, $M$ and $\hat{\mathsans{K}}$ by themselves support 
%the whole bundle structure. 

The fact that the local trivializing functors $\hat{\varPhi}_{U}$ and $\tilde{\hat{\varPhi}}_{U}$ 
of a strict principal $2$--bundle $\hat{\mathcal{P}}$ are only reciprocally weakly inverse 
makes $\hat{\mathcal{P}}$ unlike an ordinary principal bundle. % in many respects. 
In general, $\hat{\varPhi}_U$ is not injective and one can use it to coordinatize
$\hat{\pi}^{-1}(U)$ only in a weaker sense than in the ordinary theory. 
Likewise, $\tilde{\hat{\varPhi}}_U$ is generally non surjective and one can use it to parametrize 
$\hat{\pi}^{-1}(U)$ only up to suitable isomorphism.
%lies in the range of $\tilde{\hat{\varPhi}}_U$ 
%and its coordinates in $U\times\hat{\mathsans{K}}$ are also unique only up to isomorphism.
%and similarly for objects. 

Because of the above properties, 
the $\hat{\mathsans{K}}$--action $\hat{R}$ on $\hat P$ is fiberwise free but fiberwise 
transitive only up to appropriate isomorphism. Furthermore, %\pagebreak 
a $\hat{\mathsans{K}}$--invariant map $f:\hat{P}\rightarrow\mathbb{R}$ cannot be identified with 
a map $\bar f:M\rightarrow\mathbb{R}$ in general. However, as locally on a neighborhood $U$ $\hat{\pi}^{-1}(U)$ is
$\hat{\mathsans{K}}$--equivariantly equivalent to $U\times\hat{\mathsans{K}}$, which 
is an ordinary principal $\hat{\mathsans{K}}$--bundle, a $\hat{\mathsans{K}}$--invariant map 
$f:\hat{\pi}^{-1}(U)\rightarrow\mathbb{R}$ can be described as a map 
$\bar f_U:U\rightarrow\mathbb{R}$. However, for fixed $U$,
$\bar f_U$ is built through and generally depends on the underlying trivializing functor
$\tilde{\hat{\varPhi}}_{U}$ and is not therefore uniquely associated to $f$. 
With the appropriate notion of horizontality, similar statements
hold also for differential forms. 

To the best of our knowledge, \pagebreak there does not exist at the moment any definition of connection on a 
strict principal $2$--bundle akin to that of ordinary principal bundle theory 
formulated in terms of a horizontal invariant distribution in the tangent bundle of the bundle. 
There exits however a definition of gauge transformation analogous to that of the  ordinary theory 
as an equivariant fiber preserving bundle automorphism.

A weak $2$--group $\matheul{W}$ is a group object in the category $\bfs{\mathrm{Cat}}$ of categories, that is 
a groupoid $\mathsans{W}$ endowed with a multiplication functor 
$\varkappa:\mathsans{W}\times\mathsans{W}\rightarrow\mathsans{W}$, an inversion 
functor $\iota:\mathsans{W}\rightarrow\mathsans{W}$ and a distinguished object $1$
obeying the usual axioms of group theory 
at the functor level up to natural isomorphisms satisfying suitable coherence conditions. 
Strict $2$--groups are those weak $2$--groups for which all these natural isomorphisms are identities. 
Replacing the category  $\bfs{\mathrm{Cat}}$ with $\bfs{\mathrm{DiffCat}}$, one defines similarly 
the notion of weak Lie $2$--group and strict Lie $2$--group, the latter of which was introduced earlier. 
%This notion is a categorically natural generalization of that of strict $2$--group: s

A given category ${C}$ is characterized by the weak $2$--group $\Aut({C})$ whose objects are the 
weakly invertible functors ${F}:{C}\rightarrow{C}$, whose morphisms are the natural 
isomorphisms ${\beta}:{F}\Rightarrow {G}$ and whose $2$--group structure is as follows. 
The group structure of $\Aut({C})_0$ is given by functor composition and weak inversion,
the group structure of $\Aut({C})_1$ is given by the so called Godement composition and inversion
of natural isomorphisms, the groupoid structure of $\Aut({C})_1$ consists in the usual composition 
and inversion of natural isomorphisms. 

There is a notion of gauge $2$--group of a principal $2$--bundle
extending the familiar notion of gauge group of the ordinary theory. 
The objects and morphisms of the gauge $2$--group 
are closely related to the $1$-- and $2$--gauge transformations in 
higher gauge theory and therefore are of considerable interest for us.
There are a few categorical equivalent ways of presenting the gauge $2$--group,
which we review next. Below, we let $\hat{\mathcal{P}}$ be a strict principal 
$\hat{\matheul{K}}$--$2$--bundle with base $M$. 

The gauge $2$--group of $\hat{\mathcal{P}}$ can be characterized as its automorphism $2$--group.  

\begin{defi}
The automorphism $2$--group of $\hat{\mathcal{P}}$, $\Aut_{\hat\pi}^{\hat{\mathsans{K}}}\!(\hat P)$,
consists in the weak sub--$2$--group of $\Aut(\hat{{P}})$ of $\hat{\mathsans{K}}$--equivariant 
weakly invertible functors ${F}:\hat{{P}}\rightarrow\hat{{P}}$ preserving the bundle 
projection $\hat{\pi}$, \hphantom{xxxxxxxxxxxxxxxxxxxx}
\begin{equation}
\hat{{\pi}}\circ {F}=\hat{{\pi}}
\label{2prinbund17}
\end{equation}
strictly, and $\hat{\mathsans{K}}$--equivariant natural isomorphisms ${\beta}:{F}\Rightarrow {G}$. 
\end{defi}

\noindent
The gauge $2$--group of $\hat{\mathcal{P}}$ can be characterized also as its
equivariant structure $2$--group valued morphism $2$--group. 

\begin{defi}
The equivariant structure $2$--group valued morphism $2$--group 
of $\hat{\mathcal{P}}$, $\Fun^{\hat{\mathsans{K}}}\!(\hat{{P}},\hat{\mathsans{K}}_{\mathrm{Ad}})$, is 
the strict $2$--group whose objects and morphisms are respectively
the $\hat{\mathsans{K}}$--equivariant functors ${F}:\hat{{P}}\rightarrow\hat{\mathsans{K}}$ 
and $\hat{\mathsans{K}}$--equivariant natural isomorphisms ${\beta}:{F}\Rightarrow {G}$ and 
whose $2$--group operations are 
those induced pointwise by those of $\hat{\mathsans{K}}$, 
where $\hat{\mathsans{K}}_{\mathrm{Ad}}$ denotes $\hat{\mathsans{K}}$
with the conjugation right action. 
\end{defi}

\noindent 
The following theorem, shown in ref. \ccite{Wockel:2008tspb}, extends a basic property 
of the gauge groups of ordinary principal bundles to principal $2$--bundles. 

\begin{theor}
The automorphism $2$--group $\Aut_{\hat{\pi}}^{\hat{\mathsans{K}}}\!(\hat{{P}})$ 
and the equivariant structure $2$--group valued morphism $2$--group
$\Fun^{\hat{\mathsans{K}}}\!(\hat{{P}},\hat{\mathsans{K}}_{\mathrm{Ad}})$
are equivalent as weak 2-groups. 
\end{theor}

\noindent
In this way, in the appropriate categorical sense, we can identify $\Aut_{\hat{\pi}}^{\hat{\mathsans{K}}}\!(\hat{{P}})$ 
and $\Fun^{\hat{\mathsans{K}}}\!(\hat{{P}},\hat{\mathsans{K}}_{\mathrm{Ad}})$. 
%and we shall do so in the following. 
This second realization of the gauge $2$--group of $\hat{\mathcal{P}}$ is more convenient
and we shall refer to it in the following. 
Under the assumptions that ${M}$ is com\-pact, $\hat{\mathsans{K}}$ is locally exponentiable
and the $\hat{\mathsans{K}}$--action is principal on $\hat P$ and $\hat P_0$,
one can prove  that 
$\Fun^{\hat{\mathsans{K}}}\!(\hat{{P}},\hat{\mathsans{K}}_{\mathrm{Ad}})$ is a strict Lie $2$--group. 
Moreover, 
$\Fun^{\hat{\mathsans{K}}}\!(\hat{{P}},\hat{\mathsans{K}}_{\mathrm{Ad}})$ enjoys a simple concrete 
description. 
\begin{enumerate}

\item The objects of $\Fun^{\hat{\mathsans{K}}}\!(\hat{{P}},\hat{\mathsans{K}}_{\mathrm{Ad}})$ 
 are $\hat{\mathsans{K}}$--equivariant 
functors $F:\hat{P}\rightarrow\hat{\mathsans{K}}_{\mathrm{Ad}}$. $\vphantom{\ul{\ul{\ul{\ul{g}}}}}$

%with the property \hphantom{xxxxxxxxxxxxxxx}
%\begin{equation}
%F(\hat{R}(X,A))=A^{-1}F(X)A
%\label{2prinbund20}
%\end{equation}
%with $X\in \hat{P}$ and $A\in\hat{\mathsans{K}}$. \raggedbottom
%\hphantom{xxxxxxxxxxxxxxx}of $\Fun^{\hat{\mathsans{K}}}\!(\hat{{P}},\hat{\mathsans{K}}_{\mathrm{Ad}})$ 

\item The morphisms of $\Fun^{\hat{\mathsans{K}}}\!(\hat{{P}},\hat{\mathsans{K}}_{\mathrm{Ad}})$ 
from an object $F$ to another $G$ are the
$\hat{\mathsans{K}}_0$--equivariant maps $\beta:\hat{P}_0\rightarrow\hat{\mathsans{K}}_{\mathrm{Ad}}$ 
such that $\beta(x):F_0(x)\rightarrow G_0(x)$ for $x\in\hat{P}_0$. 
%satisfying 
%\begin{equation}
%z_\alpha(\hat{R}(p,a))=a^{-1}z_\alpha(p)a
%\label{2prinbund21}
%\end{equation}
%with $p\in \hat{P}_0$ and $a\in\hat{\mathsans{K}}_0$. %\vspace{3mm}

\end{enumerate}

\noindent
The $\hat{\mathsans{K}}_0$--equivariant map 
$F_0:\hat{P}_0\rightarrow\hat{\mathsans{K}}_{0\mathrm{Ad}}$ associated with the object $F$ is again 
a restriction. 

A useful local description of the gauge $2$--group of $\hat{\mathcal{P}}$  is also 
available. On a trivializing neighborhood  ${U}\subset{M}$, where $\hat{\pi}^{-1}({U})$ is 
$\hat{\mathsans{K}}$--equivariantly equivalent to ${U}\times\hat{\mathsans{K}}$,
the $2$--groups $\Fun^{\hat{\mathsans{K}}}\!(\hat\pi^{-1}(U),\hat{\mathsans{K}}_{\mathrm{Ad}})$ 
and  $\Fun^{\hat{\mathsans{K}}}\!({U}\times\hat{\mathsans{K}},\hat{\mathsans{K}}_{\mathrm{Ad}})$
are equivalent as weak $2$--groups. The latter $2$--group has in turn the following description. 

\begin{defi}
For a neighborhood ${U}\subset{M}$,
we let $\Fun(\,{U},\hat{\mathsans{K}})^{\mathrm{hop}}$ be the strict $2$-- group 
whose objects are functors ${V}:{U}\rightarrow\hat{\mathsans{K}}$, whose morphisms are natural 
isomorphisms ${\sigma}:V\Rightarrow W$ and whose $2$--group operations are those induced pointwise by 
those of $\hat{\mathsans{K}}^{\mathrm{hop}}$, the $2$ group $\hat{\mathsans{K}}$ with the opposite multiplication functor. 
\end{defi}

\begin{prop}
The $2$--group $\Fun^{\hat{\mathsans{K}}}\!(\,{U}\times\hat{\mathsans{K}},\hat{\mathsans{K}}_{\mathrm{Ad}})$ is 
isomorphic as a strict $2$--group to the $2$--group $\Fun(\,{U},\hat{\mathsans{K}})^{\mathrm{hop}}$ 
\end{prop}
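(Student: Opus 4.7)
My plan is to construct an explicit strict 2-group isomorphism $\Psi : \Fun(U,\hat{\mathsans{K}})^{\mathrm{hop}} \to \Fun^{\hat{\mathsans{K}}}(U\times\hat{\mathsans{K}},\hat{\mathsans{K}}_{\mathrm{Ad}})$ with an inverse $\Phi$ given by restriction to the identity section, directly mimicking the classical bijection $\Fun^{\mathsans{G}}(U\times \mathsans{G}, \mathsans{G}_{\mathrm{Ad}}) \cong \Fun(U, \mathsans{G})$ that one uses for ordinary trivial principal bundles. On objects, $\Psi$ sends a functor $V : U \to \hat{\mathsans{K}}$ to the functor $F_V$ with $F_V(u, k) = k^{-1} V(u) k$, assembled using the multiplication and inversion functors of $\hat{\mathsans{K}}$, while $\Phi$ sends $F$ to $V_F := F(\cdot, 1)$. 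On morphisms, a natural isomorphism $\sigma : V \Rightarrow W$ is sent to the natural transformation $\beta_\sigma$ with component $(u, a) \mapsto a^{-1} \sigma(u) a$ at an object $(u, a) \in U \times \hat{\mathsans{K}}_0$, while the inverse assignment on morphisms is $\beta \mapsto \beta(\cdot, 1)$.

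First I would verify well-definedness: functoriality of $F_V$ and naturality of $\beta_\sigma$ follow from functoriality of the 2-group structure maps of $\hat{\mathsans{K}}$, and the equivariance conditions reduce to the identity $(kg)^{-1} V(u) (kg) = g^{-1} (k^{-1} V(u) k) g$, which is precisely the right conjugation action of $g$ on $F_V(u, k)$. That $\Psi$ and $\Phi$ are mutually inverse then follows from the fact that the right multiplication action of $\hat{\mathsans{K}}$ on itself is free and transitive, so any equivariant $F$ is uniquely determined by its value at the identity; the analogous statement at the 2-morphism level follows from the corresponding property of $\hat{\mathsans{K}}_0$ acting on itself.

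The main point, and where I expect the real content of the proposition to lie, is matching the pointwise product on the right-hand side with the opposite multiplication on the left. A direct calculation gives $(F_V F_W)(u, k) = k^{-1} V(u) W(u) k = F_{VW}(u, k)$, so the image multiplication corresponds at face value to the ordinary product $VW$, whereas the source uses $V \cdot_{\mathrm{hop}} W = WV$. This discrepancy is reconciled by calibrating the formula in $\Psi$, for example by composing with the inversion of $\hat{\mathsans{K}}$ and using $V(u)^{-1}$ in place of $V(u)$ inside the conjugation, so that inversion flips the order and the pointwise product in the target precisely corresponds to the opposite product in the source. Once this calibration is fixed, preservation of units, inverses, and the horizontal and vertical compositions of 2-morphisms all reduce to pointwise identities that follow directly from the 2-group structure of $\hat{\mathsans{K}}$ and $\hat{\mathsans{K}}^{\mathrm{hop}}$.
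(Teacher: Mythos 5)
Your construction is, up to the final ``calibration'', exactly the one the paper intends: the paper states this proposition without a written proof (it belongs to the review of Wockel's total space theory) and simply records the isomorphism through the formulas \ceqref{2prinbund18}, \ceqref{2prinbund19}, namely $H_V(m,A)=A^{-1}V(m)A$ on objects and $\theta_\sigma(m,a)=a^{-1}\sigma(m)a$ on morphisms, with inverse given by evaluation at the identity of $\hat{\mathsans{K}}$. Your verification of equivariance, of functoriality via the multiplication and inversion functors, and of mutual inversion via freeness and transitivity of right translation is the intended argument.

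The one place where you genuinely depart from the paper is the handling of the opposite multiplication. Your computation $(F_VF_W)(u,k)=F_{VW}(u,k)$ is correct under the paper's stated conventions (pointwise operations induced by $\hat{\mathsans{K}}$ on the target and by $\hat{\mathsans{K}}^{\mathrm{hop}}$ on the source), and the tension you point out is real: the paper does not address it, and its own formulas \ceqref{2prinbund18}, \ceqref{2prinbund19} intertwine the \emph{ordinary} pointwise products. Your proposed repair --- precomposing with inversion so as to conjugate by $V(u)^{-1}$ --- does prove the proposition as stated, since the group inversion functor is itself a strict $2$--group isomorphism $\hat{\mathsans{K}}^{\mathrm{hop}}\rightarrow\hat{\mathsans{K}}$ and composing isomorphisms gives an isomorphism; and the morphism level, which you leave unchecked after calibration, does go through because inversion is a functor and hence preserves the direction of $2$--cells. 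But be aware that this produces an isomorphism \emph{different} from the one the paper writes down and uses afterwards (e.g. for the local description of gauge transformations), so if you want to match the text you should keep $H_V(m,A)=A^{-1}V(m)A$ and regard the ``hop'' as a bookkeeping convention inherited from the identification of $\Fun^{\hat{\mathsans{K}}}\!(\hat P,\hat{\mathsans{K}}_{\mathrm{Ad}})$ with the automorphism $2$--group, rather than something to be absorbed into this particular map.
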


\noindent
The $2$--group $\Fun(\,{U},\hat{\mathsans{K}})^{\mathrm{hop}}$  
has a very simple structure strongly reminiscent that of the gauge group of 
ordinary principal bundles.
\begin{enumerate}

\item An object ${V}$ of $\Fun(\,{U},\hat{\mathsans{K}})$
is specified by a map $V:U\rightarrow\hat{\mathsans{K}}_0$. 

\item A morphism ${\sigma}:V\Rightarrow W$ of $\Fun(\,{U},\hat{\mathsans{K}})$ is similarly specified by a map
$\sigma:M\rightarrow\hat{\mathsans{K}}$ such that $\sigma(m):V(m)\rightarrow W(m)$ for $m\in U$. 

\end{enumerate}
\noindent
Under the strict $2$--group isomorphism $\Fun({U},\hat{\mathsans{K}})^{\mathrm{hop}}\simeq
\Fun^{\hat{\mathsans{K}}}\!(\,{U}\times\hat{\mathsans{K}},\hat{\mathsans{K}}_{\mathrm{Ad}})$, the object ${H}_{{V}}$ of 
$\Fun^{\hat{\mathsans{K}}}\!({U}\times\hat{\mathsans{K}},\hat{\mathsans{K}}_{\mathrm{Ad}})$ corresponding to an object
${V}$ of $\Fun({U},\hat{\mathsans{K}})$ is given by the expression \hphantom{xxxxxxxxxxxxxxxx}
\begin{equation}
H_V(m,A)=A^{-1}V(m)A
\label{2prinbund18}
\end{equation}
with $m\in U$ and $A\in\hat{\mathsans{K}}$. The morphism ${\theta}_{{\sigma}}:{H}_{{V}}\rightarrow {H}_{{W}}$ 
of $\Fun^{\hat{\mathsans{K}}}\!(\,{U}\times\hat{\mathsans{K}},\hat{\mathsans{K}}_{\mathrm{Ad}})$ 
%$\Aut^{\hat{\mathsans{K}}}_{\hat{\pi}}\!(\,{U}\times\hat{\mathsans{K}})$ 
corresponding to a morphism 
${\sigma}:{V}\Rightarrow {W}$ of $\Fun(\,{U},\hat{\mathsans{K}})$ is similarly specified by 
\begin{equation}
\theta_\sigma(m,a)=a^{-1}\sigma(m)a
\label{2prinbund19}
\end{equation}
with $m\in U$ and $a\in\hat{\mathsans{K}}_0$. 

In the next subsection, we shall introduce the synthetic description of principal $2$--bundles,
which brings the categorical formulation expounded above closer to higher gauge theory.

%\vfil\eject  

\subsection{\textcolor{blue}{\sffamily Synthetic formulation of  principal 2--bundle theory}}\label{subsec:synth}

The total space theory of principal $2$--bundles expounded in subsect. \cref{subsec:2prinbund} is 
elegant and geometrically intuitive, but it has certain shortcomings. In such framework, no 
viable definition of $2$--connection and $1$--gauge transformation is available that is 
satisfactorily close to the corresponding notions of higher gauge theory, which is our main concern.
(See however refs. \ccite{Waldorf:2016tsct,Waldorf:2017ptpb} for an interesting attempt
in this direction.) As we shall show in II, 
these problems can be solved by switching from the categorical framework
to a distinct but related description, which we shall call synthetic, for lack of a better term 
and with an abuse of language, because the notions it is based 
on are akin in spirit to those of synthetic differential geometry. 
The synthetic approach turns out to be closely related to the standard formulation of strict higher gauge theory,
as we require.

It is difficult to justify the change of perspective we are espousing in simple intuitive terms. 
The synthetic formulation is essentially validated {\it a posteriori} by the viability of the results
it leads to. 

A fundamental property of strict $2$--groups is their equivalence to group crossed modules
described in greater detail in subsect. \cref{subsec:liecm}. 

\begin{theor} \label{theor:2lgr2cm}
For a strict Lie $2$--group $\hat{\matheul{K}}$
with source and target maps $\hat{s}$, $\hat{t}$, there exists a Lie group isomorphism \hphantom{xxxxxxxxxxx}
\begin{equation}
\hat{\mathsans{K}}\simeq\ker\hat{s}\rtimes_{\hat{\lambda}}\hat{\mathsans{K}}_0, %\vphantom{\bigg]}
\label{2prinbund3}
\end{equation}
where %$s$ is the source map of $\hat{\mathsans{K}}$ and 
the semidirect product group structure is defined with respect to the left %$\hat{\mathsans{K}}_0$--
action $\hat{\lambda}:\hat{\mathsans{K}}_0\times\ker\hat{s}\rightarrow\ker\hat{s}$ given by
\begin{equation}
\hat{\lambda}(a,Z)=aZa^{-1} %\vphantom{\bigg]}
\label{2prinbund4}
\end{equation}
with $a\in\hat{\mathsans{K}}_0$, $Z\in\ker\hat{s}$. 
Furthermore, under the isomorphism \ceqref{2prinbund3},
the Lie groupoid structure of $\hat{\mathsans{K}}$ can be expressed completely
in terms of the group structures of $\hat{\mathsans{K}}_0$ and $\ker\hat{s}$, the action map $\hat{\lambda}$ 
and the target map \hphantom{xxxxxxxxxxxxxx} 
%$\hat{\epsilon}=\hat{t}\big|_{\ker\hat{s}}:\ker\hat{s}\rightarrow\hat{\mathsans{K}}_0$. 
\begin{equation}
\hat{\epsilon}=\hat{t}\big|_{\ker\hat{s}}.
\label{2prinbund4/1}
\end{equation}
\end{theor}

Viewing the structure $2$--group $\hat{\matheul{K}}$ of the relevant principal $2$--bundle $\hat{\mathcal{P}}$
as a Lie group crossed module $(\hat{\mathsans{K}}_0,\ker\hat{s})$ %through the isomorphism \ceqref{2prinbund3} 
leads to a geometric framework of the higher theory closer in form and in spirit to that of the ordinary 
one. \pagebreak It is also a necessary starting point for the synthetic formulation, 
which essentially amounts to trading $\ker\hat{s}$ with $\mathrm{Lie\,ker}\hat{s}[1]$.

\begin{defi} \label{defi:synthk}
The synthetic form $\mathsans{K}$ of the morphism group $\hat{\mathsans{K}}$ is the graded Lie group 
of the internal functions $Q\in\iMap(\mathbb{R}[-1],\hat{\mathsans{K}})$ of the form 
\begin{equation}
Q(\bar\alpha)=\ee^{\bar\alpha L}a, \qquad \bar\alpha\in\mathbb{R}[-1]
\label{ex2lgr2cm1}
\end{equation}
with $a\in\hat{\mathsans{K}}_0$, $L\in\mathrm{Lie\,ker}\hat{s}[1]$.
The synthetic form $\mathsans{K}{}_0$ of the object group $\hat{\mathsans{K}}_0$ is the graded Lie subgroup 
of $\mathsans{K}$ constituted by the elements %internal functions %$q\in\iMap(\mathbb{R}[-1],\hat{\mathsans{K}})$
of the special form $q(\bar\alpha)=a$. 
%\begin{equation}
%q(\bar\alpha)=a, \qquad \bar\alpha\in\mathbb{R}[-1]
%\label{ex2lgr2cm1/1}
%\end{equation}
%with $a\in\hat{\mathsans{K}}_0$. 
\end{defi}

\noindent
%We note that $\mathsans{K}\simeq\mathrm{Lie\,ker}\hat{s}[1]\rtimes_{\hat{\lambda}{}\dot{}}\hat{\mathsans{K}}_0$,
%where $\hat{\lambda}{}\dot{}$ is the Lie differential of the action map $\hat{\lambda}$ of eq. \ceqref{2prinbund4}
%with respect to the second argument (see subsect. \cref{subsec:dercm} below). 
%We note that, in view of the isomorphism \ceqref{2prinbund3}, $\mathsans{K}$ 
%can really be considered as a synthetic form of $\hat{\mathsans{K}}$. 
The group operations are pointwise multiplication and inversion. 
$\mathsans{K}_0$ can be identified canonically with $\hat{\mathsans{K}}_0$. 
%Since $\hat{\mathsans{K}}_0$ is a Lie subgroup of $\hat{\mathsans{K}}$, 
%$\mathsans{K}_0$ is a Lie subgroup of $\mathsans{K}$ formed by the elements $Q$ of the form \ceqref{ex2lgr2cm1}
%with $L=0$. .

The synthetic groups $\mathsans{K}$, $\mathsans{K}_0$ stem from 
the morphism and object groups $\hat{\mathsans{K}}$,
$\hat{\mathsans{K}}_0$ of the strict Lie $2$--group $\hat{\matheul{K}}$. % of $\hat{\mathcal{P}}$. 
Furthermore, $\mathsans{K}_0$ is a subgroup of $\mathsans{K}$. 
However, $\mathsans{K}$, $\mathsans{K}_0$ are not the morphism and object groups of 
any synthetic strict Lie $2$--group $\matheul{K}$, as not all the groupoid structure maps of $\hat{\mathsans{K}}$
can be directly extended to $\mathsans{K}$. A graded Lie group 
morphism $s:\mathsans{K}\rightarrow\mathsans{K}_0$ extending the source morphism $\hat s$ of $\hat{\mathsans{K}}$
does in fact exist. %$2$--bundle's 

\begin{defi}
The synthetic form of $\hat s$ is the element $s\in\Hom(\mathsans{K},\mathsans{K}_0)$ given by 
\begin{equation}
s(Q)=\hat s\circ Q %(\bar\alpha)) 
\label{ex2lgr2cm5}
\end{equation}
with $Q\in\mathsans{K}$.
\end{defi}

\noindent 
$s$ is well--defined, since,  writing $Q$ as in \ceqref{ex2lgr2cm1}, $s$ is given by 
$s(Q)(\bar\alpha)=\hat s(Q(\bar\alpha))=a$ and so has a range lying in $\mathsans{K}_0$ as required. 
A partner morphism $t\in\Hom(\mathsans{K},\mathsans{K}_0)$ extending the target morphism 
$\hat t$ of $\hat{\mathsans{K}}$ instead does not. The would--be  target map $t$ would read 
as $t(Q)(\bar\alpha)=\hat t(Q(\bar\alpha))=
\ee^{\bar\alpha \dot{\hat t}(L)}a$, where $\dot{\hat t}$ is the Lie differential of $\hat t$, 
and thus have a range generally lying outside $\mathsans{K}_0$. Consequently, 
composition also cannot be defined in $\mathsans{K}$ and $\mathsans{K}$ has no groupoid structure. 
%=\hat s(\ee^{\bar\alpha L}a) %=\hat t(\ee^{\bar\alpha L}a)

%Next, we consider a principal $\hat{\matheul{K}}$--$2$--bundle $\hat{\mathcal{P}}$ with base $M$. 

\begin{defi} \label{defi:synthp}
The synthetic form $P$ of the morphism manifold $\hat P$ is the graded manifold % constituted by 
of the internal functions $V\in\iMap(\mathbb{R}[-1],\hat P)$ of the form 
\begin{equation}
V(\bar\alpha)=\hat R(x,Q(\bar\alpha)), \qquad \bar\alpha\in\mathbb{R}[-1]
\label{ex2lgr2cm2}
\end{equation}
with $x\in\hat P_0\subset \hat P$, $Q\in\mathsans{K}$. 
The synthetic form $P_0$ of the object manifold $\hat P_0$ is the graded submanifold of $P$ constituted by 
of the points %internal functions $v\in\iMap(\mathbb{R}[-1],\hat P_0)$ 
of the special form $v(\bar\alpha)=x$. %, $\bar\alpha\in\mathbb{R}[-1]$, with $x\in\hat P_0$.
% of the form 
%\begin{equation}
%v(\bar\alpha)=x %\hat R_0(x,q(\bar\alpha)), 
%\qquad \bar\alpha\in\mathbb{R}[-1]
%\label{ex2lgr2cm2/1}
%\end{equation}
%with $x\in\hat P_0$. %, $q\in\mathsans{K}_0$. 
\end{defi}

\noindent 
Since $\iMap(\mathbb{R}[-1],\hat P)\simeq T[1]\hat P$, the shifted tangent bundle of $\hat P$,  
$P$ is a vector subbundle of $T[1]\hat P$. Indeed, by the way it is defined and in analogy to the ordinary notion, 
$P$ can be described as $V[1]\hat P|_{\hat P_0}$, the shifted vertical subbundle of $\hat P$ restricted to $\hat P_0$.
$P_0$ can be identified canonically with $\hat P_0$, 
which in turn can be regarded as the zero section of $V[1]\hat P|_{\hat P_0}$. 
%When regarding $P$ as $V[1]\hat P|_{\hat P_0}$, $P_0$ can be treated 
%$P_0$ is a submanifold of $P$   
%formed by the elements $V$ of the form \ceqref{ex2lgr2cm2} with $Q=1$. 
%contained in the zero section of $V[1]\hat P$. % introduced above. 

The synthetic manifolds $P$, $P_0$ answer to the morphism and object manifolds 
$\hat P$, $\hat P_0$ of the $2$--bundle $\hat{\mathcal{P}}$. $P_0$ is further a submanifold of $P$. % as noted. 
However, $P$, $P_0$ are not the morphism and object manifolds of any synthetic principal $\matheul{K}$--$2$--bundle
$\mathcal{P}$. To begin with, the synthetic groups $\mathsans{K}$, $\mathsans{K}_0$ cannot constitute the $2$--bundle's 
structure $2$--group $\matheul{K}$ for reasons explained earlier. Moreover, not all the groupoid structure maps of $\hat P$
can be consistently extended to $P$. A map $s:P\rightarrow P_0$ extending the source map $\hat s$ of $\hat P$ 
is in fact available. 

\begin{defi}
The synthetic form of $\hat s$ is the map $s\in\Map(P,P_0)$ given by 
\begin{equation}
s(V)=\hat s\circ V %(V(\bar\alpha))%=\hat R_0(\hat s(X),a)
\label{ex2lgr2cm6}
\end{equation}
with $V\in P$. 
\end{defi}

\noindent
$s$ is well--defined, as, for $V\in P$ of the form \ceqref{ex2lgr2cm2} with $Q\in \mathsans{K}$ %in turn 
as in \ceqref{ex2lgr2cm1}, $s$ reads as $s(V)(\bar\alpha)=\hat s(V(\bar\alpha))
=\hat s(\hat R(x,Q(\bar\alpha)))=\hat R_0(\hat s(x),\hat s(Q(\bar\alpha)))
=\hat R_0(x,a)$ and so  has a range lying in $P_0$ as wished. 
A mate target map $t\in\Map(P,P_0)$ extending the target map $\hat t$ of $\hat P$ instead 
is not. The supposedly target map $t$ would be given by 
$t(V)(\bar\alpha)=\hat t(V(\bar\alpha))=\hat t(\hat R(x,Q(\bar\alpha)))=\hat R_0(\hat t(x),\hat t(Q(\bar\alpha)))$
$=\hat R_0(x,\ee^{\bar\alpha \dot{\hat t}(L)}a)$ 
and thus have a range lying outside $P_0$.  As a consequence, 
composition also cannot be defined in $P$ and $P$ has no groupoid structure. 
In spite of this findings, many of the properties of $\hat P$ and $\hat P_0$ 
as part of the $2$--bundle $\hat{\mathcal{P}}$ do extend in an appropriate form to $P$ and $P_0$. 
We study these in some detail next. 

Projection maps $\pi$ and $\pi_0$ of $P$ induced by $\hat\pi$ and $\hat\pi_0$  respectively exist
and enjoy the expected properties. 

\begin{defi} \label{defi:pisynth}
The synthetic form of $\hat\pi$ is the map $\pi\in\Map(P,M)$ given by 
\begin{align}
\pi(V)=\hat\pi\circ V,  %(\bar\alpha))
\vphantom{\Big]}
\label{ex2lgr2cm3}
%\\
%\pi_0(v)=\hat\pi_0\circ v  %(\bar\alpha)) 
%vphantom{\Big]}
%\label{ex2lgr2cm3/1}
\end{align}
with $V\in P$. 
The synthetic form  of $\hat\pi_0$ is the map $\pi_0\in\Map(P_0,M)$ resulting from restricting 
$\pi$ to $P_0$. 
%The synthetic form $\pi_0:P_0\rightarrow M$ of the projection map $\hat\pi_0$ is $\pi_0=\hat\pi_0$. 
\end{defi}

\noindent
$\pi$ extends $\hat\pi$ because, 
writing $V\in P$ in the form \ceqref{ex2lgr2cm2} above, % with $Q\in \mathsans{K}$ as in \ceqref{ex2lgr2cm1}, 
$\pi$ is given by $\pi(V)(\bar\alpha)=\hat \pi(V(\bar\alpha))=\hat\pi(\hat R(x,Q(\bar\alpha)))=\hat\pi_0(x)$
and thus has a range lying in $M$. 

\begin{prop}\label{prop:pisynth}
$\pi$ and $\pi_0$ are surjective submersions. Further, $\pi=\pi_0\circ s$.
\end{prop}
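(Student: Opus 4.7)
The plan is to reduce all three claims to the already known surjective submersion property of the classical projection $\hat\pi_0: \hat P_0 \to M$ (which is $\hat\pi$ restricted to $\hat P_0$, cf.\ \ceqref{2prinbund10}). Under the canonical identification $P_0 \simeq \hat P_0$ supplied by Definition~\cref{defi:synthp}, the map $\pi_0$ is literally $\hat\pi_0$, so its surjectivity and submersion property are immediate; the real work concerns $\pi$.

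I would first verify the factorization $\pi = \pi_0 \circ s$ by direct computation. Writing $V \in P$ in the form \ceqref{ex2lgr2cm2} with $Q(\bar\alpha) = \ee^{\bar\alpha L} a$, the $\hat{\mathsans{K}}$--equivariance \ceqref{2prinbund*/1} gives $\pi(V)(\bar\alpha) = \hat\pi(\hat R(x, \ee^{\bar\alpha L}a)) = \hat\pi(x) = \hat\pi_0(x)$, which is independent of $\bar\alpha$ and thus is well-defined as an element of $M$. The computation of $s$ performed just after Definition~\cref{defi:synthp}, using $\hat s(\ee^{\bar\alpha L}) = 1$ (since $L$ takes values in $\ker\hat s$ at the Lie algebra level) and the group morphism property of $\hat s$, yields $s(V)(\bar\alpha) = \hat R_0(x,a)$, a constant function with value in $\hat P_0$, so $s(V) \in P_0$. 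Applying $\pi_0$ and invoking the $\hat R_0$--invariance analogue of \ceqref{2prinbund*/1} recovers $\hat\pi_0(x)$, establishing $\pi_0 \circ s = \pi$. Surjectivity of $\pi$ and $\pi_0$ then follows at once: for $m \in M$, pick $x \in \hat P_0$ with $\hat\pi_0(x) = m$ and take the constant map $v(\bar\alpha) = x$ in $P_0 \subset P$.

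For the submersion property of $\pi$, I would exploit the description $P \simeq V[1]\hat P|_{\hat P_0}$ given after Definition~\cref{defi:synthp}. The map $s$ coincides with the body (degree-zero) projection of this shifted vertical bundle: indeed, for $V$ as above one has $s(V) = \hat R_0(x, a) \in \hat P_0 \simeq P_0$, which is precisely the basepoint of the vertical tangent vector $V$. The factorization $\pi = \hat\pi_0 \circ s$ thus expresses $\pi$ as the composition of a graded vector bundle projection, which is a submersion of graded manifolds by construction, with the ordinary submersion $\hat\pi_0$, whence $\pi$ is a submersion.

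The main obstacle is the last step: one must check carefully that in the graded-manifold category the composition of a shifted vector bundle projection with a classical submersion is indeed a submersion, and that the body projection of $V[1]\hat P|_{\hat P_0}$ genuinely coincides with $s$. Once these identifications are in place, everything reduces to classical statements about $\hat\pi_0$.
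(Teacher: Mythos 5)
Your proposal is correct and follows essentially the same route as the paper: the paper likewise factors $\pi$ through the projection $\rho:P\simeq V[1]\hat P|_{\hat P_0}\rightarrow\hat P_0$ of the shifted vertical bundle (which is exactly your identification of $s$ with the basepoint projection, composed with $P_0\simeq\hat P_0$) and deduces the surjective submersion property from that of $\hat\pi_0$, while obtaining $\pi=\pi_0\circ s$ from the classical identity $\hat\pi=\hat\pi_0\circ\hat s$ together with the definitions of the synthetic maps. Your explicit pointwise verification in $\bar\alpha$ is just a spelled-out version of the paper's composition argument.
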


\noindent
Compare the above relation with  \ceqref{2prinbund10}. 

\begin{proof}
Letting $\rho:P\rightarrow\hat P_0$ be the projection map of $P\simeq V[1]\hat P|_{\hat P_0}$,  
%and identifying $P$ with $V[1]\hat P|_{P_0}$,
one has $\pi=\hat\pi\circ\rho$. As  $\hat\pi$, $\rho$ are both surjective submersions, so is $\pi$. 
Likewise, letting $\rho_0:P_0\rightarrow\hat P_0$ be the canonical identification of $P_0$ and $\hat P_0$,
one has $\pi_0=\hat\pi_0\circ\rho_0$. Since $\hat\pi_0$, $\rho_0$ are both surjective submersions, so is $\pi_0$. 
The relation $\pi_0\circ s=\pi$ follows from the identity $\hat\pi_0\,\circ\,\hat s=\hat\pi$, \ceqref{ex2lgr2cm5}
and  \ceqref{ex2lgr2cm6}.  %\har\rho
\end{proof}

The right actions $\hat R$ and $\hat R_0$ of $\hat{\mathsans{K}}$ and $\hat{\mathsans{K}}_0$ on $\hat P$
and $\hat P_0$ similarly give rise to right actions $R$ and $R_0$ of $\mathsans{K}$ and $\mathsans{K}_0$
on $P$ and $P_0$ with the expected properties.

\begin{defi} \label{defi:synthract}
The synthetic form of $\hat R$ is the map $R\in\Map(P\times\mathsans{K},P)$ given by 
\begin{align}
&R(V,A)=\hat R\circ(V\times A)
\vphantom{\Big]}
\label{ex2lgr2cm4}
%\\
%&R_0(v,a)=\hat R_0\circ(v\times a)
%\vphantom{\Big]}
%\label{ex2lgr2cm4/1}
\end{align}
for $V\in P$, $A\in\mathsans{K}$. 
The synthetic form of $\hat R_0$ is the map $R_0\in\Map(P_0\times\mathsans{K}_0,P_0)$  
yielded by restriction of $R$ to $P_0\times\mathsans{K}_0$. 
%and $v\in P_0$, $a\in\mathsans{K}_0$.
%  of the form \ceqref{ex2lgr2cm1}, \ceqref{ex2lgr2cm2}. 
%The right action $R_0:P_0\times\mathsans{K}_0\rightarrow P_0$ is $R_0=\hat R_0$. 
\end{defi}

\noindent
$R$ extends $\hat R$ because, writing $V\in P$ in the form \ceqref{ex2lgr2cm2} above, 
$R$ is given by $R(V,A)(\bar\alpha)=\hat R(\hat R(x,Q(\bar\alpha)),A(\bar\alpha))$ $=\hat R(x,Q(\bar\alpha)A(\bar\alpha))$
and thus has a range lying in $P$.  This calculation also shows that $R_0$, as the restriction of 
$R$ to $P_0\times\mathsans{K}_0$, has a range lying in $P_0$. 

\begin{prop} \label{prop:synthract}
$R$ and $R_0$ are right actions of $\mathsans{K}$ on $P$ and  $\mathsans{K}_0$ on $P_0$, respectively.  
Further, $\pi\circ R=\pi\circ\mathrm{pr}_1$ and $\pi_0\circ R_0=\pi_0\circ\mathrm{pr}_1$.
\end{prop}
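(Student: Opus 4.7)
My plan is to reduce every claim to the pointwise level, where it follows from the already established properties of the non--synthetic action $\hat R$ and projection $\hat\pi$ of the underlying principal $2$--bundle $\hat{\mathcal{P}}$. The formula displayed just after Definition \cref{defi:synthract}, namely $R(V,A)(\bar\alpha)=\hat R(x,Q(\bar\alpha)A(\bar\alpha))$ for $V(\bar\alpha)=\hat R(x,Q(\bar\alpha))$, is the workhorse: since $\hat R$ is a right action of $\hat{\mathsans{K}}$ on $\hat P$, the required axioms for $R$ will follow by evaluating at an arbitrary $\bar\alpha\in\mathbb{R}[-1]$.

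Concretely, I would first verify the unit law. The identity of $\mathsans{K}$ is the constant function $1_{\mathsans{K}}(\bar\alpha)=1_{\hat{\mathsans{K}}}$; substituting into the displayed formula gives $R(V,1_{\mathsans{K}})(\bar\alpha)=\hat R(x,Q(\bar\alpha)\,1_{\hat{\mathsans{K}}})=\hat R(x,Q(\bar\alpha))=V(\bar\alpha)$. Next I would check associativity: for $A,B\in\mathsans{K}$, both $R(R(V,A),B)(\bar\alpha)$ and $R(V,AB)(\bar\alpha)$ reduce to $\hat R(x,Q(\bar\alpha)A(\bar\alpha)B(\bar\alpha))$ upon using associativity of $\hat R$ and the pointwise multiplication in $\mathsans{K}$. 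The action property of $R_0$ is then obtained by restriction, since $R_0$ is the restriction of $R$ to $P_0\times\mathsans{K}_0$ and the subgroup $\mathsans{K}_0$ contains the identity and is closed under multiplication.

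For the projection identities, I would use the defining formula \ceqref{ex2lgr2cm3} together with \ceqref{2prinbund*/1}: at each $\bar\alpha$,
\begin{equation*}
\pi(R(V,A))(\bar\alpha)=\hat\pi(R(V,A)(\bar\alpha))=\hat\pi(\hat R(x,Q(\bar\alpha)A(\bar\alpha)))=\hat\pi(x)=\pi(V)(\bar\alpha),
\end{equation*}
which yields $\pi\circ R=\pi\circ\mathrm{pr}_1$. The same computation restricted to $P_0\times\mathsans{K}_0$ gives $\pi_0\circ R_0=\pi_0\circ\mathrm{pr}_1$, after identifying $\pi_0$ with the restriction of $\pi$ (Definition \cref{defi:pisynth}).

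I do not anticipate a genuine obstacle here, since the verifications are essentially functorial transports of the non--synthetic identities through the evaluation maps at $\bar\alpha$. The only point requiring mild care is the well--definedness of the composites: one must check at each step that the pointwise expression stays within the synthetic subspace, i.e.\ that $R(V,A)$ remains of the form \ceqref{ex2lgr2cm2} and $R_0(v,a)$ stays in $P_0$. Both facts are already contained in the discussion immediately preceding the proposition, so once these are invoked, the argument is a short unwinding of definitions.
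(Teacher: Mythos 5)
Your proof is correct and follows essentially the same route as the paper, which simply observes that these properties "follow trivially from the corresponding properties of $\hat R$ and $\hat R_0$ and $\hat\pi$ and $\hat\pi_0$." Your pointwise unwinding via $R(V,A)(\bar\alpha)=\hat R(x,Q(\bar\alpha)A(\bar\alpha))$ is exactly the reduction the paper has in mind, just spelled out explicitly.
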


\noindent
The above relations answer to \ceqref{2prinbund*/1}.  
 
\begin{proof} These properties follow trivially from the corresponding properties of $\hat R$ and $\hat R_0$ 
and $\hat\pi$ and $\hat\pi_0$. 
\end{proof}

%In spite of many similarities, the synthetic manifolds $P$, $P_0$ are not the morphism and object manifolds 
%any synthetic strict principal $2$--bundle, because the $2$--bundle structure maps of $\hat{P}$, \pagebreak 
%$\hat{P}_0$ cannot be extended to $P$, $P_0$ in any meaningful way.
%The would--be target map of $P$ would read as $t(V)(\bar\alpha)=\hat t(V(\bar\alpha))$ 
%for $V\in P$ and thus have a range outside $P_0$ and similarly for the source map. 
%As a consequence, composition cannot be defined in $P$ and $P$ has no groupoid structure. 

For any trivializing neighborhood $U\subset M$ and trivializing functor $\hat\varPhi_U$, there exist trivializing maps 
$\varPhi_U$ and $\varPhi_{U0}$ of $P$ and $P_0$ induced by $\hat\varPhi_U$ and $\hat\varPhi_{U0}$. 

\begin{defi}\label{defi:phiusynth}
The synthetic form of $\hat\varPhi_U$ is the map $\varPhi_U\in\Map(\pi^{-1}(U),U\times\mathsans{K})$ 
given for $V\in\pi^{-1}(U)\subset P$ by 
\begin{align}
\varPhi_U(V)=\hat\varPhi_U\circ V.  %(\bar\alpha))
\vphantom{\Big]}
\label{ex2lgr2cm7}
%\\
%\varPhi_{U0}(v)=\hat\varPhi_{U0}\circ v  %(\bar\alpha)) 
%\vphantom{\Big]}
%\label{ex2lgr2cm7/1}
\end{align}
The synthetic form of $\hat\varPhi_{U0}$ 
is the map $\varPhi_{U0}\in\Map(\pi_0{}^{-1}(U),U\times\mathsans{K}_0)$ 
yielded by restricting $\varPhi_U$ to $\pi_0{}^{-1}(U)$. 
\end{defi}
%The synthetic form $\pi_0:P_0\rightarrow M$ of the projection map $\hat\pi_0$ is $\pi_0=\hat\pi_0$. 

\noindent
$\varPhi_U$ and $\varPhi_{U0}$ having the ranges indicated follows from the following proposition. 

\begin{prop}\label{prop:phiusynth}
There is a $\mathsans{K}$--equivariant map $T_U\in\Map(\pi^{-1}(U),\mathsans{K})$ such that 
\begin{equation}
\varPhi_U(V)=(\pi(V),T_U(V)), \qquad T_U(V)=\hat T_U\circ V,
\label{ex2lgr2cm8}
\end{equation}
where the map $\hat T_U$ is defined through relation \ceqref{2prinbund14}. Further, $T_U$
restricts to a $\mathsans{K}_0$--equivariant map $T_{U0}\in\Map(\pi_0{}^{-1}(U),\mathsans{K}_0)$. 
\end{prop}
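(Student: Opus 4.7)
The plan is to take $T_U(V) := \hat T_U \circ V$ as dictated by the second equality in the statement and then verify each asserted property by pointwise reduction to the corresponding property of $\hat T_U$ on $\hat P$, systematically unpacking the synthetic descriptions from Definitions~\cref{defi:synthk}--\cref{defi:synthp}, \cref{defi:synthract} and \cref{defi:phiusynth}.

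First I would check that $T_U(V)$ actually lies in $\mathsans{K}$, which is a nontrivial closure condition. Writing a generic $V \in \pi^{-1}(U)$ as $V(\bar\alpha) = \hat R(x, \ee^{\bar\alpha L} a)$ with $x \in \hat\pi_0{}^{-1}(U)$, $L \in \mathrm{Lie\,ker}\hat s[1]$ and $a \in \hat{\mathsans{K}}_0$ per Definitions~\cref{defi:synthk}--\cref{defi:synthp}, the $\hat{\mathsans{K}}$-equivariance of $\hat T_U$ yields
\begin{equation*}
T_U(V)(\bar\alpha) \;=\; \hat T_U\bigl(\hat R(x, \ee^{\bar\alpha L} a)\bigr) \;=\; \hat T_U(x)\, \ee^{\bar\alpha L}\, a.
\end{equation*}
Since $\hat T_U$ restricts to $\hat T_{U0}: \hat\pi_0{}^{-1}(U) \to \hat{\mathsans{K}}_0$, one has $\hat T_U(x) \in \hat{\mathsans{K}}_0$. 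Using the normality of $\ker \hat s$ in $\hat{\mathsans{K}}$ (a consequence of $\hat s$ being a Lie group morphism), the previous expression rewrites as $\ee^{\bar\alpha \Ad_{\hat T_U(x)} L}\, \hat T_U(x)\, a$ with $\Ad_{\hat T_U(x)} L \in \mathrm{Lie\,ker}\hat s[1]$ and $\hat T_U(x)\, a \in \hat{\mathsans{K}}_0$, matching the normal form \ceqref{ex2lgr2cm1} required for an element of $\mathsans{K}$.

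Next I would verify $\varPhi_U(V) = (\pi(V), T_U(V))$ by applying eq.~\ceqref{2prinbund14} pointwise: Definition~\cref{defi:phiusynth} reads $\varPhi_U(V)(\bar\alpha) = \hat\varPhi_U(V(\bar\alpha)) = (\hat\pi(V(\bar\alpha)), \hat T_U(V(\bar\alpha)))$, and Definition~\cref{defi:pisynth} together with the defining formula for $T_U$ identifies the right-hand side with $(\pi(V)(\bar\alpha), T_U(V)(\bar\alpha))$. The $\mathsans{K}$-equivariance of $T_U$ follows by a similar pointwise computation using Definition~\cref{defi:synthract} and the $\hat{\mathsans{K}}$-equivariance of $\hat T_U$: for $V \in \pi^{-1}(U)$ and $A \in \mathsans{K}$,
\begin{equation*}
T_U(R(V,A))(\bar\alpha) \;=\; \hat T_U\bigl(\hat R(V(\bar\alpha), A(\bar\alpha))\bigr) \;=\; \hat T_U(V(\bar\alpha))\, A(\bar\alpha),
\end{equation*}
which is $T_U(V)(\bar\alpha)\, A(\bar\alpha)$ as required.

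The restriction claim is the easiest part: for $V \in \pi_0{}^{-1}(U) \subset P_0$ one has $V(\bar\alpha) = x$ constant in $\bar\alpha$, whence $T_U(V)(\bar\alpha) = \hat T_U(x) = \hat T_{U0}(x) \in \hat{\mathsans{K}}_0 \simeq \mathsans{K}_0$, and $\mathsans{K}_0$-equivariance is inherited from $\mathsans{K}$-equivariance by restriction. The main obstacle I anticipate is the closure check in the first step, where it is essential both that $\hat T_U$ send the object fiber into $\hat{\mathsans{K}}_0$ and that $\ker \hat s$ be normal in $\hat{\mathsans{K}}$, so that the semidirect-product normal form \ceqref{ex2lgr2cm1} is preserved under postcomposition with $\hat T_U$; everything else is a routine transport of structure from $\hat P$ to its shifted vertical restriction $P$.
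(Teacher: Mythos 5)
Your proof is correct and follows essentially the same route as the paper's: define $T_U(V)=\hat T_U\circ V$, compute pointwise on an element $V(\bar\alpha)=\hat R(x,\ee^{\bar\alpha L}a)$ using the $\hat{\mathsans{K}}$--equivariance of $\hat T_U$ to get $\hat T_{U0}(x)\,\ee^{\bar\alpha L}a$, and reduce the remaining claims (the factorization \ceqref{ex2lgr2cm8}, equivariance, and the restriction to $P_0$) to the corresponding properties of $\hat T_U$ and $\hat T_{U0}$. The only difference is that you spell out, via normality of $\ker\hat s$, why left multiplication by $\hat T_{U0}(x)\in\hat{\mathsans{K}}_0$ preserves the normal form \ceqref{ex2lgr2cm1} — a closure step the paper leaves implicit — which is a welcome clarification rather than a change of method.
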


\noindent
\ceqref{ex2lgr2cm8} extends \ceqref{2prinbund14} and shows that $\varPhi_U$, as $\hat{\varPhi}_U$, 
is projection preserving. 

\begin{proof}
$T_U$ has range in $\mathsans{K}$, as, writing $V\in\pi^{-1}(U)$ in the form \ceqref{ex2lgr2cm2}, 
$T_U$ is given by $T_U(V)(\bar\alpha)=\hat T_U(V(\bar\alpha))=\hat T_U(\hat R(x,Q(\bar\alpha)))
=\hat T_{U0}(x)Q(\bar\alpha)$ with  $\hat T_{U0}(x)\in\hat{\mathsans{K}}_0$. 
\ceqref{ex2lgr2cm8} follows immediately from \ceqref{2prinbund14} and \ceqref{ex2lgr2cm3}.
The $\mathsans{K}$--equivariance of $T_U$ follows from the $\hat{\mathsans{K}}$--equivariance of $\hat T_U$.
The calculation above shows also that, for $v\in\pi_0{}^{-1}(U)$, $T_{U0}$ is given by 
$T_{U0}(v)(\bar\alpha)=\hat T_{U0}(x)$ so that $T_{U0}$ has range in $\mathsans{K}_0$. 
The $\mathsans{K}_0$--equivariance of $T_{U0}$ follows from the $\hat{\mathsans{K}}_0$--equivariance of $\hat T_{U0}$.
\end{proof}

\noindent
It is possible to define the synthetic analog $\tilde{\varPhi}_U$ of a weak inverse 
$\tilde{\hat{\varPhi}}_U$ of $\hat{\varPhi}_U$ and its object restriction. We shall not consider them here,
since we shall not need them in the following. 

In the rest of this section, of a mostly technical nature, we shall work out a graded differential geometric framework
for strict principal $2$--bundle theory based on the operational apparatus of subsect. \cref{subsec:opers} 
and the synthetic description expounded above. 
The operational synthetic formulation allows for an original theory of $2$--connections and $1$-- 
and $2$--gauge transformations, developed in depth in the companion paper II, very close in form and spirit to the 
principal bundle theoretic framework reviewed in subsect. \cref{subsec:scope}. 

%\vfil\eject

\subsection{\textcolor{blue}{\sffamily Lie group and algebra crossed modules}}\label{subsec:liecm}

In subsect. \cref{subsec:synth}, we have seen that thanks to the isomorphism \ceqref{2prinbund3}
a strict Lie $2$--group is fully encoded in a pair of Lie groups equipped with two structure maps
with certain properties. These data 
constitute a Lie group crossed module, a notion which we review in this subsection mainly to set
our notation and terminology. We refer the reader to the papers 
\ccite{Baez5,Baez:2003fs} for a comprehensive treatment of this subject. 

%We also introduce new crossed module theoretic
%constructs pertaining the synthetic form of a $2$--group introduced in subsect. \cref{subsec:synth}. 

% a strict Lie $2$--group $\mathsans{K}$ is fully encoded 
%in a pair of Lie groups, $\mathsans{K}_0$, $\ker\hat{s}$, and two structure maps, $\hat{\lambda}$ and $\hat{\epsilon}$

\begin{defi}\label{defi:lgcm}
A Lie group crossed module $\mathsans{M}$ consists of two Lie groups $\mathsans{E}$ and $\mathsans{G}$ 
together with a Lie group morphism $\tau:\mathsans{E}\rightarrow\mathsans{G}$ and an action
$\mu:\mathsans{G}\times\mathsans{E}\rightarrow\mathsans{E}$ with the following properties.
\begin{enumerate}

\item For $a\in\mathsans{G}$, $\mu(a,\cdot)\in\Aut(\mathsans{E})$.

\item The map $a\in\mathsans{G}\rightarrow\mu(a,\cdot)\in\Aut(\mathsans{E})$
is a Lie group morphism.

\item Equivariance: for $a\in\mathsans{G}$, $A\in\mathsans{E}$
\begin{equation}
\tau(\mu(a,A))=a\tau(A)a^{-1}.
\label{liecm1}
\end{equation}

\item Peiffer identity: for $A,B\in\mathsans{E}$
\begin{equation}
\mu(\tau(A),B)=ABA^{-1}.
\label{liecm2}
\end{equation}

\end{enumerate}
\end{defi}

\noindent
In what follows, we shall often denote a Lie group crossed module $\mathsans{M}$ 
by the list of its constituent data $(\mathsans{E},\mathsans{G},\tau,\mu)$, 
or simply $(\mathsans{E},\mathsans{G})$ when no confusion can occur. %uvw
%Distinct crossed modules and their data will be distinguished by means of apexes.

\begin{defi}\label{defi:morlgcm}
A morphism $\beta:\mathsans{M}'\rightarrow\mathsans{M}$ of Lie group crossed modules
consists of two Lie group morphisms $\phi:\mathsans{G}'\rightarrow\mathsans{G}$  and 
$\varPhi:\mathsans{E}'\rightarrow\mathsans{E}$ with the following properties.
\begin{enumerate}

\item For $A\in\mathsans{E}'$ \hphantom{xxxxxxxxxxxx}
\begin{equation}
\tau(\varPhi(A))=\phi(\tau'(A)).
\label{liecm3}
\end{equation}

\item For $a\in\mathsans{G}'$, $A\in\mathsans{E}'$ \hphantom{xxxxxxxxxxxx}
\begin{equation}
\varPhi(\mu'(a,A))=\mu(\phi(a),\varPhi(A)).
\label{liecm4}
\end{equation}

\end{enumerate}
\end{defi}

\noindent
We shall concisely denote a Lie group crossed module morphism $\beta:\mathsans{M}'\rightarrow\mathsans{M}$ 
by the list $(\varPhi:\mathsans{E}'\rightarrow\mathsans{E},\phi:\mathsans{G}'\rightarrow\mathsans{G})$ 
of its defining data and shall omit the specification  
of sources and targets when no confusion is possible. 
%Distinct crossed module morphisms and their data will also be distinguished by means of apexes.

Lie group crossed modules and morphisms thereof constitute a category denoted by $\bfs{\mathrm{Lgcm}}$. 

\begin{defi}\label{defi:lacm}
A Lie algebra crossed module $\mathfrak{m}$ consists of two Lie algebras $\mathfrak{e}$ and $\mathfrak{g}$
together with a Lie algebra morphism $t:\mathfrak{e}\rightarrow\mathfrak{g}$ and an action
$m:\mathfrak{g}\times\mathfrak{e}\rightarrow\mathfrak{e}$ with the following properties.  
\begin{enumerate}

\item For $u\in\mathfrak{g}$, $m(u,\cdot)\in\Der(\mathfrak{e})$.

\item The map $u\in\mathfrak{g}\rightarrow m(u,\cdot)\in\Der(\mathfrak{e})$
is a Lie algebra morphism.

\item Equivariance: for $u\in\mathfrak{g}$, $U\in\mathfrak{e}$
\begin{equation}
t(m(u,U))=[u,t(U)].
\label{liecm5}
\end{equation}

\item Peifer identity: for $U,V\in\mathfrak{e}$
\begin{equation}
m(t(U),V)=[U,V].
\label{liecm6}
\end{equation}

\end{enumerate}
\end{defi}

\noindent
As for a Lie group crossed module, we shall denote a Lie algebra crossed module $\mathfrak{m}$ 
by the list $(\mathfrak{e},\mathfrak{g},t,m)$ of its defining data or simply by $(\mathfrak{e},\mathfrak{g})$ 
when no confusion can arise. \vspace{1mm}\vfil\eject
%Distinct crossed modules and their data will be distinguished by means of apexes.

\begin{defi}\label{defi:morlacm}
A morphism $p:\mathfrak{m}'\rightarrow\mathfrak{m}$ of Lie algebra crossed modules 
consists of two Lie algebra morphisms $h:\mathfrak{g}'\rightarrow\mathfrak{g}$ 
and $H:\mathfrak{e}'\rightarrow\mathfrak{e}$ with the following properties. 
\begin{enumerate}

\item For $U\in\mathfrak{e}'$ \hphantom{xxxxxxxxxxxx}
\begin{equation}
t(H(U))=h(t'(U)).
\label{liecm7}
\end{equation}

\item For $u\in\mathfrak{g}'$, $U\in\mathfrak{e}'$ \hphantom{xxxxxxxxxxxx}
\begin{equation}
H(m'(u,U))=m(h(u),H(U)).
\label{liecm8}
\end{equation}  

\end{enumerate}
\end{defi}

\noindent
We shall concisely denote a Lie algebra crossed module morphism $p:\mathfrak{m}'\rightarrow\mathfrak{m}$ 
by the list $(H:\mathfrak{e}'\rightarrow\mathfrak{e},h:\mathfrak{g}'\rightarrow\mathfrak{g})$ 
of its defining data and shall omit the specification  
of sources and targets when no confusion is possible. 
%Distinct crossed module morphisms and their data will also be distinguished by means of apexes.

Lie algebra crossed modules and morphisms thereof constitute a category denoted by
$\bfs{\mathrm{Lacm}}$.  

\begin{prop}
With a Lie group crossed module $\mathsans{M}=(\mathsans{E},\mathsans{G},\tau,\mu)$, 
there is associated a Lie algebra crossed module 
$\mathfrak{m}=(\mathfrak{e},\mathfrak{g},t,m)$, where $\mathfrak{e}=\Lie\mathsans{E}$, 
$\mathfrak{g}=\Lie\mathsans{G}$, 
\begin{align}
&t=\dot\tau,
\vphantom{\Big]}
\label{liecm9}
\\
&m=\dot{}\mu{}\dot{}\,.
\vphantom{\Big]}
\label{liecm10}
\end{align}
\end{prop}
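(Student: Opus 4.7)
The plan is to verify, one by one, the four defining properties of a Lie algebra crossed module of Definition \ref{defi:lacm} for the pair $(\mathfrak{e},\mathfrak{g},t,m)$ built from the Lie group crossed module $(\mathsans{E},\mathsans{G},\tau,\mu)$ by Lie differentiation, exploiting the corresponding four properties of Definition \ref{defi:lgcm}. The only nontrivial ingredient beyond routine calculus on Lie groups is the canonical identification $\Lie\Aut(\mathsans{E})\simeq\Der(\mathfrak{e})$ under which the group morphism $a\mapsto\mu(a,\cdot)$ differentiates to a Lie algebra morphism $u\mapsto m(u,\cdot)$ with values in $\Der(\mathfrak{e})$.

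First, I would observe that $t=\dot\tau:\mathfrak{e}\rightarrow\mathfrak{g}$ is a Lie algebra morphism because $\tau$ is a Lie group morphism, and that differentiating $\mu$ in its second argument at the identity of $\mathsans{E}$ yields, for each fixed $a\in\mathsans{G}$, a linear map $\dot\mu(a,\cdot):\mathfrak{e}\rightarrow\mathfrak{e}$ which belongs to $\Aut(\mathfrak{e})$ by virtue of property 1 of Definition \ref{defi:lgcm}. Differentiating once more in the first argument at $a=1$ along $u\in\mathfrak{g}$ produces the bilinear map $m(u,U)=\frac{d}{ds}\frac{d}{dt}\mu(\exp(su),\exp(tU))\big|_{s=t=0}$. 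Property 1 of Definition \ref{defi:lacm}, stating $m(u,\cdot)\in\Der(\mathfrak{e})$, then follows by differentiating the identity $\mu(\exp(su),[A,B])=[\mu(\exp(su),A),\mu(\exp(su),B)]$ at $s=0$. Property 2, stating that $u\mapsto m(u,\cdot)$ is a Lie algebra morphism into $\Der(\mathfrak{e})$, is precisely the Lie differential of property 2 of Definition \ref{defi:lgcm} under the identification $\Lie\Aut(\mathsans{E})\simeq\Der(\mathfrak{e})$.

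For the equivariance relation \eqref{liecm5}, I would apply $\frac{d}{ds}\frac{d}{dt}\big|_{s=t=0}$ to the group equivariance \eqref{liecm1} with $a=\exp(su)$ and $A=\exp(tU)$; the left-hand side yields $t(m(u,U))$ by the chain rule and the definitions of $t$ and $m$, while the right-hand side yields $[u,t(U)]$ as the second mixed derivative of the conjugation map on $\mathsans{G}$. For the Peiffer identity \eqref{liecm6}, I would analogously differentiate \eqref{liecm2} in both $A$ and $B$ at the identity along $U,V\in\mathfrak{e}$: the left-hand side yields $m(t(U),V)$ since $\dot\tau=t$, and the right-hand side yields $[U,V]$ as the second mixed derivative of conjugation on $\mathsans{E}$.

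The calculations are all elementary applications of the chain rule; the only delicate point, and the one I would treat with the most care, is the double-differentiation of $\mu$ needed to make sense of $m$ as a well-defined bilinear map and of the compositions $m(t(U),V)$ and $m(u,\cdot)\circ m(v,\cdot)$ that appear in properties 2 and 4. Once the order of the derivatives is fixed consistently and the identification $\Lie\Aut(\mathsans{E})\simeq\Der(\mathfrak{e})$ is invoked, all four properties follow directly from those of Definition \ref{defi:lgcm}.
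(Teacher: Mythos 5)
Your argument is correct, and it is the standard one. For this particular proposition the paper supplies no proof at all: it is stated as a known fact, with the differentiated structure maps $\dot\tau$, $\mu\dot{}$, $\dot{}\mu$, $\dot{}\mu\dot{}$ defined precisely only in app. \cref{app:ident} (eqs. \ceqref{ident1}--\ceqref{ident4}) and the needed facts that $\mu\dot{}\,(a,\cdot)\in\Aut(\mathfrak{e})$ and that $a\mapsto\mu\dot{}\,(a,\cdot)$ is a Lie group morphism recorded without justification in the text immediately following the statement. Your proposal therefore fills in exactly the routine verification the paper leaves implicit, and it is compatible with the paper's conventions (your $m(u,U)=\frac{d}{ds}\frac{d}{dt}\mu(\exp(su),\exp(tU))\big|_{s=t=0}$ agrees with \ceqref{ident4}). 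One small imprecision: in checking that $m(u,\cdot)\in\Der(\mathfrak{e})$ you differentiate ``$\mu(\exp(su),[A,B])=[\mu(\exp(su),A),\mu(\exp(su),B)]$'', which does not typecheck as written since the bracket lives in $\mathfrak{e}$ while $A,B\in\mathsans{E}$; the identity you actually want is the Lie-algebra-level statement $\mu\dot{}\,(\exp(su),[X,Y])=[\mu\dot{}\,(\exp(su),X),\mu\dot{}\,(\exp(su),Y)]$ for $X,Y\in\mathfrak{e}$, i.e.\ property 1 of def. \cref{defi:lgcm} already pushed down to $\Aut(\mathfrak{e})$, which is then differentiated in $s$ at $0$. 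With that correction the four axioms of def. \cref{defi:lacm} follow exactly as you describe.
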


\noindent
Above, $\,\dot{}\,$ denotes Lie differentiation with respect to the relevant Lie group argument. 
In the second relation above, Lie differentiation is carried out with respect to both arguments. 
In the following, we shall often encounter the Lie differential $\mu{}\dot{}:\mathsans{G}\times
\mathfrak{e}\rightarrow\mathfrak{e}$. For $a\in\mathsans{G}$,
$\mu{}\dot{}\,(a,\cdot)\in\Aut(\mathfrak{e})$
and the map $a\in\mathsans{G}\rightarrow\mu{}\dot{}\,(a,\cdot)\in\Aut(\mathfrak{e})$
is a Lie group morphism. The Lie differential $\,\dot{}\mu:\mathfrak{g}\times
\mathsans{E}\rightarrow\mathfrak{e}$ shall also be considered.
See app. \cref{app:ident} for details about the precise definition and main properties 
of these objects. 

\begin{prop}
With every Lie group crossed module morphism $\beta:\mathsans{M}'\rightarrow\mathsans{M}$
$=(\varPhi:\mathsans{E}'\rightarrow\mathsans{E}$, $\phi:\mathsans{G}'\rightarrow\mathsans{G})$
there is associated by Lie differentiation
a Lie algebra crossed module morphism $p:\mathfrak{m}'\rightarrow\mathfrak{m}
=(H:\mathfrak{e}'\rightarrow\mathfrak{e},h:\mathfrak{g}'\rightarrow\mathfrak{g})$, % of $\mu$ and $\gamma$, 
where \pagebreak 
\begin{align}
&h=\dot\phi,
\vphantom{\Big]}
\label{liecm11}
\\
&H=\dot\varPhi.
\vphantom{\Big]}
\label{liecm12}
\end{align}
\end{prop}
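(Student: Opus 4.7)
The plan is to take $h=\dot\phi$ and $H=\dot\varPhi$ as defined by Lie differentiation of the underlying Lie group morphisms $\phi:\mathsans{G}'\rightarrow\mathsans{G}$ and $\varPhi:\mathsans{E}'\rightarrow\mathsans{E}$ at the identity elements. Standard Lie theory immediately gives that $h$ and $H$ are Lie algebra morphisms, so what remains is verifying the two defining conditions \eqref{liecm7} and \eqref{liecm8} of a Lie algebra crossed module morphism. Both will be obtained by differentiating the corresponding group-level identities \eqref{liecm3} and \eqref{liecm4} of Definition \ref{defi:morlgcm}.

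For \eqref{liecm7}, I would start from the identity $\tau(\varPhi(A))=\phi(\tau'(A))$ valid for all $A\in\mathsans{E}'$. Viewing both sides as smooth maps $\mathsans{E}'\rightarrow\mathsans{G}$ and differentiating at the identity of $\mathsans{E}'$, the chain rule together with the definitions $t=\dot\tau$, $t'=\dot\tau'$, $h=\dot\phi$, $H=\dot\varPhi$ from the preceding proposition yields $t(H(U))=h(t'(U))$ for all $U\in\mathfrak{e}'$, which is exactly \eqref{liecm7}.

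For \eqref{liecm8}, I would start from $\varPhi(\mu'(a,A))=\mu(\phi(a),\varPhi(A))$ and differentiate in both arguments. Differentiating first in $A$ at the identity of $\mathsans{E}'$ produces the identity $\dot\varPhi\circ\mu'\dot{\,}(a,\cdot)=\mu\dot{\,}(\phi(a),\cdot)\circ\dot\varPhi$ as Lie algebra automorphisms (using that $\mu'(a,\cdot)$ and $\mu(\phi(a),\cdot)$ are group morphisms on $\mathsans{E}'$ and $\mathsans{E}$). Then differentiating this relation in $a$ at the identity of $\mathsans{G}'$ gives $H(\dot{}\mu'\dot{}(u,U))=\dot{}\mu\dot{}(h(u),H(U))$, which by $m=\dot{}\mu\dot{}$ and $m'=\dot{}\mu'\dot{}$ is precisely \eqref{liecm8}.

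The only real obstacle is the careful bookkeeping of the mixed double Lie differentiation involved in obtaining $m=\dot{}\mu\dot{}$ from $\mu$, and making sure the interchange of the two differentiations and the chain rule with $\phi$ and $\varPhi$ produce the expected equivariance on the nose. All the needed precise definitions and identities for the objects $\mu\dot{}$ and $\dot{}\mu$ have been collected in appendix \ref{app:ident}, so the verification reduces to invoking those identities in the correct order; no further technical input is required.
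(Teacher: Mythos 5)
Your argument is correct and is exactly the standard verification: Lie-differentiate the group-level compatibility identities \ceqref{liecm3} and \ceqref{liecm4} once (for the target maps) and twice (once in each argument, for the actions) to obtain \ceqref{liecm7} and \ceqref{liecm8}, using that $\dot{}\mu\dot{}$ is by definition the mixed double differential recorded in app. \cref{app:ident} and that $H$ is linear so it commutes with the differentiation in $a$. The paper in fact states this proposition without proof, treating it as a routine consequence of Lie differentiation, so your write-up simply supplies the omitted standard argument and contains no gap.
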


\noindent
The morphism $p$ shall be denoted by $\dot\beta$ in the following.

%so that \hphantom{xxxxxxxxxxxxxxxxxx}
%\begin{equation}
%e_{\dot \mu}=\dot\gamma. 
%\label{liecm}
%\end{equation}

The map that associates with each Lie group crossed module $\mathsans{M}$ its Lie algebra
crossed module $\mathfrak{m}$ and with each Lie group crossed module morphism 
$\beta:\mathsans{M}'\rightarrow\mathsans{M}$
its Lie algebra crossed module morphism $\dot\beta:\mathfrak{m}'\rightarrow\mathfrak{m}$ is a functor
of the category $\bfs{\mathrm{Lgcm}}$ into the category $\bfs{\mathrm{Lacm}}$.

%\vfil\eject 

\subsection{\textcolor{blue}{\sffamily Derived Lie groups and algebras}}\label{subsec:dercm}

The derived Lie group and algebra of a Lie group and algebra crossed module, respectively, 
constitute the fundamental algebraic structures of the formulation of $2$--connection and 
$1$-- and $2$--gauge transformation theory studied in depth in part II. 
In this subsection, we formulate these notions and study in detail their properties. 

Let $\mathsans{L}$ be an ordinary Lie group with Lie algebra $\mathfrak{l}$.
The internal function space $\iMap(\mathbb{R}[-1],\mathsans{L})$
with the Lie group structure inherited form that of $\mathsans{L}$ is a graded Lie group.
%There is a graded Lie group isomorphism 
%As such, $\iMap(\mathbb{R}[-1],\mathsans{L})$ is isomorphic to $\mathfrak{l}[1]\rtimes_{\Ad}\mathsans{L}$, 
%the semidirect product of $\mathfrak{l}[1]$, regarded as an Abelian group, and $\mathsans{L}$ 
%with respect to the adjoint action of $\mathsans{L}$. The mapping 
%$\chi:\iMap(\mathbb{R}[-1],\mathsans{L})\rightarrow\mathfrak{l}[1]\rtimes_{\Ad}\mathsans{L}$
%underly\-ing the isomorphism is defined by 
%$\chi(f)=(df(\bar\alpha)/d\bar\alpha f(\bar\alpha)^{-1},f(\bar\alpha))|_{\bar\alpha=0}$
%with $f\in\iMap(\mathbb{R}[-1],\mathsans{L})$. In this way, 
Every element $f\in\iMap(\mathbb{R}[-1],\mathsans{L})$
can be written uniquely as $f(\bar\alpha)=\ee^{\bar\alpha x}l$, $\bar\alpha\in\mathbb{R}[-1]$, 
with $l\in\mathsans{L}$ and $x\in\mathfrak{l}[1]$. %PPPPP

Analogous remarks hold for an ordinary Lie algebra $\mathfrak{l}$. 
The internal function space $\iMap(\mathbb{R}[-1],\mathfrak{l})$
with the Lie algebra structure inherited form that of $\mathfrak{l}$ is a graded Lie algebra.
%There is a graded Lie group isomorphism 
%As such, $\iMap(\mathbb{R}[-1],\mathfrak{l})$ is isomorphic to $\mathfrak{l}[1]\rtimes_{\ad}\mathfrak{l}$, 
%the semidirect product of $\mathfrak{l}[1]$, regarded as an Abelian algebra, and $\mathfrak{l}$ 
%with respect to the adjoint action of $\mathfrak{l}$. The mapping 
%$\chi:\iMap(\mathbb{R}[-1],\mathfrak{l})\rightarrow\mathfrak{l}[1]\rtimes_{\Ad}\mathfrak{l}$
%underlying the isomorphism is defined by 
%$\chi(f)=(df(\bar\alpha)/d\bar\alpha,f(\bar\alpha))|_{\bar\alpha=0}$
%for $f\in\iMap(\mathbb{R}[-1],\mathfrak{l})$. 
Any element $z\in\iMap(\mathbb{R}[-1],\mathfrak{l})$
can so be written uniquely as $z(\bar\alpha)=x+\bar\alpha y$, $\bar\alpha\in\mathbb{R}[-1]$, 
with $x\in\mathfrak{l}$ and $y\in\mathfrak{l}[1]$. 

For a Lie group crossed module $\mathsans{M}=(\mathsans{E},\mathsans{G},\tau,\mu)$,  
consider the graded Lie group $\iMap(\mathbb{R}[-1],\mathsans{E}\rtimes_\mu\mathsans{G})$, 
where $\mathsans{E}\rtimes_\mu\mathsans{G}$ denotes the semidirect product of the groups 
$\mathsans{E}$ and $\mathsans{G}$ with respect to the $\mathsans{G}$--action $\mu$. 
$\iMap(\mathbb{R}[-1],\mathsans{E}\rtimes_\mu\mathsans{G})$ contains among others elements of the special form 
\begin{equation}
P(\bar\alpha)=\ee^{\bar\alpha L}a, \quad\bar\alpha\in\mathbb{R}[-1],
\label{liecmx1}
\end{equation}
with $a\in\mathsans{G}$, $L\in\mathfrak{e}[1]$
%since $(\ee^{\bar\alpha L},a)$ factorizes as $(\ee^{\bar\alpha L},1)(1,a)$ and 
%$\mathsans{E}$ and $\mathsans{G}$ can be identified with the subgroups 
%$1_{\mathsans{G}}\times\mathsans{E}$ and $\mathsans{G}\times 1_{\mathsans{E}}$ of 
%$\mathsans{E}\rtimes_\mu\mathsans{G}$, respectively. 
forming form a distinguished subset $\DD\mathsans{M}$. 

For any $P,\,Q\in\DD\mathsans{M}$ \pagebreak with $P(\bar\alpha)=\ee^{\bar\alpha L}a$, $Q(\bar\alpha)=\ee^{\bar\alpha N}b$, one has 
\begin{align}
&PQ(\bar\alpha)=\ee^{\bar\alpha(L+\mu{}\dot{}(a,N))}ab,
\vphantom{\Big]}
\label{liecmx2}
\\
&P^{-1}(\bar\alpha)=\ee^{-\bar\alpha\mu{}\dot{}(a^{-1},L)}a^{-1}.
\vphantom{\Big]}
%\nonumber %
\label{liecmx3}
\end{align}
Therefore, $\DD\mathsans{M}$ is a graded Lie subgroup of $\iMap(\mathbb{R}[-1],\mathsans{E}\rtimes_\mu\mathsans{G})$. 

\begin{prop}
$\DD\mathsans{M}$ is a graded Lie group. The graded Lie group isomorphism 
\begin{equation}
\DD\mathsans{M}\simeq\mathfrak{e}[1]\rtimes_{\mu{}\dot{}}\mathsans{G} %\vphantom{\bigg]}
\label{liecm13}
\end{equation}
holds, where $\mathfrak{e}$ is regarded as an Abelian Lie group. 
\end{prop}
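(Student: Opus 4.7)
The plan is to prove the proposition by exhibiting an explicit graded Lie group isomorphism induced by the natural parameterization of $\DD\mathsans{M}$ by pairs $(L,a) \in \mathfrak{e}[1] \times \mathsans{G}$. The content is essentially book-keeping: the multiplication and inversion formulas \ceqref{liecmx2}--\ceqref{liecmx3} already encode exactly the semidirect product law, so the work lies in verifying the structural claims cleanly.

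First I would define the candidate isomorphism $\Psi : \DD\mathsans{M} \to \mathfrak{e}[1] \rtimes_{\mu\dot{}} \mathsans{G}$ by $\Psi(P) = (L,a)$ where $P(\bar\alpha) = \ee^{\bar\alpha L}a$ is the unique presentation \ceqref{liecmx1}. Uniqueness of the presentation (and hence well-definedness and injectivity of $\Psi$) follows from evaluation at $\bar\alpha = 0$, which extracts $a \in \mathsans{G}$, and from the coefficient of $\bar\alpha$ in the expansion, which extracts $L \in \mathfrak{e}[1]$; bijectivity onto $\mathfrak{e}[1] \times \mathsans{G}$ is then immediate. This step also implicitly endows $\DD\mathsans{M}$ with the structure of a graded manifold diffeomorphic to $\mathfrak{e}[1] \times \mathsans{G}$, which is what is needed to talk about a graded Lie group structure on it in the first place.

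Next I would verify closure of $\DD\mathsans{M}$ under the inherited operations and identify the group law. Closure under multiplication and inversion is already recorded in \ceqref{liecmx2}--\ceqref{liecmx3}, so $\DD\mathsans{M}$ is indeed a graded Lie subgroup of $\iMap(\mathbb{R}[-1], \mathsans{E} \rtimes_\mu \mathsans{G})$; this also shows that the unit $P = 1$ and inverses stay within $\DD\mathsans{M}$. Comparing \ceqref{liecmx2} to the semidirect product law on $\mathfrak{e}[1] \rtimes_{\mu\dot{}} \mathsans{G}$, which reads $(L,a)(N,b) = (L + \mu\dot{}(a,N),\, ab)$, yields $\Psi(PQ) = \Psi(P)\Psi(Q)$ on the nose; similarly \ceqref{liecmx3} matches the semidirect-product inverse $(L,a)^{-1} = (-\mu\dot{}(a^{-1},L),\, a^{-1})$. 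Thus $\Psi$ is a group morphism.

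Finally I would package this as an isomorphism of graded Lie groups, noting that $\Psi$ and $\Psi^{-1}$ are smooth because the assignment $(L,a) \mapsto (\bar\alpha \mapsto \ee^{\bar\alpha L}a)$ and its inverse (reading off constant term and $\bar\alpha$-coefficient) are both polynomial in $\bar\alpha$ of degree one and depend smoothly on the data. The target $\mathfrak{e}[1] \rtimes_{\mu\dot{}} \mathsans{G}$ is a bona fide graded Lie group: $\mathfrak{e}[1]$ is abelian (addition is automatically associative and commutative), $\mu\dot{}(a,\cdot) \in \Aut(\mathfrak{e})$ extends to $\mathfrak{e}[1]$ preserving the shifted degree, and $a \mapsto \mu\dot{}(a,\cdot)$ is a Lie group morphism into $\Aut(\mathfrak{e}[1])$ by the properties recalled after \ceqref{liecm10}; these are exactly the data required to form a semidirect product. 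The main subtlety, and the only step requiring any care, is ensuring that the grading conventions align — namely that $\bar\alpha \in \mathbb{R}[-1]$ paired with $L \in \mathfrak{e}[1]$ produces a degree-zero exponent, so that $\ee^{\bar\alpha L}$ is well-defined in the internal function algebra — but this is precisely the rationale for the shift $\mathfrak{e}[1]$ in the target of the isomorphism \ceqref{liecm13}.
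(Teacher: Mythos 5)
Your proposal is correct and takes essentially the same route as the paper, which simply observes that the claim follows immediately from the multiplication and inversion formulas \ceqref{liecmx2}, \ceqref{liecmx3}; you have merely spelled out the identification $P\mapsto(L,a)$ and the matching with the semidirect product law that the paper leaves implicit.
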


\noindent
Above, $\mathfrak{e}[1]\rtimes_{\mu{}\dot{}}\mathsans{G}$
denotes the semidirect product of the Lie groups $\mathfrak{e}[1]$ and $\mathsans{G}$ 
with respect to the $\mathsans{G}$--action $\mu{}\dot{}$. 
The graded Lie group $\DD\mathsans{M}$ is called the derived Lie group of $\mathsans{M}$. 

\begin{proof}
The statement follows immediately from  \ceqref{liecmx2}, \ceqref{liecmx3}. 
\end{proof}

%B, we have \hphantom{xxxxxxxxxxxxxxxxx} 
%$\DD\mathsans{M}$ is isomorphic to the semidirect product group $\mathsans{G}\ltimes\mathfrak{e}[1]$, 
%We shall often use \ceqref{liecm13} as an alternative definition of $\DD\mathsans{M}$. 
%Note however that this group has to be treated as concentrated in degree $0$ is spite of the 
%degree shift of $\mathfrak{e}$. 

\begin{prop}
With every Lie group crossed module morphism $\beta:\mathsans{M}'\rightarrow\mathsans{M}$
$=(\varPhi:\mathsans{E}'\rightarrow\mathsans{E}, \phi:\mathsans{G}'\rightarrow\mathsans{G})$ 
there is associated a graded Lie group morphism $\DD\beta:\DD\mathsans{M}'\rightarrow\DD\mathsans{M}$.
With respect to the factorization 
\ceqref{liecm13}, $\DD\beta$ reads 
\begin{equation}
\DD\beta=\dot\varPhi\times\phi.
\label{liecm13/1}
\end{equation}
\end{prop}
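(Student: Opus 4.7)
The plan is to construct $\DD\beta$ directly from $\varPhi$ and $\phi$ and then verify that it is a graded Lie group morphism respecting the semidirect product decomposition \ceqref{liecm13}. Concretely, for $P\in\DD\mathsans{M}'$ written in the canonical form $P(\bar\alpha)=\ee^{\bar\alpha L}a$ with $a\in\mathsans{G}'$ and $L\in\mathfrak{e}'[1]$, I would define
\begin{equation*}
\DD\beta(P)(\bar\alpha)=\ee^{\bar\alpha\dot\varPhi(L)}\phi(a).
\end{equation*}
Since $\phi(a)\in\mathsans{G}$ and $\dot\varPhi(L)\in\mathfrak{e}[1]$, the right hand side lies in $\DD\mathsans{M}$ as a consequence of the normal form \ceqref{liecmx1}, so $\DD\beta$ is a well-defined set map $\DD\mathsans{M}'\to\DD\mathsans{M}$.

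Next I would show that $\DD\beta$ is a graded Lie group morphism. Using the product formula \ceqref{liecmx2} for $\DD\mathsans{M}'$, for $P(\bar\alpha)=\ee^{\bar\alpha L}a$ and $Q(\bar\alpha)=\ee^{\bar\alpha N}b$ one has $PQ(\bar\alpha)=\ee^{\bar\alpha(L+\mu'\dot{}(a,N))}ab$, and therefore
\begin{equation*}
\DD\beta(PQ)(\bar\alpha)=\ee^{\bar\alpha(\dot\varPhi(L)+\dot\varPhi(\mu'\dot{}(a,N)))}\phi(ab).
\end{equation*}
On the other hand, applying \ceqref{liecmx2} in $\DD\mathsans{M}$ to $\DD\beta(P)$ and $\DD\beta(Q)$ gives $\DD\beta(P)\DD\beta(Q)(\bar\alpha)=\ee^{\bar\alpha(\dot\varPhi(L)+\mu\dot{}(\phi(a),\dot\varPhi(N)))}\phi(a)\phi(b)$. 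Matching the two expressions thus reduces to the identities $\phi(ab)=\phi(a)\phi(b)$, which holds since $\phi$ is a Lie group morphism, and
\begin{equation*}
\dot\varPhi(\mu'\dot{}(a,N))=\mu\dot{}(\phi(a),\dot\varPhi(N)),
\end{equation*}
which is obtained by Lie differentiating the equivariance relation \ceqref{liecm4} in the second slot. A parallel check with the inversion formula \ceqref{liecmx3} is immediate via the same equivariance identity applied to $a^{-1}$. Smoothness of $\DD\beta$ follows from smoothness of $\dot\varPhi$ and $\phi$.

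Finally, to recover the stated form $\DD\beta=\dot\varPhi\times\phi$, I would invoke the isomorphism \ceqref{liecm13}, which identifies the element $P\in\DD\mathsans{M}'$ with the pair $(L,a)\in\mathfrak{e}'[1]\rtimes_{\mu'\dot{}}\mathsans{G}'$ and likewise on the target. Under this identification the construction of $\DD\beta$ above sends $(L,a)$ to $(\dot\varPhi(L),\phi(a))$, which is precisely $\dot\varPhi\times\phi$. The equivariance identity just verified is exactly what guarantees that $\dot\varPhi\times\phi$ intertwines the semidirect product structures on source and target, so no additional check is needed.

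I expect the only real point of substance to be the Lie-differentiated equivariance $\dot\varPhi\circ\mu'\dot{}(a,\cdot)=\mu\dot{}(\phi(a),\cdot)\circ\dot\varPhi$; everything else is routine once the canonical form \ceqref{liecmx1} and the product law \ceqref{liecmx2} are in hand. This identity, however, is a direct consequence of \ceqref{liecm4} by differentiating in the $\mathsans{E}'$ argument at the identity, so the obstacle is essentially notational rather than conceptual.
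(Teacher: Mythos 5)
Your proposal is correct and follows essentially the same route as the paper's own (much terser) proof, which likewise derives the claim from the product and inversion formulae \ceqref{liecmx2}, \ceqref{liecmx3} together with $\phi$, $\varPhi$ being group morphisms and $\varPhi$ satisfying the equivariance relation \ceqref{liecm4}. The key identity $\dot\varPhi\circ\mu'\dot{}\,(a,\cdot)=\mu\dot{}\,(\phi(a),\cdot)\circ\dot\varPhi$ you isolate is exactly the Lie differentiation of \ceqref{liecm4} that the paper's argument relies on implicitly.
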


\noindent
$\DD\beta$ is called the derived Lie group morphism of $\beta$. 

\begin{proof}
The statement follows readily from  \ceqref{liecmx2}, \ceqref{liecmx3} and 
$\phi$, $\varPhi$ being group morphisms with $\varPhi$ satisfying 
\ceqref{liecm4}. 
\end{proof}

%$\DD\beta$ acts $\phi$ and $\dot\varPhi$
%on t$\mathsans{G}'$, $\mathfrak{e}'[1]$ in the factorization \ceqrefhe two factors 
%of $\DD\mathsans{M}'$, respectively. 

The map $\DD$ associating with each Lie group crossed module its derived Lie group and with each 
Lie group module morphism its derived Lie group morphism constitutes 
a functor from the Lie group crossed module category $\bfs{\mathrm{Lgcm}}$
into the graded Lie group category $\bfs{\mathrm{gLg}}$. %$\bfs{\mathrm{Lgcm}}$  $\bfs{\mathrm{gLg}}$.

Next, for a Lie algebra crossed module $\mathfrak{m}=(\mathfrak{e},\mathfrak{g},t,m)$, 
consider the internal function space $\iMap(\mathbb{R}[-1],\mathfrak{e}\rtimes_m\mathfrak{g})$, 
where $\mathfrak{e}\rtimes_m\mathfrak{g}$ denotes the semidirect product of the Lie algebras
$\mathfrak{e}$ and $\mathfrak{g}$ with respect to the $\mathfrak{g}$--action $m$. 
%$\II\mathfrak{m}$ with the Lie algebra structure induced by that of 
%$\mathfrak{e}\rtimes_m\mathfrak{g}$ is a graded Lie algebra. 
$\iMap(\mathbb{R}[-1],\mathfrak{e}\rtimes_m\mathfrak{g})$ contains in particular the elements of the form
\begin{equation}
Y(\bar\alpha)=u+\bar\alpha U, \quad\bar\alpha\in\mathbb{R}[-1],
\label{liecmx5}
\end{equation}
with $u\in\mathfrak{g}$, $U\in\mathfrak{e}[1]$
%since $(\bar\alpha U,u)$ splits as $(0,u)+(\bar\alpha U,0)$ and $\mathfrak{e}$ 
%and $\mathfrak{g}$ can be identified with the Lie subalgebras 
%$0_{\mathfrak{g}}\times\mathfrak{e}$ and $\mathfrak{g}\times 0_{\mathfrak{e}}$ 
%of $\mathfrak{e}\rtimes_m\mathfrak{g}$ respectively. 
spanning a special subspace $\DD\mathfrak{m}$.

For any $Y,W\in\DD\mathfrak{m}$ such that $Y(\bar\alpha)=u+\bar\alpha U$, $W(\bar\alpha)=v+\bar\alpha V$, one has 
\begin{equation}
[Y,W](\bar\alpha)=[u,v]+\bar\alpha(m(u,V)-m(v,U)). 
\label{liecmx6}
\end{equation}
Thus, $\DD\mathfrak{m}$ is a graded Lie subalgebra of $\iMap(\mathbb{R}[-1],\mathfrak{e}\rtimes_m\mathfrak{g})$.

\begin{prop}
$\DD\mathfrak{m}$ is a graded Lie algebra. The graded Lie algebra isomorphism 
\begin{equation}
\DD\mathfrak{m}\simeq\mathfrak{e}[1]\rtimes_m\mathfrak{g}
\label{liecm14}
\end{equation}
holds, where %$\mathfrak{e}[1]$ is $\mathfrak{e}$ shifted to degree $+1$ and 
$\mathfrak{e}$ is regarded as an Abelian Lie algebra. 
\end{prop}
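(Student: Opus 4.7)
The plan is to mirror the proof of the preceding proposition about $\DD\mathsans{M}$: the crucial computation, namely the bracket formula \ceqref{liecmx6}, is already in hand, so only the identification of the resulting structure with the claimed semidirect product remains.

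First, I would introduce the evident map $\psi:\DD\mathfrak{m}\rightarrow\mathfrak{e}[1]\rtimes_m\mathfrak{g}$ sending an element $Y$ with decomposition $Y(\bar\alpha)=u+\bar\alpha U$, $u\in\mathfrak{g}$, $U\in\mathfrak{e}[1]$, to the pair $(U,u)$. Since any internal function on $\mathbb{R}[-1]$ admits a unique such expansion in $\bar\alpha$, $\psi$ is a graded linear bijection; in particular this already shows that $\DD\mathfrak{m}$ is closed under the graded linear structure inherited from $\iMap(\mathbb{R}[-1],\mathfrak{e}\rtimes_m\mathfrak{g})$.

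Second, I would verify that $\psi$ is a bracket morphism. The bracket on $\mathfrak{e}[1]\rtimes_m\mathfrak{g}$, with $\mathfrak{e}$ taken Abelian and the $\mathfrak{g}$--action being $m$, is
\begin{equation}
[(U,u),(V,v)]=(m(u,V)-m(v,U),[u,v]).
\label{plan1}
\end{equation}
Comparing with \ceqref{liecmx6} gives precisely $\psi([Y,W])=[\psi(Y),\psi(W)]$. This simultaneously shows that $\DD\mathfrak{m}$ is closed under the bracket of $\iMap(\mathbb{R}[-1],\mathfrak{e}\rtimes_m\mathfrak{g})$, and hence a graded Lie subalgebra of it, and that the bijection $\psi$ is the required isomorphism of graded Lie algebras. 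The graded Jacobi identity and bilinearity on $\DD\mathfrak{m}$ are inherited from $\iMap(\mathbb{R}[-1],\mathfrak{e}\rtimes_m\mathfrak{g})$ (or, equivalently, from the target of $\psi$, which is a standard semidirect product).

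There is essentially no serious obstacle here, provided one handles the shift $\mathfrak{e}\rightsquigarrow\mathfrak{e}[1]$ correctly: the action $m$ on the right--hand side must be read as the degree--zero map induced on $\mathfrak{e}[1]$ by the original $\mathfrak{g}$--action on $\mathfrak{e}$, so that the expression $m(u,V)-m(v,U)$ in \ceqref{plan1} lands in $\mathfrak{e}[1]$ and has the correct total degree. The Abelian nature of $\mathfrak{e}$ is what kills the would--be $[U,V]_{\mathfrak{e}[1]}$ term and thereby matches \ceqref{liecmx6} exactly; were $\mathfrak{e}$ non--Abelian, the shift to $\mathfrak{e}[1]$ would turn it into a graded Lie algebra concentrated in odd degree with an extra sign, but this subtlety is absent here.
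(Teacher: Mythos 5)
Your proposal is correct and follows essentially the same route as the paper, whose proof simply states that the claim is an immediate consequence of the bracket formula \ceqref{liecmx6}; you have merely made explicit the identification $Y\mapsto(U,u)$ and the comparison with the semidirect product bracket that the paper leaves implicit. The remarks on the degree shift and on the Abelian structure of $\mathfrak{e}[1]$ are accurate and consistent with the paper's conventions.
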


\noindent
Above, $\mathfrak{e}[1]\rtimes_m\mathfrak{g}$
denotes the semidirect product of the Lie algebras
$\mathfrak{e}[1]$ and $\mathfrak{g}$ with respect to the $\mathfrak{g}$--action $m$.
The graded Lie algebra $\DD\mathfrak{m}$ is called the derived Lie algebra of $\mathfrak{m}$.

\begin{proof}
The claim is an immediate consequence of \ceqref{liecmx6}. 
\end{proof}

%We shall often use \ceqref{liecm14} as an alternative definition of $\DD\mathfrak{m}$. 
%Note however that this group has to be treated as concentrated in degree $0$ is spite of the 
%degree shift of $\mathfrak{e}$. 
%and so itself a graded Lie algebra. 

%Similarly, with a Lie algebra crossed module $\mathfrak{m}=(\mathfrak{e},\mathfrak{g})$, we can associate 
%its derived Lie algebra $\DD\mathfrak{m}$. $\DD\mathfrak{m}$ is the Lie algebra semidirect product 
%where $\mathfrak{e}$ is treated as an Abelian Lie algebra and 
%the action of $\mathfrak{g}$ on $\mathfrak{e}$ is given by $m$. 
%$\DD\mathfrak{m}$ is a Lie algebra with Lie bracket given by 
%$[(u,U),(v,V)]=([u,v],m(u,V)-m(v,U))$. 
%In our conventions, $\DD\mathfrak{m}$ is ungraded: a pair $(u,U)\in\DD\mathfrak{m}$ stands for 
%$(u,\bar\alpha U)$ with $\bar\alpha\in\mathbb{R}[-1]$. 

\begin{prop}
With every Lie algebra crossed module morphism $p:\mathfrak{m}'\rightarrow\mathfrak{m}$
$=(H:\mathfrak{e}'\rightarrow\mathfrak{e},h:\mathfrak{g}'\rightarrow\mathfrak{g})$
there is associated a graded Lie algebra morphism $\DD p:\DD\mathfrak{m}'\rightarrow\DD\mathfrak{m}$. 
With respect to the factorization \ceqref{liecm14}, $\DD p$ reads 
\begin{equation}
\DD p=H\times h.
\label{liecm14/1}
\end{equation}
\end{prop}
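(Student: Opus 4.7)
The plan is to mirror the argument used for the derived Lie group morphism associated to a Lie group crossed module morphism. Given $Y\in\DD\mathfrak{m}'$ of the form \ceqref{liecmx5}, namely $Y(\bar\alpha)=u+\bar\alpha U$ with $u\in\mathfrak{g}'$ and $U\in\mathfrak{e}'[1]$, I would define
\begin{equation*}
\DD p(Y)(\bar\alpha)=h(u)+\bar\alpha H(U),
\end{equation*}
equivalently $\DD p(Y)=p\circ Y$ where $p=H\oplus h:\mathfrak{e}'\rtimes_{m'}\mathfrak{g}'\to\mathfrak{e}\rtimes_m\mathfrak{g}$ is the underlying Lie algebra morphism between the semidirect products (which is well-defined precisely by \ceqref{liecm7} and \ceqref{liecm8}). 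The first task is to verify that the image lies in $\DD\mathfrak{m}$: this is automatic from the displayed expression, since $h(u)\in\mathfrak{g}$ and $H(U)\in\mathfrak{e}[1]$ by the definitions of $h$ and $H$, so $\DD p(Y)$ has exactly the form \ceqref{liecmx5} required of an element of $\DD\mathfrak{m}$.

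Next I would check that $\DD p$ is a graded Lie algebra morphism. Graded linearity is immediate from the linearity of $h$ and $H$ and the degree assignments ($h$ acting in degree $0$ on the $u$ part, $H$ acting on the degree $1$ component $U$). The substantive step is bracket preservation. Using \ceqref{liecmx6}, for $Y(\bar\alpha)=u+\bar\alpha U$ and $W(\bar\alpha)=v+\bar\alpha V$ in $\DD\mathfrak{m}'$ one computes
\begin{align*}
\DD p([Y,W])(\bar\alpha)&=h([u,v]')+\bar\alpha H\bigl(m'(u,V)-m'(v,U)\bigr),\\
[\DD p(Y),\DD p(W)](\bar\alpha)&=[h(u),h(v)]+\bar\alpha\bigl(m(h(u),H(V))-m(h(v),H(U))\bigr).
\end{align*}
Equality of the degree $0$ parts follows from $h$ being a Lie algebra morphism, while equality of the degree $1$ parts follows from the equivariance identity \ceqref{liecm8} applied term by term. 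This is the only step requiring real checking, and it is algebraically routine given the crossed module morphism axioms.

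Finally, to obtain the factorized description $\DD p=H\times h$, I would invoke the isomorphism \ceqref{liecm14}, under which an element $Y\in\DD\mathfrak{m}'$ corresponds to the pair $(U,u)\in\mathfrak{e}'[1]\rtimes_{m'}\mathfrak{g}'$ read off from the expansion \ceqref{liecmx5}. The construction of $\DD p$ above sends $(U,u)$ to $(H(U),h(u))$, which is precisely the action of $H\times h$ on $\mathfrak{e}[1]\rtimes_m\mathfrak{g}$, yielding \ceqref{liecm14/1}.

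I do not expect any genuine obstacle: the argument is strictly parallel to the proof of the preceding proposition for $\DD\beta$, with \ceqref{liecmx6} replacing \ceqref{liecmx2}--\ceqref{liecmx3} and the Lie algebra crossed module axioms \ceqref{liecm7}--\ceqref{liecm8} replacing their group counterparts \ceqref{liecm3}--\ceqref{liecm4}. The only place where one might stumble is bookkeeping of signs and degrees in the bracket computation, since $U,V$ sit in $\mathfrak{e}[1]$ and one should be careful that the grading convention is consistent with the way $\bar\alpha$ is treated as an odd bookkeeping parameter; however, because $H$ preserves the shift by construction and $m$ is extended to $\mathfrak{e}[1]$ without extra signs, everything goes through cleanly.
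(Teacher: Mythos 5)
Your proposal is correct and follows exactly the route the paper takes: the paper's (one-line) proof likewise derives the claim from the bracket formula \ceqref{liecmx6} together with $h$, $H$ being Lie algebra morphisms and $H$ satisfying \ceqref{liecm8}. You have simply written out in full the verification the paper leaves implicit.
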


\noindent
$\DD p$ is called the derived Lie algebra morphism of $p$. 

\begin{proof}
The claim follows from \ceqref{liecmx6} and $H$, $h$ being Lie algebra morphisms with $H$
satisfying \ceqref{liecm8}. 
\end{proof}

The map $\DD$ associating with each Lie algebra crossed module its derived Lie algebra 
and with each Lie algebra crossed module morphism its derived Lie algebra morphism 
is a functor from the Lie algebra crossed module category 
$\bfs{\mathrm{Lacm}}$ into the graded Lie algebra category $\bfs{\mathrm{gLa}}$.  
%$\bfs{\mathrm{Lacm}}$ $\bfs{\mathrm{gLa}}$. 

Let $\mathsans{M}=(\mathsans{E},\mathsans{G},\tau,\mu)$ be a Lie group crossed module and let 
$\mathfrak{m}=(\mathfrak{e},\mathfrak{g},\dot\tau,{}\dot{}\mu{}\dot{})$ %, where $t=\dot\tau$ and $m={}\dot{}\mu{}\dot{}$,
be its associated Lie algebra crossed module. 

\begin{prop}
$\DD\mathfrak{m}$ is the Lie algebra of $\DD\mathsans{M}$.
\end{prop}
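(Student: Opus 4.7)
The plan is to exploit the factorization results already established, namely $\DD\mathsans{M}\simeq\mathfrak{e}[1]\rtimes_{\mu\dot{}}\mathsans{G}$ (eq. \ceqref{liecm13}) and $\DD\mathfrak{m}\simeq\mathfrak{e}[1]\rtimes_m\mathfrak{g}$ (eq. \ceqref{liecm14}), and to reduce the claim to the standard fact that the Lie algebra of a semidirect product of Lie groups is the corresponding semidirect product of their Lie algebras with respect to the differentiated action. Concretely, $\mathfrak{e}[1]$, being abelian, has itself as Lie algebra; the Lie algebra of $\mathsans{G}$ is $\mathfrak{g}$; and the $\mathsans{G}$--action $\mu\dot{}$ on $\mathfrak{e}[1]$ differentiates in the $\mathsans{G}$--argument to the $\mathfrak{g}$--action $\dot{}\mu\dot{}=m$ (cf. eq. \ceqref{liecm10}). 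So the Lie algebra of $\mathfrak{e}[1]\rtimes_{\mu\dot{}}\mathsans{G}$ is $\mathfrak{e}[1]\rtimes_m\mathfrak{g}$, which is exactly $\DD\mathfrak{m}$.

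Alternatively, and more in keeping with the internal function viewpoint taken throughout the section, I would verify the identification directly using the $\bar\alpha$--parametrization. A smooth one--parameter family $P_s\in\DD\mathsans{M}$ with $P_0$ the identity admits the form $P_s(\bar\alpha)=\ee^{\bar\alpha L_s}a_s$ with $a_0=1$ and $L_0=0$. Its derivative at $s=0$ reads $\partial_sP_s(\bar\alpha)|_{s=0}=u+\bar\alpha U$ with $u=\partial_sa_s|_{s=0}\in\mathfrak{g}$ and $U=\partial_sL_s|_{s=0}\in\mathfrak{e}[1]$, which is the general form of an element of $\DD\mathfrak{m}$ as in \ceqref{liecmx5}. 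This shows that $\DD\mathfrak{m}$ is the tangent space of $\DD\mathsans{M}$ at the identity. To check that the Lie bracket is correctly reproduced, I would compute the infinitesimal commutator induced by the group commutator $PQP^{-1}Q^{-1}$ using the multiplication and inversion formulas \ceqref{liecmx2}, \ceqref{liecmx3}, expand to second order in the curve parameters, and match the result with the bracket formula \ceqref{liecmx6}.

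The only subtlety, which is more notational than substantive, is the careful bookkeeping of the shift: $\bar\alpha$ has degree $-1$ and $L\in\mathfrak{e}[1]$, so $\bar\alpha L\in\mathfrak{e}$ in degree zero, and the graded character of the derivations makes no interference in this abelian direction. Once this is recognized, the computation is entirely routine and the result is just a specialization of the general semidirect product fact. I expect no real obstacle beyond ensuring that the identifications \ceqref{liecm13}, \ceqref{liecm14} are used consistently and that Lie differentiation is applied in the $\mathsans{G}$--slot of $\mu\dot{}$ so as to yield $m$ rather than some other contraction.
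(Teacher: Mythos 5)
Your proposal is correct, and it actually contains two routes, only one of which coincides with the paper's. The paper proves the statement by the direct curve computation you sketch as your alternative: it writes a curve $\varGamma(t)(\bar\alpha)=\ee^{\bar\alpha\varXi(t)}\gamma(t)$ through the identity, differentiates at $t=0$ to obtain $\gamma^{-1}\dot\gamma|_{0}+\bar\alpha\,\dot\varXi|_{0}$, observes this is of the form \ceqref{liecmx5}, and notes surjectivity onto $\DD\mathfrak{m}$; notably, it stops there and does not verify that the bracket induced by the group commutator matches \ceqref{liecmx6}. Your primary route — reducing to the standard fact that $\Lie(\mathfrak{e}[1]\rtimes_{\mu\dot{}}\mathsans{G})=\mathfrak{e}[1]\rtimes_{m}\mathfrak{g}$ via the factorizations \ceqref{liecm13}, \ceqref{liecm14} and the identification $\dot{}\mu\dot{}=m$ from \ceqref{liecm10} — is genuinely different and arguably more complete, since the semidirect-product fact delivers the bracket identification for free, whereas the paper's argument only exhibits $\DD\mathfrak{m}$ as the tangent space at the identity. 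The one quibble is your degree bookkeeping aside: the paper's convention (cf.\ the remark in the cross-modality subsection) is that the variable $\bar\alpha\in\mathbb{R}[-1]$ carries degree $+1$ while $L\in\mathfrak{e}[1]$ carries degree $-1$; your sign assignment to $\bar\alpha$ is opposite, but your conclusion that $\bar\alpha L$ sits in degree zero is the correct and relevant one, so nothing in the argument is affected.
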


\begin{proof}
A curve in $\DD\mathsans{M}$ is a smooth map $\varGamma:\mathbb{R}\rightarrow\DD\mathsans{M}$ 
such that $\varGamma(0)=1_{\mathsans{E}\rtimes_\mu\mathsans{G}}$. By \ceqref{liecmx1}, $\varGamma$
is of the form 
\begin{equation}
\varGamma(t)(\bar\alpha)=\ee^{\bar\alpha\varXi(t)}\gamma(t)
\label{}
\end{equation}
for smooth maps $\gamma:\mathbb{R}\rightarrow\mathsans{G}$ with $\gamma(0)=1_{\mathsans{G}}$ and 
$\varXi:\mathbb{R}\rightarrow\mathfrak{e}[1]$ with $\varXi(0)=0$. Differentiating $\varGamma$
at $t=0$, one finds 
\begin{equation}
Y(\bar\alpha):=\varGamma(t)^{-1}\frac{d\varGamma(t)}{dt}(\bar\alpha)\bigg|_{t=0}
=\gamma(t)^{-1}\frac{d\gamma(t)}{dt}\bigg|_{t=0}+\bar\alpha\frac{d\varXi(t)}{dt}\bigg|_{t=0}.
\label{}
\end{equation}
So, $Y\in\DD\mathfrak{m}$, as it is of the form \ceqref{liecmx5}. Moreover, every element $Y\in\DD\mathfrak{m}$ 
can be obtained in this way by a suitable choice of the maps $\gamma$ and $\varXi$.
\end{proof}

\noindent 
An explicit expression of the adjoint action of $\DD\mathsans{M}$ on $\DD\mathfrak{m}$
is available. 

\begin{prop}
For $P\in\DD\mathsans{M}$ and $Y\in\DD\mathfrak{m}$ respectively
of the form \ceqref{liecmx1} and \ceqref{liecmx5}, one has
\begin{align}
&\Ad P(Y)(\bar\alpha)
=\Ad a(u)+\bar\alpha(\mu{}\dot{}(a,U)-{}\dot{}\mu{}\dot{}(\Ad a(u),L)),
\vphantom{\Big]}
\label{liecmy5}
\\
&\Ad P^{-1}(Y)(\bar\alpha)=\Ad a^{-1}(u)+\bar\alpha\mu{}\dot{}(a^{-1},U+{}\dot{}\mu{}\dot{}(u,L)).
\vphantom{\Big]}
\label{liecmy6}
\end{align}
\end{prop}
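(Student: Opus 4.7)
The plan is to compute $\Ad P(Y)$ directly from its definition $\Ad P(Y)=\frac{d}{dt}\big|_{t=0}P\varGamma(t)P^{-1}$, where $\varGamma$ is a suitable curve in $\DD\mathsans{M}$ representing $Y$. Following the construction in the proof of the preceding proposition, I would take
\begin{equation}
\varGamma(t)(\bar\alpha)=\ee^{\bar\alpha tU}\ee^{tu},
\end{equation}
which is manifestly an element of $\DD\mathsans{M}$ of the form \ceqref{liecmx1} (with $\gamma(t)=\ee^{tu}$ and $\varXi(t)=tU$), satisfies $\varGamma(0)=1$ and has $\frac{d\varGamma}{dt}\big|_{t=0}=Y$ as an element of $\DD\mathfrak{m}\simeq\mathfrak{e}[1]\rtimes_m\mathfrak{g}$.

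The next step is to apply the multiplication formula \ceqref{liecmx2} twice (and the inverse formula \ceqref{liecmx3}) to obtain the factorized form of $P\varGamma(t)P^{-1}$. A short calculation, combined with the group morphism property $\mu\dot{}(gh,X)=\mu\dot{}(g,\mu\dot{}(h,X))$ applied to simplify the nested term, yields
\begin{equation}
P\varGamma(t)P^{-1}(\bar\alpha)=\ee^{\bar\alpha\left(L+t\mu\dot{}(a,U)-\mu\dot{}(a\ee^{tu}a^{-1},L)\right)}\,a\ee^{tu}a^{-1}.
\end{equation}
At $t=0$ this reduces to the identity (the $\mathfrak{e}[1]$-exponent collapses by $\mu\dot{}(aa^{-1},L)=L$), confirming consistency. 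Differentiating at $t=0$, the $\mathsans{G}$-part contributes $\Ad a(u)$, while for the $\mathfrak{e}[1]$-part I would use the chain rule $\frac{d}{dt}\big|_{t=0}\mu\dot{}(a\ee^{tu}a^{-1},L)=\dot{}\mu\dot{}(\Ad a(u),L)$, which is the key place where the second Lie differential $\dot{}\mu\dot{}$ enters. Assembling the two contributions produces \ceqref{liecmy5}.

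For \ceqref{liecmy6} I would repeat the same procedure with $P$ and $P^{-1}$ interchanged. In this case, when computing $\frac{d}{dt}\big|_{t=0}\mu\dot{}(a^{-1}\ee^{tu},L)$, the curve in $\mathsans{G}$ does not pass through the identity, so I would first factorize $\mu\dot{}(a^{-1}\ee^{tu},L)=\mu\dot{}(a^{-1},\mu\dot{}(\ee^{tu},L))$ and only then differentiate, producing $\mu\dot{}(a^{-1},\dot{}\mu\dot{}(u,L))$; linearity of $\mu\dot{}$ in the second argument then combines this with $\mu\dot{}(a^{-1},U)$ to give the compact expression $\mu\dot{}(a^{-1},U+\dot{}\mu\dot{}(u,L))$ appearing in \ceqref{liecmy6}.

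The only genuine obstacle is keeping the two Lie differentials $\dot{}\mu\dot{}$ and $\mu\dot{}$ straight and making sure the chain rule is applied to the correct slot; this is bookkeeping rather than conceptual, but the graded nature of $L,U\in\mathfrak{e}[1]$ means some signs have to be tracked. Beyond that, the argument is purely a direct computation in the semidirect product $\mathfrak{e}[1]\rtimes_{\mu\dot{}}\mathsans{G}$ using the formulas already established in \ceqref{liecmx2}--\ceqref{liecmx3} and the basic identities of Lie differentiation collected in the cited appendix.
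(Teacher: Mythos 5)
Your proposal is correct and follows essentially the same route as the paper: the paper sets $D(\bar\alpha)=\ee^{\bar\alpha U}\ee^{u}$, computes $PDP^{-1}$ and $P^{-1}DP$ exactly via \ceqref{liecmx2}, \ceqref{liecmx3}, and then linearizes in $u$, $U$, which is precisely your curve $\varGamma(t)(\bar\alpha)=\ee^{\bar\alpha tU}\ee^{tu}$ differentiated at $t=0$. Your intermediate conjugation formulas and the identification of the $t$-derivatives with $\dot{}\mu\dot{}(\Ad a(u),L)$ and $\mu\dot{}(a^{-1},\dot{}\mu\dot{}(u,L))$ match the paper's computation.
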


\begin{proof} Let $D\in\DD\mathsans{M}$ be given by 
\begin{equation}
D(\bar\alpha)=\ee^{\bar\alpha U}\ee^u.
\label{liecmy7}
\end{equation}
Using the identities \ceqref{liecmx2}, \ceqref{liecmx3}, one finds
\begin{align}
&PDP^{-1}(\bar\alpha) %\ee^{\bar\alpha L}a \ee^{\bar\alpha U}\ee^u(\ee^{\bar\alpha L}a)^{-1}
=\ee^{\bar\alpha(\mu{}\dot{}(a,U)+L-\mu{}\dot{}(\ee^{\Ad a(u)},L))}\ee^{\Ad a(u)},
\vphantom{\Big]}
\label{liecmy8}
\\
&P^{-1}DP(\bar\alpha)    %(\ee^{\bar\alpha L}a)^{-1}\ee^{\bar\alpha U}\ee^u\ee^{\bar\alpha L}a
=\ee^{\bar\alpha\mu{}\dot{}(a^{-1},\,U-L+\mu{}\dot{}(\ee^u,L))}\ee^{\Ad a^{-1}(u)}.
\vphantom{\Big]}
\label{liecmy9}
\end{align}
Linearizing these relations with respect to $u$, $U$, one obtains \ceqref{liecmy5}, \ceqref{liecmy6} immediately. 
\end{proof}

Let $\mathsans{M}$, $\mathsans{M}'$ be Lie group crossed modules and 
$\mathfrak{m}$, $\mathfrak{m}'$ be their associated Lie algebra crossed modules. 

\begin{prop}
If $\beta:\mathsans{M}'\rightarrow\mathsans{M}$ is a Lie group crossed module \pagebreak 
morphism and $\dot\beta:\mathfrak{m}'\rightarrow\mathfrak{m}$ 
is its associated Lie algebra crossed module morphism, then the Lie differential %$\dot{\DD}\beta$
of the derived Lie group morphism $\DD\beta$ is the derived Lie algebra morphism $\DD\dot\beta$, that is 
\hphantom{xxxxxxxxxxxxxxxx}
\begin{equation}
\dot{\DD}\beta=\DD\dot\beta.
\label{liecmy10}
\end{equation}
\end{prop}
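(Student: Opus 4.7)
The strategy is to reduce the identity to a componentwise check using the explicit factorizations of $\DD\beta$ and $\DD\dot\beta$ given in \ceqref{liecm13/1} and \ceqref{liecm14/1}. With respect to the isomorphism $\DD\mathsans{M}\simeq\mathfrak{e}[1]\rtimes_{\mu\dot{}}\mathsans{G}$, one has $\DD\beta=\dot\varPhi\times\phi$; with respect to the isomorphism $\DD\mathfrak{m}\simeq\mathfrak{e}[1]\rtimes_m\mathfrak{g}$, one has $\DD\dot\beta=H\times h=\dot\varPhi\times\dot\phi$, since by definition $H=\dot\varPhi$ and $h=\dot\phi$. The claim thus amounts to showing $\dot{\DD}\beta=\dot\varPhi\times\dot\phi$.

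First, I would compute $\dot{\DD}\beta$ by the curve method used to prove that $\DD\mathfrak{m}$ is the Lie algebra of $\DD\mathsans{M}$. A curve $\varGamma'$ in $\DD\mathsans{M}'$ through the identity can be written in the form $\varGamma'(t)(\bar\alpha)=\ee^{\bar\alpha\varXi'(t)}\gamma'(t)$ with $\gamma'(0)=1_{\mathsans{G}'}$ and $\varXi'(0)=0$, whose derivative at $t=0$ yields an arbitrary element $Y'\in\DD\mathfrak{m}'$ of the form $Y'(\bar\alpha)=u'+\bar\alpha U'$. Pushing the curve forward by $\DD\beta$, the explicit form \ceqref{liecm13/1} gives
\begin{equation}
\DD\beta(\varGamma'(t))(\bar\alpha)=\ee^{\bar\alpha\dot\varPhi(\varXi'(t))}\phi(\gamma'(t)).
\label{plan1}
\end{equation}

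Next, I would differentiate \ceqref{plan1} at $t=0$. The key observation is that $\dot\varPhi:\mathfrak{e}'[1]\to\mathfrak{e}[1]$ is linear, so by the chain rule its time derivative commutes with $\dot\varPhi$. Using the curve-based formula for the Lie differential on $\DD\mathsans{M}$, one obtains
\begin{equation}
\dot{\DD}\beta(Y')(\bar\alpha)=\dot\phi(u')+\bar\alpha\dot\varPhi(U')=h(u')+\bar\alpha H(U'),
\label{plan2}
\end{equation}
which by \ceqref{liecm14/1} is precisely $\DD\dot\beta(Y')(\bar\alpha)$, establishing the identity.

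The argument is essentially formal once the factorizations are in place, and the only real obstacle is book-keeping: one must carefully distinguish the two distinct uses of the symbol $\dot{}$, namely Lie differentiation of Lie group morphisms on one hand and the functorial association $\beta\mapsto\dot\beta$ of Lie algebra crossed module morphisms on the other. Alternatively, one could bypass the curve computation entirely by noting that in the semidirect product factorization the $\mathfrak{e}[1]$-component of $\DD\beta$ is already a morphism of Abelian graded Lie groups induced by a linear map, whose Lie differential is that same linear map, while the $\mathsans{G}$-component differentiates to $\dot\phi$; functoriality of Lie differentiation on the product then gives $\dot{\DD}\beta=\dot\varPhi\times\dot\phi$ directly.
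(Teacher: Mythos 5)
Your proposal is correct and follows essentially the same route as the paper: both arguments come down to the factorizations \ceqref{liecm13/1} and \ceqref{liecm14/1} together with the identifications $h=\dot\phi$, $H=\dot\varPhi$ from \ceqref{liecm11}, \ceqref{liecm12}. The paper simply asserts $\dot{\DD}\beta=\dot\varPhi\times\dot\phi$ from the product structure (the shortcut you describe in your final paragraph), whereas your curve computation makes that step explicit; the extra detail is harmless and consistent with the curve method the paper itself uses to identify $\DD\mathfrak{m}$ as the Lie algebra of $\DD\mathsans{M}$.
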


\begin{proof}
Let $\beta=(\varPhi,\phi)$. 
From \ceqref{liecm13/1}, it follows that $\dot{\DD}\beta=\dot\varPhi\times\dot\phi$.
From \ceqref{liecm14/1} with $p=\dot\beta$ and \ceqref{liecm11}, \ceqref{liecm12}, we have $\DD\dot\beta=\dot\varPhi\times\dot\phi$ too.
Identity \ceqref{liecmy10} so holds true. 
\end{proof}

%For a Lie group crossed mo\-dule morphism $\beta:\mathsans{M}'\rightarrow\mathsans{M}$ 
%with associated Lie algebra crossed module morphism $p:\mathfrak{m}'\rightarrow\mathfrak{m}$,
%the Lie differential of the Lie group morphisms \raggedbottom $\DD\beta:\DD\mathsans{M}'
%\rightarrow\DD\mathsans{M}$ is the Lie algebra 
%morphism $\DD p:\DD\mathfrak{m}'\rightarrow\DD\mathfrak{m}$. 
%Thus, Lie differentiation intertwines between the Lie group and Lie algebra modality of 
%the functor $\DD$. (This can be expressed through a commuting diagram of functors.)

For a Lie group crossed module $\mathsans{M}=(\mathsans{E},\mathsans{G},\tau,\mu)$, 
it is possible to construct appropriate Maurer--Cartan elements of the derived Lie group 
$\DD\mathsans{M}$ valued in the derived Lie algebra $\DD\mathfrak{m}$. % or a degree shift of it. 
Let $M$ denote a $\DD\mathsans{M}$ variable. 

\begin{defi}\label{defi:primmc}
The Maurer-Cartan forms of $\DD\mathsans{M}$ are the $1$--forms 
$dMM^{-1}$, $M^{-1}dM\in\Omega^1(\DD\mathsans{M})\otimes\DD\mathfrak{m}$, 
where $d$ is the de Rham differential of $\DD\mathsans{M}$.
\end{defi}

\noindent
Explicit formulae of the  Maurer-Cartan forms can be gotten by expressing $M$ as 
\begin{equation}
M(\bar\alpha)=\ee^{\bar\alpha E}g,\quad\bar\alpha\in\mathbb{R}[-1], 
\label{liecmy0}
\end{equation}
where $g$ and $E$ are a $\mathsans{G}$ and an $\mathfrak{e}[1]$ variable, respectively, 
in conformity with \ceqref{liecmx1},

\begin{prop}\label{prop:mcexpr}
The  Maurer-Cartan forms of $\DD\mathsans{M}$ are given by
\begin{align}
&dMM^{-1}(\bar\alpha)=dgg^{-1}-\bar\alpha(dE-\dot{}\mu\dot{}\,(dgg^{-1},E)),
\vphantom{\Big]}
\label{liecmy1}
\\
&M^{-1}dM(\bar\alpha)=g^{-1}dg-\bar\alpha\mu\dot{}\,(g^{-1},dE).
\vphantom{\Big]}
\label{liecmy2}
\end{align}
\end{prop}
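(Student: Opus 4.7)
The plan is a direct calculation based on the explicit semidirect product realization $\DD\mathsans{M} \simeq \mathfrak{e}[1] \rtimes_{\mu\dot{}} \mathsans{G}$ of \ceqref{liecm13} combined with the multiplication and inversion formulas \ceqref{liecmx2}--\ceqref{liecmx3}. To obtain $dM M^{-1}$ and $M^{-1} dM$ concretely at the point $M(\bar\alpha) = e^{\bar\alpha E} g$, I would differentiate along a generic curve $M(t)(\bar\alpha) = e^{\bar\alpha E(t)} g(t)$ with $M(0) = M$, $\dot E(0) = dE$, $\dot g(0) = dg$, compute the group products $M(t) M^{-1}$ and $M^{-1} M(t)$ using the formulas above, and then take derivatives at $t = 0$ to land in $\DD\mathfrak{m}$.

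From \ceqref{liecmx3}, $M^{-1}(\bar\alpha) = e^{-\bar\alpha \mu\dot{}(g^{-1}, E)} g^{-1}$. Applying \ceqref{liecmx2} and using the action identity $\mu\dot{}(a, \mu\dot{}(b, X)) = \mu\dot{}(ab, X)$, which follows at once from $\mu$ being a $\mathsans{G}$--action together with the functoriality of Lie differentiation, the product $M(t) M^{-1}$ has $\mathfrak{e}[1]$--component $E(t) - \mu\dot{}(g(t) g^{-1}, E)$ and $\mathsans{G}$--component $g(t) g^{-1}$, whereas $M^{-1} M(t)$ has $\mathfrak{e}[1]$--component $\mu\dot{}(g^{-1}, E(t) - E)$ and $\mathsans{G}$--component $g^{-1} g(t)$. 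Differentiating at $t = 0$, the $\mathsans{G}$--components yield $dg \cdot g^{-1}$ and $g^{-1} \cdot dg$ respectively, while the chain rule applied to $\mu\dot{}(h, E)$ at $h = 1$ via the definition of the double differential $\dot{}\mu\dot{}$ (cf. app. \cref{app:ident}) produces $\mathfrak{e}[1]$--components $dE - \dot{}\mu\dot{}(dg \, g^{-1}, E)$ and $\mu\dot{}(g^{-1}, dE)$. Re--expressing the resulting tangent vectors at the identity through the parametrization \ceqref{liecmx5} of $\DD\mathfrak{m}$ as internal functions of $\bar\alpha$ then yields formulas \ceqref{liecmy1} and \ceqref{liecmy2}.

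The main technical subtlety is the careful bookkeeping of graded signs that relate the internal function picture $e^{\bar\alpha E} g$ to the semidirect product picture $(E, g) \in \mathfrak{e}[1] \rtimes \mathsans{G}$. In particular, the Koszul sign arising when the degree $+1$ differential $d$ commutes past the degree $-1$ book--keeping parameter $\bar\alpha$ governs the overall sign of the $\bar\alpha$--linear parts of the Maurer--Cartan forms and must be applied consistently throughout. Once this sign convention is fixed, the argument is purely mechanical and rests entirely on the group--theoretic results already established in subsect. \cref{subsec:dercm}.
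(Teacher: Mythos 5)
Your route is genuinely different from the paper's and is sound in substance. The paper's proof is a one-liner: it applies the standard variational identities $\delta\ee^{\bar\alpha E}\ee^{-\bar\alpha E}=\frac{\exp(\bar\alpha\ad E)-1}{\bar\alpha\ad E}\delta(\bar\alpha E)$ and $\ee^{-\bar\alpha E}\delta\ee^{\bar\alpha E}=\frac{1-\exp(-\bar\alpha\ad E)}{\bar\alpha\ad E}\delta(\bar\alpha E)$ with $\delta=d$, truncating the series by the nilpotence of $\bar\alpha$; the Leibniz rule on $M=\ee^{\bar\alpha E}g$ then does the rest. You instead exploit the already-established group law \ceqref{liecmx2}--\ceqref{liecmx3} to compute $M(t)M^{-1}$ and $M^{-1}M(t)$ in the semidirect-product coordinates and differentiate at $t=0$; this correctly reproduces the $\mathsans{G}$--components $dgg^{-1}$, $g^{-1}dg$ and the $\mathfrak{e}[1]$--contents $dE-\dot{}\mu\dot{}\,(dgg^{-1},E)$ and $\mu\dot{}\,(g^{-1},dE)$, and it has the virtue of never invoking the Baker--Campbell--Hausdorff-type expansion, only the action axioms and the definition \ceqref{ident4} of $\dot{}\mu\dot{}$. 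What your argument buys in elementarity it pays for at exactly one point: the overall minus sign multiplying the $\bar\alpha$--linear parts of \ceqref{liecmy1}, \ceqref{liecmy2}. A curve $M(t)$ with an \emph{even} parameter $t$ cannot see that sign; differentiating $\ee^{\bar\alpha F(t)}$ at $F(0)=0$ naively yields $+\bar\alpha\dot F(0)$. You correctly name the Koszul sign as the culprit, but you only assert that it "must be applied consistently" rather than locating where it enters: it enters because $d$ has degree $1$ and $\bar\alpha$ is odd (of degree $+1$ in the paper's conventions, not $-1$ as you state -- harmless for parity), so $d(\bar\alpha E)=-\bar\alpha\, dE$, which is precisely the content of the paper's variational identities after truncation. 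To make your proof airtight you should either apply $d$ directly to $M(\bar\alpha)=\ee^{\bar\alpha E}g$ with this rule, or replace the even curve parameter by an honest odd variation; as written, the sign is imported rather than derived.
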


\begin{proof}
%The above identities 
\ceqref{liecmy1}, \ceqref{liecmy2}
follow straightforwardly from the well--known variational identities %\linebreak 
$\delta\ee^{\bar\alpha E}\ee^{-\bar\alpha E}=\frac{\exp(\bar\alpha\ad E)-1}{\bar\alpha\ad E}\delta(\bar\alpha E)$,
$\ee^{-\bar\alpha E}\delta\ee^{\bar\alpha E}=\frac{1-\exp(-\bar\alpha\ad E)}{\bar\alpha\ad E}\delta(\bar\alpha E)$ %\pagebreak 
with $\delta=d$ upon taking the nilpotence of $\bar\alpha$ into account. 
\end{proof}

%\vfil\eject 

\subsection{\textcolor{blue}{\sffamily Derived Lie group and algebra valued internal functions}}\label{subsec:mapder}

In a graded geometric framework such as ours, %structure group, 
$2$--connections and $1$--gauge transformations of a given principal $2$--bundle,
which we shall study in part II, are instances of derived Lie 
group and algebra valued internal functions. In this subsection, we %introduce and 
study in detail the spaces of such type of functions. 

We begin with a few basic remarks. 
There exist a few definitions of infinite dimensional Lie groups and algebras and group 
and algebra morphisms and Lie differentiation, e. g.  Fr\'echet, diffeological etc.
In a typical infinite dimensional Lie theoretic setting, 
such notions coincide in their algebraic content but differ in their 
differential geometric one (unless the dimension is actually finite). For this reason, as long as 
only the algebraic structure is relevant and the smooth structure plays no role, 
we are free to leave the definition adopted unspecified. In such cases, we shall speak of {\it virtual} 
infinite dimensional Lie groups and algebras and Lie group and algebra morphisms and 
Lie differentiation and remain safely in the realm of pure algebra and formal analysis. 
Any statement of virtual infinite dimensional Lie theory turns into a statement of a specific kind  
of it provided the underlying geometric data satisfy suitable conditions.

%We recall a few basic fact of ordinary differential geometry holding under suitable assumptions on 
%the underlying geometric data
We illustrate the above line of thought by reviewing very briefly the prototypical examples occurring in ordinary
infinite dimensional Lie theory. Let $M$ be a manifold. 
The space of functions from $M$ to a Lie group $\mathsans{L}$, $\Map(M,\mathsans{L})$, with the multiplication 
and inversion induced pointwise by those of $\mathsans{L}$ is a virtual infinite dimensional Lie group.  
Furthermore, any Lie group morphism $\lambda:\mathsans{L}'\rightarrow\mathsans{L}$ induces by left composition a
virtual infinite dimensional Lie group morphism $\Map(M,\lambda):\Map(M,\mathsans{L}')\rightarrow\Map(M,\mathsans{L})$.
Similarly, the space of functions from $M$ to a Lie algebra $\mathfrak{l}$, $\Map(M,\mathfrak{l})$, with 
the bracket induced pointwise by that of $\mathfrak{l}$ is a virtual infinite dimensional Lie algebra. 
Further, any Lie algebra morphism $l:\mathfrak{l}'\rightarrow\mathfrak{l}$ induces by left composition a virtual 
infinite dimensional Lie algebra morphism $\Map(M,l):\Map(M,\mathfrak{l}')\rightarrow\Map(M,\mathfrak{l})$.
When $\mathfrak{l}$ is the Lie algebra of $\mathsans{L}$, moreover, 
$\Map(M,\mathfrak{l})$ is the virtual Lie algebra of $\Map(M,\mathsans{L})$.  
The above virtual assertions %about the Lie valued function spaces 
become actual ones 
once a particular infinite dimensional Lie theoretic framework is adopted, 
provided $M$ meets suitable restrictions. %which apply depending on the nature of the geometric data. 
For instance, in the Fr\'echet framework, those properties are verified if $M$ is compact. 

To the best of our knowledge, a suitable extension of the above framework 
to a graded differential geometric setting such as the one studied in this
subsection has not been worked out yet. Fortunately, we shall not need it. 
The purely algebraic notions of virtual infinite dimensional 
graded Lie group and algebra and group and algebra 
morphism, which are the obvious extensions to the graded setting 
of the corresponding notions of the ungraded one described above, will suffice. 

Below, we consider a fixed graded manifold $N$. The space of internal functions from $N$
to a graded Lie group $\mathsans{P}$, $\iMap(N,\mathsans{P})$, with the multiplication and inversion
induced pointwise by those of $\mathsans{P}$ is a virtual infinite dimensional 
graded Lie group. Furthermore, a virtual infinite dimensional 
graded Lie group morphism $\iMap(N,\kappa):\iMap(N,\mathsans{P}')\rightarrow\iMap(N,\mathsans{P})$
is determined by any graded Lie group morphism $\kappa:\mathsans{P}'\rightarrow\mathsans{P}$.
Similarly, the space of internal functions from $N$ 
to a graded Lie algebra $\mathfrak{p}$, $\iMap(N,\mathfrak{p})$, with the bracket induced pointwise by 
that of $\mathfrak{p}$ is a virtual infinite dimensional graded Lie algebra. 
A virtual infinite dimensional graded Lie algebra morphism 
$\iMap(N,k):\iMap(N,\mathfrak{p}')\rightarrow\iMap(N,\mathfrak{p})$
is furthermore determined by any graded Lie algebra morphisms $k:\mathfrak{p}'\rightarrow\mathfrak{p}$.
Similar statements hold also under full degree extension of the graded Lie algebras and algebra morphisms
involved (cf. app. \cref{subsec:convnot}). 
When  $\mathfrak{p}$ is the Lie algebra of $\mathsans{P}$, moreover, 
$\iMap(N,\mathfrak{p})$ is the virtual Lie algebra of $\iMap(N,\mathsans{P})$.  

%\begin{prop}
%$\iMap(N,\DD\mathsans{M})$ is isomorphic to a subgroup of the infinite dimensional 
%virtual graded Lie group $\iMap(\mathbb{R}[1]\times N,\mathsans{E}\rtimes_\mu\mathsans{G})$. 
%\end{prop}

Let $\mathsans{M}=(\mathsans{E},\mathsans{G},\tau,\mu)$ be a Lie group crossed module 
and let $\DD\mathsans{M}$ be its derived Lie group 
(cf. subsect. \cref{subsec:dercm}).  %is a graded manifold with body $\mathsans{G}$. 
The internal function space $\iMap(N,\DD\mathsans{M})$ is then a virtual infinite dimensional graded Lie group. 
$\iMap(N,\DD\mathsans{M})$ is most appropriately described by noticing that it is isomorphic to a subgroup 
of the virtual %infinite dimensional graded 
Lie group $\iMap(\mathbb{R}[1]\times N,\mathsans{E}\rtimes_\mu\mathsans{G})$. 
Indeed, by \ceqref{liecmx1}, an element $F\in\iMap(N,\DD\mathsans{M})$ reduces to a pair 
of internal functions $m\in\iMap(N,G)$, $Q\in\iMap(N,\mathfrak{e}[1])$ defining  the internal function 
$F\in\iMap(\mathbb{R}[1]\times N,\mathsans{E}\rtimes_\mu\mathsans{G})$, denoted here by the same symbol for simplicity, 
given by \hphantom{xxxxxxxxxxxxxxxx}
%F(\alpha)=(\ee^{\alpha Q},q), \quad \alpha\in\mathbb{R}[1]. %\vphantom{\bigg]}
\begin{equation}
F(\alpha)=\ee^{\alpha Q}m, \quad \alpha\in\mathbb{R}[1].
\label{dercm5/1}
\end{equation}
%with $\alpha\in\mathbb{R}[1]$. 
Moreover, on account of \ceqref{liecmx2}, \ceqref{liecmx3}, for any
$F,G\in\iMap(N,\DD\mathsans{M})$ such that $F(\alpha)=\ee^{\alpha Q}m$, 
$G(\alpha)=\ee^{\alpha R}n$, we have 
\begin{align}
&FG(\alpha)=\ee^{\alpha(Q+\mu{}\dot{}(m,R))}mn,
\vphantom{\Big]}
\label{liecmx2/1}
%\\
\end{align} 
\begin{align}
&F^{-1}(\alpha)=\ee^{-\alpha\mu{}\dot{}(m^{-1},Q)}m^{-1}.
\vphantom{\Big]}
%\nonumber %
\label{liecmx3/1}
\end{align} 
The structure of the elements as well as the group multiplication and inversion of $\iMap(N,\DD\mathsans{M})$ is hence 
formally the same as that of $\DD\mathsans{M}$ given in eqs. \ceqref{liecmx1} and \ceqref{liecmx2}, \ceqref{liecmx3}
except for an inversion of degrees. 

\begin{prop}
The virtual infinite dimensional graded Lie group isomorphism  
\begin{equation}
\iMap(N,\DD\mathsans{M})\simeq\iMap(N,\mathfrak{e}[1])\rtimes_{\iMap(N,\mu{}\dot{})}\iMap(N,G)
\label{dercm5/2}
\end{equation}
holds true. 
\end{prop}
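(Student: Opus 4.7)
The plan is to exhibit the isomorphism concretely by exploiting the parametrization already established in equation \ceqref{dercm5/1}, which identifies each element $F\in\iMap(N,\DD\mathsans{M})$ with a uniquely determined pair $(m,Q)\in\iMap(N,\mathsans{G})\times\iMap(N,\mathfrak{e}[1])$ via $F(\alpha)=\ee^{\alpha Q}m$. This correspondence is manifestly a bijection of sets, so the content of the proposition lies entirely in verifying that it intertwines the pointwise group structure on $\iMap(N,\DD\mathsans{M})$ inherited from $\DD\mathsans{M}$ with the semidirect product structure on $\iMap(N,\mathfrak{e}[1])\rtimes_{\iMap(N,\mu{}\dot{})}\iMap(N,\mathsans{G})$.

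First, I would define the candidate map $\Psi:\iMap(N,\DD\mathsans{M})\rightarrow\iMap(N,\mathfrak{e}[1])\rtimes_{\iMap(N,\mu{}\dot{})}\iMap(N,\mathsans{G})$ by $\Psi(F)=(Q,m)$, using the decomposition \ceqref{dercm5/1}. Uniqueness and existence of the decomposition (which follows formally from \ceqref{liecmx1} applied pointwise in $N$) yield that $\Psi$ is a bijection of underlying sets. Both sides carry a natural $\mathbb{Z}$--grading: on the source, it is the grading of $\iMap(N,\DD\mathsans{M})$ inherited from $\DD\mathsans{M}$; on the target, the product grading coming from $\iMap(N,\mathfrak{e}[1])$ and $\iMap(N,\mathsans{G})$. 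The bijection $\Psi$ is by construction grading--preserving.

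Next, I would verify the group morphism property by direct computation. For $F,G\in\iMap(N,\DD\mathsans{M})$ with $\Psi(F)=(Q,m)$ and $\Psi(G)=(R,n)$, equation \ceqref{liecmx2/1} gives $FG(\alpha)=\ee^{\alpha(Q+\mu{}\dot{}(m,R))}mn$, so $\Psi(FG)=(Q+\mu{}\dot{}(m,R),mn)$. On the other hand, the semidirect product structure on the target, defined pointwise from the $\iMap(N,\mathsans{G})$--action $\iMap(N,\mu{}\dot{})$ on $\iMap(N,\mathfrak{e}[1])$, gives precisely $(Q,m)\cdot(R,n)=(Q+\iMap(N,\mu{}\dot{})(m,R),mn)$, which agrees since $\iMap(N,\mu{}\dot{})$ acts by $(m,R)\mapsto\mu{}\dot{}(m,R)$ pointwise. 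Equation \ceqref{liecmx3/1} similarly gives $\Psi(F^{-1})=(-\mu{}\dot{}(m^{-1},Q),m^{-1})$, matching the semidirect product inverse of $(Q,m)$.

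No step poses genuine difficulty here: the result is essentially a formal consequence of the pointwise nature of the group operations together with the isomorphism $\DD\mathsans{M}\simeq\mathfrak{e}[1]\rtimes_{\mu{}\dot{}}\mathsans{G}$ of equation \ceqref{liecm13}. The only point requiring some care is ensuring that the compatibility conditions of the crossed module, which are invoked in establishing \ceqref{liecmx2/1} and \ceqref{liecmx3/1}, continue to apply after pointwise extension to $N$--parametrized internal functions; this is immediate since $\mu{}\dot{}$ and its defining identities are intrinsic to $\mathsans{M}$ and commute with evaluation at points of $N$. The claim is at the level of virtual infinite dimensional graded Lie groups, so no analytic issues regarding smoothness of infinite dimensional manifold structures need to be addressed, in line with the virtual framework set up earlier in the subsection.
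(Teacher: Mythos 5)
Your proposal is correct and amounts to the same argument the paper gives: the paper simply states that \ceqref{dercm5/2} follows from \ceqref{liecm13} by applying the functor $\iMap(N,-)$, and your explicit verification via the decomposition \ceqref{dercm5/1} and the pointwise multiplication and inversion laws \ceqref{liecmx2/1}, \ceqref{liecmx3/1} is precisely what that functorial one-liner unpacks to. Nothing is missing; you have merely written out the computation the paper leaves implicit.
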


\noindent
\ceqref{dercm5/2} is the function space analog of \ceqref{liecm13}.

\begin{proof}
\ceqref{dercm5/2} follows from \ceqref{liecm13} by  acting with the functor $\iMap(N,-)$. 
%\ceqref{dercm5/2} follows immediately from  \ceqref{liecmx2/1}, \ceqref{liecmx3/1}.
%relations \ceqref{liecmx2/1}, \ceqref{liecmx3/1}. 
\end{proof}

\begin{prop} With every Lie group crossed module morphism $\beta:\mathsans{M}'\rightarrow\mathsans{M}$
$=(\varPhi:\mathsans{E}'\rightarrow\mathsans{E},\phi:\mathsans{G}'\rightarrow\mathsans{G})$
there is associated a virtual infinite dimensional graded Lie group morphism 
$\iMap(N,\DD\beta):\iMap(N,\DD\mathsans{M}')\rightarrow\iMap(N,\DD\mathsans{M})$. With
respect to the factorization \ceqref{dercm5/2}, $\iMap(N,\DD\beta)$ reads as
\begin{equation}
\iMap(N,\DD\beta)=\iMap(N,\dot\varPhi)\times\iMap(N,\phi).  %\vphantom{\bigg]}
\label{dercm5/3}
\end{equation}
\end{prop}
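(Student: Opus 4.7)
The claim is essentially the functoriality of $\iMap(N,-)$ applied to the already--established derived Lie group morphism $\DD\beta$. My plan is therefore to reduce the statement to a direct composition of two known facts: the functoriality of $\DD$ on the category $\bfs{\mathrm{Lgcm}}$ (giving the morphism $\DD\beta=\dot\varPhi\times\phi:\DD\mathsans{M}'\rightarrow\DD\mathsans{M}$ via \ceqref{liecm13/1}) and the virtual functoriality of $\iMap(N,-)$ on the category of graded Lie groups (which, as recalled earlier in this subsection, sends any graded Lie group morphism $\kappa:\mathsans{P}'\rightarrow\mathsans{P}$ to a virtual infinite dimensional graded Lie group morphism $\iMap(N,\kappa)$ acting by left composition).

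First, I would observe that, by the earlier proposition, $\DD\beta$ is a well--defined graded Lie group morphism, so applying $\iMap(N,-)$ produces a virtual infinite dimensional graded Lie group morphism $\iMap(N,\DD\beta):\iMap(N,\DD\mathsans{M}')\rightarrow\iMap(N,\DD\mathsans{M})$, defined pointwise by $\iMap(N,\DD\beta)(F)=\DD\beta\circ F$ for $F\in\iMap(N,\DD\mathsans{M}')$. That this is indeed a group morphism is automatic from the functoriality of $\iMap(N,-)$ combined with $\DD\beta$ respecting the semidirect product structure.

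Next, to establish the factorized form \ceqref{dercm5/3}, I would parametrize an arbitrary element $F\in\iMap(N,\DD\mathsans{M}')$ according to the description \ceqref{dercm5/1}, writing $F(\alpha)=\ee^{\alpha Q}m$ with $m\in\iMap(N,\mathsans{G}')$ and $Q\in\iMap(N,\mathfrak{e}'[1])$. Using $\DD\beta=\dot\varPhi\times\phi$ pointwise, one has
\begin{equation}
(\DD\beta\circ F)(\alpha)=\DD\beta(\ee^{\alpha Q}m)=\ee^{\alpha\dot\varPhi(Q)}\phi(m),
\end{equation}
which is precisely the element of $\iMap(N,\DD\mathsans{M})$ obtained by applying $\iMap(N,\dot\varPhi)\times\iMap(N,\phi)$ to $(Q,m)$ under the isomorphism \ceqref{dercm5/2}. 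This confirms \ceqref{dercm5/3}.

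I do not anticipate any serious obstacle: the statement is really a formal corollary of the previous proposition and of the discussion on virtual infinite dimensional Lie theory preceding it. The only mildly delicate point is making sure that, under the identification $\iMap(N,\DD\mathsans{M})\simeq\iMap(N,\mathfrak{e}[1])\rtimes_{\iMap(N,\mu\dot{})}\iMap(N,\mathsans{G})$, the pointwise action of $\dot\varPhi\times\phi$ genuinely agrees with the product morphism $\iMap(N,\dot\varPhi)\times\iMap(N,\phi)$; this is nothing but the compatibility of $\iMap(N,-)$ with direct products of graded manifolds and with the semidirect product structure, which in turn follows from \ceqref{liecm4} upon passing to Lie differentials as recorded in \ceqref{liecm11}--\ceqref{liecm12}.
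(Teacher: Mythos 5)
Your proposal is correct and follows essentially the same route as the paper, which simply applies the functor $\iMap(N,-)$ to the factorization \ceqref{liecm13/1}; your explicit pointwise computation on elements of the form \ceqref{dercm5/1} merely spells out what the paper leaves implicit.
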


\noindent
This is the function space analog of \ceqref{liecm13/1}. 

\begin{proof}
\ceqref{dercm5/3} follows immediately from \ceqref{liecm13/1} through the action of the functor $\iMap(N,-)$ again. 
%\ceqref{dercm5/3} follows readily from  \ceqref{liecmx2/1}, \ceqref{liecmx3/1} again and 
%$\phi$, $\varPhi$ being group morphisms with $\varPhi$ satisfying \ceqref{liecm4}. 
%together with \ceqref{liecmx2/1}, \ceqref{liecmx3/1} again. 
%%and $\phi$, $\varPhi$ being group morphisms with $\varPhi$ satisfying \ceqref{liecm4}. 
\end{proof}

%\begin{prop}
%$\iMap(N,\DD\mathfrak{m})$ is isomorphic to a subalgebra of the infinite dimensional 
%virtual graded Lie algebra $\iMap(\mathbb{R}[1]\times N,\mathfrak{e}\rtimes_m\mathfrak{g})$. 
%\end{prop}

Let $\mathfrak{m}=(\mathfrak{e},\mathfrak{g},t,m)$ be a Lie algebra crossed module 
and let $\DD\mathfrak{m}$ be its derived Lie algebra 
(cf. subsect. \cref{subsec:dercm}). The internal function space
$\iMap(N,\DD\mathfrak{m})$ is then a virtual infinite dimensional graded Lie algebra. 
$\iMap(N,\DD\mathfrak{m})$ is best described by observing that it is isomorphic to a subalgebra of the
virtual Lie algebra $\iMap(\mathbb{R}[1]\times N,\mathfrak{e}\rtimes_m\mathfrak{g})$. 
Indeed, by % on account of 
\ceqref{liecmx5}, an element $S\in\iMap(N,\DD\mathfrak{m})$ can be decomposed in 
a pair of internal functions $j\in\iMap(N,\mathfrak{g})$, $J\in\iMap(N,\mathfrak{e}[1])$, 
which together in turn define the internal function $S\in\iMap(\mathbb{R}[1]\times N,\mathfrak{e}\rtimes_m\mathfrak{g})$, 
denoted by the same symbol, given by \pagebreak %\hphantom{xxxxxxxxxxxxxxxx}
\begin{equation}
S(\alpha)=j+\alpha J, \quad \alpha\in\mathbb{R}[1]. 
\label{dercm7/1}
\end{equation}
%with $\alpha\in\mathbb{R}[1]$. 
Further, on account of \ceqref{liecmx6}, for any 
$S,T\in\iMap(N,\DD\mathfrak{m})$ with $S(\alpha)=j+\alpha J$, 
$T(\alpha)=k+\alpha K$, we have 
\begin{equation}
[S,T](\alpha)=[j,k]+\alpha(m(j,K)-m(k,J)).
\label{liecmx6/1}
\end{equation}
Again, so, %As for the derived Lie group of a Lie group crossed module considered earlier, 
the structure of the elements and the Lie bracket of 
$\iMap(N,\DD\mathfrak{m})$ is formally the same as that of $\DD\mathfrak{m}$ given in eqs.  \ceqref{liecmx5} 
and \ceqref{liecmx6} except for an inversion of degrees. 

\begin{prop}
The virtual infinite dimensional graded Lie algebra isomorphism  
\begin{equation}
\iMap(N,\DD\mathfrak{m})\simeq\iMap(N,\mathfrak{e}[1])\rtimes_{\iMap(N,m)}\iMap(N,\mathfrak{g})
\label{dercm6/2}
\end{equation} 
holds true. 
\end{prop}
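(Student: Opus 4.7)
The statement is the Lie algebra analog of the Lie group isomorphism \ceqref{dercm5/2}, which itself was obtained by applying the functor $\iMap(N,-)$ to \ceqref{liecm13}. My plan is to proceed in exactly the same spirit, but now at the Lie algebra level, using \ceqref{liecm14} as the starting point.

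First, I would recall the isomorphism \ceqref{liecm14}, namely $\DD\mathfrak{m}\simeq\mathfrak{e}[1]\rtimes_m\mathfrak{g}$, which identifies an element $Y\in\DD\mathfrak{m}$ of the form \ceqref{liecmx5} with the pair $(U,u)\in\mathfrak{e}[1]\times\mathfrak{g}$, the bracket being the semidirect product bracket encoded in \ceqref{liecmx6}. Applying the functor $\iMap(N,-)$ to both sides yields a virtual infinite dimensional graded isomorphism of the underlying internal function spaces
\begin{equation}
\iMap(N,\DD\mathfrak{m})\simeq\iMap(N,\mathfrak{e}[1]\rtimes_m\mathfrak{g}).
\label{propprf1}
\end{equation}
Concretely, this sends $S\in\iMap(N,\DD\mathfrak{m})$ of the form \ceqref{dercm7/1} to the pair $(J,j)\in\iMap(N,\mathfrak{e}[1])\times\iMap(N,\mathfrak{g})$.

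Next I would identify the right hand side of \ceqref{propprf1} with the semidirect product indicated in \ceqref{dercm6/2}. Since the bracket on $\iMap(N,\DD\mathfrak{m})$ is induced pointwise from that of $\DD\mathfrak{m}$, formula \ceqref{liecmx6/1} gives exactly the semidirect product bracket between $\iMap(N,\mathfrak{e}[1])$ and $\iMap(N,\mathfrak{g})$ with respect to the pointwise-defined action $\iMap(N,m)$. A brief compatibility check confirms that $\iMap(N,m)\colon\iMap(N,\mathfrak{g})\times\iMap(N,\mathfrak{e}[1])\to\iMap(N,\mathfrak{e}[1])$ is a well defined action of virtual infinite dimensional graded Lie algebras, which follows from functoriality of $\iMap(N,-)$ applied to $m$.

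Combining these two steps yields the desired isomorphism \ceqref{dercm6/2}. No genuine obstacle is expected, since everything proceeds by functoriality exactly as in the companion Lie group statement; the only point meriting care is to verify that the semidirect product structure of $\mathfrak{e}[1]\rtimes_m\mathfrak{g}$ is transported through $\iMap(N,-)$ into the semidirect product $\iMap(N,\mathfrak{e}[1])\rtimes_{\iMap(N,m)}\iMap(N,\mathfrak{g})$, which amounts to checking \ceqref{liecmx6/1} against the pointwise extension of \ceqref{liecmx6}. This is a purely algebraic and essentially notational verification.
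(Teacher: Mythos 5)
Your proposal is correct and follows essentially the same route as the paper, whose proof is precisely that \ceqref{dercm6/2} is obtained from \ceqref{liecm14} by acting with the functor $\iMap(N,-)$; your additional verification via \ceqref{liecmx6/1} that the semidirect product bracket is transported pointwise is a harmless elaboration of the same argument.
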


\noindent
\ceqref{dercm6/2} is the function space analog of \ceqref{liecm14}. 

\begin{proof}
\ceqref{dercm6/2} follows from \ceqref{liecm14} by acting with the functor $\iMap(N,-)$. 
\end{proof}

\begin{prop} 
With every Lie algebra crossed module morphism $p:\mathfrak{m}'\rightarrow\mathfrak{m}$
$=(H:\mathfrak{e}'\rightarrow\mathfrak{e},h:\mathfrak{g}'\rightarrow\mathfrak{g})$ 
there is associated a virtual infinite dimensional graded Lie algebra morphism 
$\iMap(N,\DD p):\iMap(N,\DD\mathfrak{m}')\rightarrow\iMap(N,\DD\mathfrak{m})$. 
With respect to the factorization \ceqref{dercm6/2}, $\iMap(N,\DD p)$ reads explicitly as 
\begin{equation}
\iMap(N,\DD p)=\iMap(N,H)\times\iMap(N,h).
\label{dercm6/3}
\end{equation} 
\end{prop}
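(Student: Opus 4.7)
The plan is to mirror almost verbatim the argument used for the companion statement about the Lie group crossed module morphism case (the proof of equation \ceqref{dercm5/3}), exploiting the functorial nature of $\iMap(N,-)$ together with the already-established explicit form of the derived Lie algebra morphism $\DD p$.

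First I would invoke the earlier proposition that produces, from the Lie algebra crossed module morphism $p=(H,h)$, the graded Lie algebra morphism $\DD p:\DD\mathfrak{m}'\rightarrow\DD\mathfrak{m}$, and in particular its explicit factorized form \ceqref{liecm14/1}, namely $\DD p=H\times h$ with respect to the decomposition \ceqref{liecm14}. This reduces the assertion to a statement purely about the behaviour of the functor $\iMap(N,-)$ on the graded Lie algebra morphism $\DD p$ and on the direct product of its two components.

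Next I would apply the functor $\iMap(N,-)$ to $\DD p$. By the remarks at the beginning of the subsection, $\iMap(N,-)$ sends any graded Lie algebra $\mathfrak{p}$ to the virtual infinite dimensional graded Lie algebra $\iMap(N,\mathfrak{p})$ and any graded Lie algebra morphism $k:\mathfrak{p}'\rightarrow\mathfrak{p}$ to the virtual infinite dimensional graded Lie algebra morphism $\iMap(N,k)$ given by left composition with $k$. In particular, $\iMap(N,\DD p):\iMap(N,\DD\mathfrak{m}')\rightarrow\iMap(N,\DD\mathfrak{m})$ is a virtual infinite dimensional graded Lie algebra morphism, which establishes the first part of the claim. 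For the explicit form, I would use that $\iMap(N,-)$ commutes with direct products (which is immediate from its pointwise definition and was used tacitly in the proof of \ceqref{dercm5/3}); combining this with $\DD p=H\times h$ gives at once
\begin{equation*}
\iMap(N,\DD p)=\iMap(N,H\times h)=\iMap(N,H)\times\iMap(N,h),
\end{equation*}
which matches the semidirect product factorization \ceqref{dercm6/2} of the source and target spaces, as $\iMap(N,H)$ acts on the $\iMap(N,\mathfrak{e}[1])$ factor and $\iMap(N,h)$ on the $\iMap(N,\mathfrak{g})$ factor.

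The only genuine verification, and the part that one should not leave entirely implicit, is that $\iMap(N,\DD p)$ actually preserves the semidirect product graded Lie bracket, i.e. is compatible with formula \ceqref{liecmx6/1}: this amounts to checking that $\iMap(N,h)$ preserves the ungraded bracket (trivially, since $h$ does) and that $\iMap(N,H)$ intertwines the actions $\iMap(N,m')$ and $\iMap(N,m)$ through $\iMap(N,h)$, both of which follow pointwise on $N$ from \ceqref{liecm8}. I do not anticipate any real obstacle here; the argument is strictly parallel to and no harder than the one for the Lie group crossed module case, the grading shift between $\DD\mathsans{M}$ and $\iMap(N,\DD\mathsans{M})$ having no counterpart to worry about in the Lie algebra setting beyond the pointwise degree bookkeeping already encoded in the graded Lie algebra structure of $\DD\mathfrak{m}$.
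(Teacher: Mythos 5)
Your proposal is correct and follows essentially the same route as the paper, which simply applies the functor $\iMap(N,-)$ to the factorization $\DD p=H\times h$ of \ceqref{liecm14/1}; you merely spell out the pointwise bracket-compatibility via \ceqref{liecm8} that the paper leaves implicit.
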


\noindent
This is the function space analog of \ceqref{liecm14/1}.

\begin{proof}
Relation \ceqref{dercm6/3} follows readily from \ceqref{liecm14/1} through the action of the func\-tor $\iMap(N,-)$ again. 
%The statement follows readily from \ceqref{liecm14/1} together with \ceqref{liecmx6/1} again. 
\end{proof}

The derived Lie algebra $\DD\mathfrak{m}$ has a full degree prolongation 
\begin{equation}
\ZZ\DD\mathfrak{m}=\ZZ\mathfrak{g}\oplus\ZZ\mathfrak{e}=\ddd_{p=-\infty}^\infty\DD\mathfrak{m}[p],
\label{liecmx8}
\end{equation}
where \pagebreak $\DD\mathfrak{m}[p]=\mathfrak{e}[p+1]\oplus\mathfrak{g}[p]$ with 
$\DD\mathfrak{m}[0]=\DD\mathfrak{m}$ as a graded vector space (cf. eq. \ceqref{dercm2}). 
As $\DD\mathfrak{m}$ is a graded Lie algebra, the internal function space 
$\iMap(N,\ZZ\DD\mathfrak{m})$ is a virtual infinite dimensional 
graded Lie algebra containing $\iMap(N,\DD\mathfrak{m})$ as a Lie subalgebra.
$\iMap(N,\ZZ\DD\mathfrak{m})$ is moreover differential, as we now describe. To begin with, we note that 
$\iMap(N,\ZZ\DD\mathfrak{m})$ can be identified with a subalgebra of the 
virtual Lie algebra $\iMap(\mathbb{R}[1]\times N,\ZZ(\mathfrak{e}\rtimes_m\mathfrak{g}))$. 
Indeed, by \ceqref{liecmx8}, $\iMap(N,\ZZ\DD\mathfrak{m})$ decomposes as the direct sum 
\begin{equation}
\iMap(N,\ZZ\DD\mathfrak{m})=\ddd_{p=-\infty}^\infty\iMap(N,\DD\mathfrak{m}[p]).
\label{liecmx12/2}
\end{equation}
From \ceqref{liecmx5}, reasoning as we did earlier, an element $S\in\iMap(N,\DD\mathfrak{m}[p])$ 
comprises internal functions $j\in\iMap(N,\mathfrak{g}[p])$, $J\in\iMap(N,\mathfrak{e}[p+1])$ 
combining  in an internal function 
$S\in\iMap(\mathbb{R}[1]\times N,(\mathfrak{e}\rtimes_m\mathfrak{g})[p])$ given by 
\begin{equation}
S(\alpha)=j+(-1)^p\alpha J, \quad \alpha\in\mathbb{R}[1].
\label{dercm9/1}
\end{equation}
%with $\alpha\in\mathbb{R}[1]$. 
Further, on account of \ceqref{liecmx6}, 
one finds that the Lie bracket of a couple of elements $S\in\iMap(N,\DD\mathfrak{m}[p])$, 
$T\in\iMap(N,\DD\mathfrak{m}[q])$ such that $S(\alpha)=j+(-1)^p\alpha J$, 
$T(\alpha)=k+(-1)^q\alpha K$ is the element $[S,T]\in\iMap(N,\DD\mathfrak{m}[p+q])$ given by 
\begin{equation}
[S,T](\alpha)=[j,k]+(-1)^{p+q}\alpha(m(j,K)-(-1)^{pq}m(k,J)).
\label{liecmxx6/1}
\end{equation}
Identifying $\DD\mathfrak{m}[0]$ with $\DD\mathfrak{m}$ as usual, it is apparent here that 
$\iMap(N,\ZZ\DD\mathfrak{m})$ con\-tains $\iMap(N,\DD\mathfrak{m})$ as a virtual graded 
Lie subalgebra. In addition to a Lie algebra structure,  
$\iMap(N,\ZZ\DD\mathfrak{m})$ is also endowed with a cochain complex structure, 
\begin{align}
&\xymatrix@C=2pc
{\cdots\ar[r]^-{d_t}&\iMap(N,\DD\mathfrak{m}[p-1])\ar[r]^-{d_t}
&\iMap(N,\DD\mathfrak{m}[p])}
\label{liecmx11/1}
\\
&\hspace{6cm}\xymatrix@C=2pc{\ar[r]^-{d_t}
&\iMap(N,\DD\mathfrak{m}[p+1])\ar[r]^-{d_t}&\cdots},
\nonumber
\end{align} 
where the coboundary $d_t$ acts as 
\begin{equation}
d_tS(\alpha)=t\bigg(\frac{d}{d\alpha}S(\alpha)\bigg)=(-1)^pt(J)
\label{liecmx12/1}
\end{equation}
with $S\in\iMap(N,\DD\mathfrak{m}[p])$ of the form \ceqref{dercm9/1}. 

Under prolongation, the isomorphism in eq. \ceqref{dercm6/2} generalizes as stated in the next proposition.

\begin{prop} There is a virtual infinite dimensional 
differential graded Lie algebra isomorphism \hphantom{xxxxxxxxxxxx}
\begin{equation}
\iMap(N,\ZZ\DD\mathfrak{m})\simeq\iMap(N,\ZZ\mathfrak{e})\rtimes_{\iMap(N,s^{-1}t),\iMap(N,m)}\iMap(N,\ZZ\mathfrak{g}).
\label{dercm6/4}
\end{equation}
\end{prop}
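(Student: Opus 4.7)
The plan is to exhibit the claimed isomorphism by first establishing the corresponding identification at the level of the graded Lie algebra $\ZZ\DD\mathfrak{m}$ itself and then applying the functor $\iMap(N,-)$ to pass to internal function spaces. The argument proceeds in direct analogy with the proof of \ceqref{dercm6/2}, with extra care devoted to the grading prolongation and the newly present differential $d_t$.

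First, I would treat the underlying graded vector space. By \ceqref{liecmx8}, $\ZZ\DD\mathfrak{m}=\bigoplus_p\DD\mathfrak{m}[p]=\bigoplus_p(\mathfrak{e}[p+1]\oplus\mathfrak{g}[p])$, which, upon reindexing the $\mathfrak{e}$ summands, is canonically isomorphic to $\ZZ\mathfrak{e}\oplus\ZZ\mathfrak{g}$. Concretely, the projection onto the $\mathfrak{e}[p+1]$ component picks out the $J$ in the decomposition \ceqref{dercm9/1} of $S=j+(-1)^p\alpha J\in\DD\mathfrak{m}[p]$, while the projection onto $\mathfrak{g}[p]$ picks out $j$. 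This realizes the shift $s^{-1}$ entering the notation of \ceqref{dercm6/4}.

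Second, I would verify the Lie algebra structure. From \ceqref{liecmxx6/1}, the bracket of $S\in\DD\mathfrak{m}[p]$ and $T\in\DD\mathfrak{m}[q]$ reads $[S,T]=[j,k]+(-1)^{p+q}\alpha(m(j,K)-(-1)^{pq}m(k,J))$. Projecting onto the two summands $\ZZ\mathfrak{g}$ and $\ZZ\mathfrak{e}$, this is precisely the bracket of the semidirect product $\ZZ\mathfrak{e}\rtimes_m\ZZ\mathfrak{g}$ in which $\ZZ\mathfrak{e}$ is regarded as an abelian graded Lie algebra and $\ZZ\mathfrak{g}$ acts on it through the natural graded prolongation of the crossed module action $m$.

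Third, I would analyze the differential. By \ceqref{liecmx12/1}, $d_t$ acts as $d_tS=(-1)^pt(J)$ on $S=j+(-1)^p\alpha J\in\DD\mathfrak{m}[p]$, viewed as an element of $\DD\mathfrak{m}[p+1]$. Under the identification of the preceding step, this is (up to a sign fixed by the reindexing convention) the composition of $t:\ZZ\mathfrak{e}\to\ZZ\mathfrak{g}$ with the shift $s^{-1}$, justifying the subscript $s^{-1}t$ in \ceqref{dercm6/4}. A short computation using \ceqref{liecm5} and \ceqref{liecmxx6/1} then confirms that $d_t$ is a graded derivation of the bracket that factors through the $\mathfrak{e}$ summand, endowing $\ZZ\mathfrak{e}\rtimes_{s^{-1}t,m}\ZZ\mathfrak{g}$ with a differential graded Lie algebra structure isomorphic to $\ZZ\DD\mathfrak{m}$.

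Finally, I would apply the functor $\iMap(N,-)$. Since bracket, action, and differential are all defined pointwise on $\ZZ\DD\mathfrak{m}$, the functor converts the established DGLA isomorphism into the desired virtual infinite dimensional DGLA isomorphism \ceqref{dercm6/4}, in exact parallel with how \ceqref{dercm6/2} is obtained from \ceqref{liecm14}. The main technical obstacle is bookkeeping: tracking the sign conventions produced by the shift $\mathfrak{e}[p+1]\to\mathfrak{e}[p]$ simultaneously in the bracket \ceqref{liecmxx6/1} and in the differential \ceqref{liecmx12/1}, and checking that the Leibniz identity and $d_t{}^2=0$ remain intact after the reindexing. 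Once this sign coherence is secured, the remainder reduces to an application of the functoriality of $\iMap(N,-)$ already exploited in the proofs of \ceqref{dercm5/2} and \ceqref{dercm6/2}.
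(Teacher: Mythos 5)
Your proposal is correct and follows essentially the same route as the paper: the isomorphism is read off by inspecting the bracket \ceqref{liecmxx6/1} and the coboundary \ceqref{liecmx12/1}, identifying the semidirect product structure with $\ZZ\mathfrak{e}$ abelian and the differential as $s^{-1}t$ on the $\mathfrak{e}$ factor, and checking via the equivariance identity \ceqref{liecm5} that $d_t$ is a derivation of the bracket. Your reorganization of first establishing the statement for $\ZZ\DD\mathfrak{m}$ and then applying $\iMap(N,-)$ is only a cosmetic variant of the paper's direct inspection of the function-space formulas.
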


\noindent
Above,  $\iMap(N,\ZZ\mathfrak{e})\rtimes_{\iMap(N,t),\iMap(N,m)}\iMap(N,\ZZ\mathfrak{g})$
denotes the differential semidirect product of the graded Lie algebras
$\iMap(N,\ZZ\mathfrak{e})$ and $\iMap(N,\ZZ\mathfrak{g})$, that is the ordinary Lie algebra 
semidirect product $\iMap(N,\ZZ\mathfrak{e})\rtimes_{\iMap(N,m)}\iMap(N,\ZZ\mathfrak{g})$
with the cochain complex structure %induced by the suspension map $s\iMap(N,t)$, i. e. 
$\iMap(N,s^{-1}t):\iMap(N,\ZZ\mathfrak{e})\rightarrow\iMap(N,\ZZ\mathfrak{g})$,
where $s^{-1}$ denotes the desuspension isomorphism lowering degree by one unit.

\begin{proof}
\ceqref{dercm6/4} follows by inspection of the structure of the Lie bracket $[-,-]$
shown in eq. \ceqref{liecmxx6/1} and the form of the coboundary $d_t$ shown in eq. \ceqref{liecmx12/1}.
and the fact that $d_t$ differentiates the Lie bracket \ceqref{liecmxx6/1} as it can be checked
by a straightforward calculation.
%So, $\iMap(N,\ZZ\DD\mathfrak{m})$ is a differential graded Lie algebra.
\end{proof}

%\begin{proof}
%\ceqref{dercm6/4} follows by inspection from the structure of the Lie bracket $[-,-]$
%of eq. \ceqref{liecmxx6/1} and form of the coboundary $d_t$ shown in eq. \ceqref{liecmx12/1}. 
%\end{proof}

\noindent
The $d_t$--cohomology of the complex is $\iMap(N,\ZZ\DD\mathfrak{h})$, 
where $\mathfrak{h}$ is the Lie algebra crossed module 
$(\ker t,\mathfrak{g}/\ran t)$ whose target morphism vanishes and whose $\mathfrak{g}/\ran t$--action %on $\ker t$ 
is that induced by $m$. We shall not need this result however. 

\begin{prop}
Every Lie algebra crossed module morphism $p:\mathfrak{m}'\rightarrow\mathfrak{m}$ \linebreak 
$=(H:\mathfrak{e}'\rightarrow\mathfrak{e}$, $h:\mathfrak{g}'\rightarrow\mathfrak{g})$ induces
a virtual infinite dimensional differential 
graded Lie algebra morphism $\iMap(N,\ZZ\DD p):\iMap(N,\ZZ\DD\mathfrak{m}')
\rightarrow\iMap(N,\ZZ\DD\mathfrak{m})$. $\iMap(N,\ZZ\DD p)$ reads explicitly as 
\begin{equation}
\iMap(N,\ZZ\DD p)=\iMap(N,\ZZ H)\times\iMap(N,\ZZ h),
\label{dercm6/5}
\end{equation} 
\end {prop}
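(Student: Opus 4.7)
The plan is to lift the non-prolonged result $\iMap(N,\DD p)=\iMap(N,H)\times\iMap(N,h)$ already established one paragraph earlier to the fully prolonged, differential setting by three successive verifications: compatibility with the full grading, compatibility with the bracket in every degree, and compatibility with the coboundary. Once these are in place, the claim reduces to applying the functor $\iMap(N,-)$.

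First I would define $\ZZ\DD p:\ZZ\DD\mathfrak{m}'\rightarrow\ZZ\DD\mathfrak{m}$ degree by degree by setting, on an element $S\in\DD\mathfrak{m}'[p]$ of the form \ceqref{dercm9/1} (with components $j\in\mathfrak{g}'[p]$ and $J\in\mathfrak{e}'[p+1]$),
\begin{equation*}
\ZZ\DD p(S)(\alpha)=h(j)+(-1)^p\alpha\,H(J),
\end{equation*}
which matches $\ZZ h$ on the $\mathfrak{g}$--component and $\ZZ H$ on the $\mathfrak{e}[1]$--component; in particular, restriction to $p=0$ recovers $\DD p=H\times h$ of \ceqref{liecm14/1}. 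The definition is manifestly linear and preserves the decomposition \ceqref{liecmx8}, so $\ZZ\DD p$ is a morphism of graded vector spaces.

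Next I would verify that $\ZZ\DD p$ intertwines the brackets \ceqref{liecmxx6/1} on $\ZZ\DD\mathfrak{m}'$ and $\ZZ\DD\mathfrak{m}$. For $S\in\DD\mathfrak{m}'[p]$, $T\in\DD\mathfrak{m}'[q]$ with components $(j,J)$ and $(k,K)$, the bracket has $\mathfrak{g}$--component $[j,k]$ and $\mathfrak{e}[p+q+1]$--component $m'(j,K)-(-1)^{pq}m'(k,J)$. Applying $\ZZ\DD p$ yields $h([j,k])$ and $(-1)^{p+q}(H(m'(j,K))-(-1)^{pq}H(m'(k,J)))$, which, using that $h$ is a Lie algebra morphism and the equivariance identity \ceqref{liecm8}, equal respectively $[h(j),h(k)]$ and $(-1)^{p+q}(m(h(j),H(K))-(-1)^{pq}m(h(k),H(J)))$. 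These are exactly the components of $[\ZZ\DD p(S),\ZZ\DD p(T)]$ as dictated by \ceqref{liecmxx6/1}, so bracket preservation holds in all degrees.

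The key step, and the one I expect to be the main point of substance, is checking that $\ZZ\DD p$ commutes with the coboundaries: $d_t\circ \ZZ\DD p=\ZZ\DD p\circ d_{t'}$. Using the explicit formula \ceqref{liecmx12/1} on $S\in\DD\mathfrak{m}'[p]$ of the form \ceqref{dercm9/1}, one has $d_{t'}S=(-1)^p t'(J)\in\DD\mathfrak{m}'[p+1]$, so $\ZZ\DD p(d_{t'}S)=(-1)^p h(t'(J))$. On the other hand $\ZZ\DD p(S)$ has components $(h(j),H(J))$, and $d_t\ZZ\DD p(S)=(-1)^p t(H(J))$. Equality of the two is precisely the compatibility identity \ceqref{liecm7}, $t\circ H=h\circ t'$, which is part of the definition of a Lie algebra crossed module morphism. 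Thus $\ZZ\DD p$ is a morphism of differential graded Lie algebras.

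Finally, applying the functor $\iMap(N,-)$ preserves all the algebraic structures involved (brackets pointwise, differentials pointwise, and direct sum decompositions) in the sense of virtual infinite dimensional differential graded Lie algebras, so it produces a morphism $\iMap(N,\ZZ\DD p):\iMap(N,\ZZ\DD\mathfrak{m}')\rightarrow\iMap(N,\ZZ\DD\mathfrak{m})$ with the required properties. The explicit factorization \ceqref{dercm6/5} then follows from the degree-by-degree definition of $\ZZ\DD p$ above as $\ZZ H\times\ZZ h$, combined with the isomorphism \ceqref{dercm6/4} expressing $\iMap(N,\ZZ\DD\mathfrak{m})$ as the differential semidirect product $\iMap(N,\ZZ\mathfrak{e})\rtimes\iMap(N,\ZZ\mathfrak{g})$.
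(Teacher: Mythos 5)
Your proposal is correct and follows essentially the same route as the paper's own (much terser) proof, which likewise derives the result from the factorization \ceqref{liecm14/1} under full degree extension for the semidirect product structure and from relation \ceqref{liecm7} for the differential structure, then applies the functor $\iMap(N,-)$. Your degree-by-degree verification of bracket preservation via \ceqref{liecm8} and of the commutation with the coboundary via \ceqref{liecm7} simply makes explicit what the paper leaves implicit.
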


\noindent
where $\ZZ H$, $\ZZ h$ are the prolongations of $H$, $h$, respectively.

\begin{proof}
This follows from the factorization \ceqref{liecm14/1} 
under full degree extension for the semidirect product structure and 
and from relation \ceqref{liecm7} for the differential structure. 
\end{proof}

%\begin{proof}
%The statement follows readily from \ceqref{liecmx12/3} together with \ceqref{liecmxx6/1}
%and \ceqref{liecmx12/1}. 
%\end{proof}

%The above construction is of course covariant under Lie algebra crossed module morphisms 
%of course applies in particular to Lie algebra crossed modules associated to Lie group
%crossed modules. Details are analogous and are so left to the reader. 

If $\mathfrak{m}$ is the Lie algebra crossed module of 
a Lie group crossed module $\mathsans{M}$, $\iMap(N,\DD\mathfrak{m})$
is the virtual Lie algebra of $\iMap(N,\DD\mathsans{M})$. 
The adjoint action of $\DD\mathsans{M}$ on $\DD\mathfrak{m}$ induces an 
adjoint action of $\iMap(N,\DD\mathsans{M})$ on $\iMap(N,\ZZ\DD\mathfrak{m})$.
%in evident fashion. 
% $\iMap(N,\Ad)$  We shall denote the action simply by $\Ad$ for notational simplicity. 

%\begin{prop}
%For $F\in\iMap(N,\DD\mathsans{M})$ and $S\in\iMap(N,\DD\mathfrak{m})$ 
%of the form \ceqref{dercm5/1} and \ceqref{dercm7/1}, respectively, one has 
%\begin{align}
%&\Ad F(S)(\alpha)
%=\Ad m(j)+\alpha(\mu{}\dot{}(m,J)-{}\dot{}\mu{}\dot{}(\Ad m(j),Q)),
%\vphantom{\Big]}
%\label{liecmym5}
%\\
%&\Ad F^{-1}(S)(\alpha)=\Ad m^{-1}(j)+\alpha\mu{}\dot{}(m^{-1},J+{}\dot{}\mu{}\dot{}(j,Q)).
%\vphantom{\Big]}
%\label{liecmym6}
%\end{align}
%\end{prop}

%\begin{proof} Relations \ceqref{liecmym5}, \ceqref{liecmym6} are an immediate consequence 
%if identities \ceqref{liecmy5}, \ceqref{liecmy6}. 
%\end{proof}

%\noindent
%Relations \ceqref{liecmym5}, \ceqref{liecmym6} immediately extend to $\iMap(N,\ZZ\DD\mathfrak{m})$. 

\begin{prop}
For $F\in\iMap(N,\DD\mathsans{M})$ and $S\in\iMap(N,\DD\mathfrak{m}[p])$ 
respectively of the form \ceqref{dercm5/1} and \ceqref{dercm9/1}, one has 
\begin{align}
&\Ad F(S)(\alpha)
=\Ad m(j)+(-1)^p\alpha(\mu{}\dot{}(m,J)-{}\dot{}\mu{}\dot{}(\Ad m(j),Q)),
\vphantom{\Big]}
\label{liecmym11}
\\
&\Ad F^{-1}(S)(\alpha)=\Ad m^{-1}(j)+(-1)^p\alpha\mu{}\dot{}(m^{-1},J+{}\dot{}\mu{}\dot{}(j,Q)).
\vphantom{\Big]}
\label{liecmym12}
\end{align}
\end{prop}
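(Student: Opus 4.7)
The plan is to mimic the proof of the earlier proposition that established formulas \ceqref{liecmy5} and \ceqref{liecmy6} for $\Ad P(Y)$ and $\Ad P^{-1}(Y)$ in $\DD\mathsans{M}$, now carried out in the graded Lie group $\iMap(N,\DD\mathsans{M})$ acting on its graded Lie algebra $\iMap(N,\ZZ\DD\mathfrak{m})$. The key observation justifying this is that the multiplication and inversion formulas \ceqref{liecmx2/1} and \ceqref{liecmx3/1} in $\iMap(N,\DD\mathsans{M})$ are formally identical to \ceqref{liecmx2} and \ceqref{liecmx3} in $\DD\mathsans{M}$, so the conjugation computation can be lifted almost verbatim, with $F$ playing the role of $P$ and the parameter $\alpha$ replacing $\bar\alpha$.

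Concretely, I would introduce a one--parameter curve $G_t\in\iMap(N,\DD\mathsans{M})$ with $G_0$ the identity function and tangent vector at $t=0$ equal to $S$. Guided by the decomposition \ceqref{dercm9/1}, the natural choice is $G_t(\alpha)=\ee^{(-1)^p t\alpha J}\ee^{tj}$, since its linearization $\tfrac{dG_t}{dt}(\alpha)\big|_{t=0}$ reproduces $S(\alpha)=j+(-1)^p\alpha J$. Applying \ceqref{liecmx2/1} and \ceqref{liecmx3/1} to compute $FG_tF^{-1}$ and $F^{-1}G_tF$, and using the identity $m\ee^{tj}m^{-1}=\ee^{t\Ad m(j)}$ (respectively $m^{-1}\ee^{tj}m=\ee^{t\Ad m^{-1}(j)}$) to simplify the group--valued factor, should produce closed expressions whose $\alpha$--parts involve $\mu{}\dot{}(\ee^{t\Ad m(j)},Q)$ and whose bodies are $\ee^{t\Ad m(j)}$ (resp. $\ee^{t\Ad m^{-1}(j)}$), exactly parallel to the expressions for $PDP^{-1}$ and $P^{-1}DP$ appearing in the proof of the earlier proposition.

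Differentiating at $t=0$ and expanding $\mu{}\dot{}(\ee^{t\Ad m(j)},Q)=Q+t\dot{}\mu{}\dot{}(\Ad m(j),Q)+O(t^2)$ will then extract $\Ad F(S)(\alpha)$ and $\Ad F^{-1}(S)(\alpha)$. The claimed formulas \ceqref{liecmym11}, \ceqref{liecmym12} are recovered once the $\alpha$--dependent coefficients are collected and rearranged into the prescribed form, with the overall $(-1)^p$ prefactor inherited from the sign convention of \ceqref{dercm9/1}. The main technical obstacle will be tracking the Koszul signs generated by the graded structure of $\iMap(N,\ZZ\DD\mathfrak{m})$ through the bilinear operations $\mu{}\dot{}$ and $\dot{}\mu{}\dot{}$, so that the $(-1)^p$ factor materialises globally in front of the $\alpha$--part of each formula, rather than appearing only on the $\mu{}\dot{}(m,J)$ contribution while being absent from the $\dot{}\mu{}\dot{}(\Ad m(j),Q)$ contribution; this bookkeeping must be consistent with the conventions fixing the degree of $\alpha J$ relative to $j$ in \ceqref{dercm9/1}.
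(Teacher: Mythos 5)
Your proposal is correct, but it takes a more laborious route than the paper. The paper's own proof is a one--liner: \ceqref{liecmym11}, \ceqref{liecmym12} are declared an immediate consequence of the finite--dimensional identities \ceqref{liecmy5}, \ceqref{liecmy6}, the point being precisely the observation you make at the outset — that the multiplication and inversion laws \ceqref{liecmx2/1}, \ceqref{liecmx3/1} of $\iMap(N,\DD\mathsans{M})$ are formally identical to \ceqref{liecmx2}, \ceqref{liecmx3} — so the conjugation formulas transfer under the substitutions $\bar\alpha\to\alpha$, $L\to Q$, $a\to m$, $u\to j$, $U\to(-1)^pJ$, together with the degree bookkeeping. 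You instead re--run the entire curve--conjugation argument (the analogue of the computation of $PDP^{-1}$ and $P^{-1}DP$ in \ceqref{liecmy8}, \ceqref{liecmy9}) inside the function space; this is valid and self--contained, and your curve $G_t(\alpha)=\ee^{(-1)^pt\alpha J}\ee^{tj}$ does linearize to $S$ — though for $p\neq 0$ the factor $\ee^{tj}$ must be read formally, by multilinear extension, exactly as the adjoint action on $\iMap(N,\ZZ\DD\mathfrak{m})$ is itself only defined by such extension. What your version buys is that it forces into the open the sign bookkeeping that the paper's proof suppresses, and the worry you flag does resolve: the naive linearization yields $\alpha\bigl((-1)^p\mu{}\dot{}(m,J)-{}\dot{}\mu{}\dot{}(\Ad m(j),Q)\bigr)$, and the apparently missing $(-1)^p$ on the second term is exactly the Koszul sign incurred in pulling the degree--one parameter $\alpha$ out of ${}\dot{}\mu{}\dot{}(\Ad m(j),\alpha Q)$ past the degree--$p$ element $\Ad m(j)$, i.e. ${}\dot{}\mu{}\dot{}(\Ad m(j),\alpha Q)=(-1)^p\alpha\,{}\dot{}\mu{}\dot{}(\Ad m(j),Q)$. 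With that single observation your argument closes and reproduces \ceqref{liecmym11}, and the mirror computation gives \ceqref{liecmym12}.
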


\begin{proof} Relations \ceqref{liecmym11}, \ceqref{liecmym12} are an immediate consequence 
if identities \ceqref{liecmy5}, \ceqref{liecmy6}. 
\end{proof}

\begin{prop}
For a Lie group crossed 
module morphism $\beta:\mathsans{M}'\rightarrow\mathsans{M}$ with associated Lie algebra crossed 
module morphism $\dot\beta:\mathfrak{m}'\rightarrow\mathfrak{m}$, the virtual Lie algebra morphism
$\iMap(N,\DD\dot\beta)$ is the virtual Lie differential %$\dot{\iMap}(N,\DD\beta)$ 
of the Lie group morphism $\iMap(N,\DD\beta)$, \hphantom{xxxxxxxxxxxx}
\begin{equation}
\dot{\iMap}(N,\DD\beta)=\iMap(N,\DD\dot\beta).
\label{liecmym13}
\end{equation}
\end{prop}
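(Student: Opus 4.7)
The plan is to reduce the claim to the already established finite--dimensional identity $\dot{\DD}\beta=\DD\dot\beta$ of \ceqref{liecmy10} by exploiting the fact that in the virtual infinite dimensional framework both the Lie group and Lie algebra structures on $\iMap(N,-)$ are defined pointwise, so that Lie differentiation commutes with the functor $\iMap(N,-)$. Concretely, I would first establish the auxiliary principle that, for any graded Lie group morphism $\kappa:\mathsans{P}'\rightarrow\mathsans{P}$ with Lie differential $\dot\kappa:\mathfrak{p}'\rightarrow\mathfrak{p}$, one has
\begin{equation*}
\dot{\iMap}(N,\kappa)=\iMap(N,\dot\kappa).
\end{equation*}
This is a formal consequence of the remark preceding \ceqref{dercm5/1} that $\iMap(N,\mathfrak{p})$ is the virtual Lie algebra of $\iMap(N,\mathsans{P})$: a curve in $\iMap(N,\mathsans{P}')$ through the identity is, pointwise in $N$, a curve in $\mathsans{P}'$ through $1_{\mathsans{P}'}$, its image under $\iMap(N,\kappa)$ is obtained by postcomposing with $\kappa$ pointwise, and differentiation at the origin commutes with postcomposition.

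Having this, I would next apply the factorization \ceqref{dercm5/3} to $\iMap(N,\DD\beta)$, write $\beta=(\varPhi,\phi)$, and use that the Lie differential of a direct product of group morphisms is the direct product of the Lie differentials to obtain
\begin{equation*}
\dot{\iMap}(N,\DD\beta)=\dot{\iMap}(N,\dot\varPhi)\times\dot{\iMap}(N,\phi).
\end{equation*}
Applying the auxiliary principle to each factor, and noting that since $\mathfrak{e}[1]$ is viewed as an Abelian Lie group the Lie differential of the (linear) map $\dot\varPhi$ equals $\dot\varPhi$ itself, I get $\dot{\iMap}(N,\DD\beta)=\iMap(N,\dot\varPhi)\times\iMap(N,\dot\phi)$.

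On the other hand, the factorization \ceqref{dercm6/3} applied to $\dot\beta=(\dot\varPhi,\dot\phi)$ gives directly
\begin{equation*}
\iMap(N,\DD\dot\beta)=\iMap(N,\dot\varPhi)\times\iMap(N,\dot\phi),
\end{equation*}
so the two expressions coincide, yielding \ceqref{liecmym13}. The only nontrivial step is the auxiliary principle of Lie differentiation--$\iMap(N,-)$ commutation, and this is precisely the kind of statement the virtual infinite dimensional formalism was designed to render essentially tautological: once one agrees, as done in subsect. \cref{subsec:mapder}, to work with algebraic structures defined pointwise without worrying about the specific infinite dimensional smooth structure adopted, the verification is purely formal and reduces to the commutativity of pointwise evaluation with the relevant algebraic operations.
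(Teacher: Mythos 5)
Your proof is correct, but it takes a more laborious route than the paper does. The paper's own ``proof'' is a one--liner: it declares that in the spirit of the virtual Lie theory there is really nothing to prove, since \ceqref{liecmym13} is essentially the \emph{definition} of $\dot{\iMap}(N,\DD\beta)$ --- the virtual Lie differential of a pointwise--induced group morphism is simply declared to be the pointwise--induced morphism of the corresponding Lie algebra morphism, exactly because no concrete infinite--dimensional smooth structure has been fixed that would allow an independent computation of the differential. You instead isolate that tautology as an ``auxiliary principle'' ($\dot{\iMap}(N,\kappa)=\iMap(N,\dot\kappa)$ for a general graded Lie group morphism $\kappa$), and then run the finite--dimensional argument of \ceqref{liecmy10} through the factorizations \ceqref{dercm5/3} and \ceqref{dercm6/3}, using that $\dot\varPhi$ is its own Lie differential on the Abelian factor $\mathfrak{e}'[1]$. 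What your approach buys is an explicit consistency check: it verifies that the definitional convention is compatible with the semidirect product decompositions and with the already--established identity $\dot{\DD}\beta=\DD\dot\beta$, which is reassuring but, as you yourself note at the end, ultimately rests on the same formal commutation of pointwise evaluation with the algebraic operations that the paper invokes wholesale. What the paper's approach buys is honesty about where the content lies: in the virtual framework the statement carries no independent analytic content, and dressing it up as a computation risks obscuring that.
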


\noindent
This is the function space counterpart of relation \ceqref{liecmy10}. 

\begin{proof}
There really is no proof to be given here. In the spirit of the virtual Lie theory,
\ceqref{liecmym13} is essentially the definition of $\dot{\iMap}(N,\DD\beta)$. 
\end{proof}

Suppose that the manifold $N$ is equipped with a degree $p$ derivation $D$.
Given an internal function $C\in\iMap(N,\DD\mathsans{M})$, one can then construct 
elements $DCC^{-1}$, $C^{-1}DC\in\iMap(N,\DD\mathfrak{m}[p])$ by pull--back $C$ of the 
Maurer--Cartan forms $dMM^{-1}$, $M^{-1}dM$ of $\DD\mathsans{M}$ (cf. def. \cref{defi:primmc})
followed by contraction with $D$ seen as a vector field on $N$. Explicit formulae can be obtained by 
expressing $C$ as %\hphantom{xxxxxxxxxx}
\begin{equation}
C(\alpha)=\ee^{\alpha O}r, \quad \alpha\in\mathbb{R}[1], 
\label{liecmyz0}
\end{equation}
with $r\in\iMap(N,\mathsans{G})$, $O\in\iMap(N,\mathfrak{e}[1])$, in conformity with \ceqref{dercm5/1},

\begin{prop}
$DCC^{-1}$, $C^{-1}DC$ are given explicitly by 
\begin{align}
&DCC(\alpha)^{-1}=Drr^{-1}+(-1)^p\alpha(DO-\dot{}\,\mu\dot{}\,(Drr^{-1},O)),
\vphantom{\Big]}
\label{liecmyz1}
\\
&C^{-1}DC(\alpha)=r^{-1}Dr+(-1)^p\alpha\mu\dot{}\,(r^{-1},DO). \hspace{2cm}
\vphantom{\Big]}
\label{liecmyz2}
\end{align} 
\end{prop}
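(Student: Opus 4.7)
The plan is to obtain the formulas by applying Proposition \cref{prop:mcexpr} and then pulling back along $C$ and contracting with $D$, just as announced in the text preceding the statement.

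First I would write $C \in \iMap(N,\DD\mathsans{M})$ in the internal form \ceqref{liecmyz0}, i.e.\ $C(\alpha)=\ee^{\alpha O}r$ with $r\in\iMap(N,\mathsans{G})$ and $O\in\iMap(N,\mathfrak{e}[1])$; this is the function-space analog of the parametrization $M(\bar\alpha)=\ee^{\bar\alpha E}g$ used in \ceqref{liecmy0} except for the reversal of degree between $\bar\alpha\in\mathbb{R}[-1]$ and $\alpha\in\mathbb{R}[1]$. The map $C$ defines a graded map $C:N\to\DD\mathsans{M}$ and the pullbacks $C^*dMM^{-1}$ and $C^*M^{-1}dM$ are $1$--forms on $N$ with values in $\DD\mathfrak{m}$. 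By Proposition \cref{prop:mcexpr}, these pullbacks read respectively
\begin{align*}
C^*dMM^{-1}(\bar\alpha)&=d_Nrr^{-1}-\bar\alpha(d_NO-\dot{}\,\mu\dot{}\,(d_Nrr^{-1},O)),\\
C^*M^{-1}dM(\bar\alpha)&=r^{-1}d_Nr-\bar\alpha\,\mu\dot{}\,(r^{-1},d_NO),
\end{align*}
where $d_N$ denotes the de Rham differential on $N$. The quantities $DCC^{-1}$ and $C^{-1}DC$ are then defined as the contractions of these pullbacks with $D$, regarded as a degree $p$ vector field on $N$.

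Next I would carry out the contraction: $\iota_D d_Nr r^{-1}=Drr^{-1}$ and $\iota_D d_NO=DO$ (viewing $D$ as the derivation and using the standard relation between de Rham differentials and derivations). This immediately yields the $\mathfrak{g}$-- and $\mathfrak{e}$--components $Drr^{-1}$ and $DO$ in the first formula, and $r^{-1}Dr$ and $DO$ in the second. It remains to match the $\mathfrak{e}[p+1]$--component to the normal form \ceqref{dercm9/1} for an element of $\iMap(N,\DD\mathfrak{m}[p])$, which prescribes the expansion $j+(-1)^p\alpha J$.

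The main obstacle is the bookkeeping of signs, which I would address as follows. The pullback formulas above are expressed in the variable $\bar\alpha\in\mathbb{R}[-1]$ inherited from Proposition \cref{prop:mcexpr}, whereas elements of $\iMap(N,\DD\mathsans{M})$ are expressed in the variable $\alpha\in\mathbb{R}[1]$ under the degree-inversion isomorphism described around \ceqref{dercm5/1}. When this degree inversion is combined with the contraction by the degree $p$ derivation $D$, the Koszul sign rule applied to the commutation of $\alpha$ past $D$ converts the coefficient $-\bar\alpha$ to $(-1)^p\alpha$. Comparing with the convention \ceqref{dercm9/1} then produces the stated formulas \ceqref{liecmyz1} and \ceqref{liecmyz2}. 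As a consistency check, setting $p=0$ recovers exactly \ceqref{liecmy1} and \ceqref{liecmy2}, and the derivations $Drr^{-1}$, $DO$ manifestly lie in the correct function spaces, confirming that the result lies in $\iMap(N,\DD\mathfrak{m}[p])$ as required.
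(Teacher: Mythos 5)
Your proof is correct and takes essentially the same route as the paper's: pull back the Maurer--Cartan forms of prop. \cref{prop:mcexpr} along $C$, contract with the degree $p$ derivation $D$, and recast the result in the normal form \ceqref{dercm9/1}, with the $(-1)^p$ coming from the degree bookkeeping. Only your closing consistency check is slightly off: since the de Rham differential is a degree $1$ derivation, it is the case $p=1$ (not $p=0$) of \ceqref{liecmyz1}, \ceqref{liecmyz2} that reproduces the sign pattern of \ceqref{liecmy1}, \ceqref{liecmy2}.
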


\begin{proof}
Relations \ceqref{liecmyz1}, \ceqref{liecmyz2} \pagebreak are obtained proceeding as indicated above using the expressions of 
$dMM^{-1}$, $M^{-1}dM$  given in eqs. \ceqref{liecmy1}, \ceqref{liecmy2}. 
\end{proof}

\noindent
The coboundary $d_{\dot\tau}$ of $\iMap(N,\ZZ\DD\mathfrak{m})$ as a cochain complex 
allows for the construction of two special elements 
$d_{\dot\tau}CC^{-1}$, $C^{-1}d_{\dot\tau}C\in\iMap(N,\DD\mathfrak{m}[1])$. 
Explicit formulae can be obtained by 
expressing again $C$ as in \ceqref{liecmyz0}. 

\begin{prop}
$d_{\dot\tau}CC^{-1}$, $C^{-1}d_{\dot\tau}C$ are given by the expressions 
\begin{align}
&d_{\dot\tau}CC(\alpha)^{-1}=\dot\tau(O)+\frac{1}{2}\alpha[O,O],
\vphantom{\Big]}
\label{liecmyz3}
\\
&C^{-1}d_{\dot\tau}C(\alpha)=\Ad r^{-1}(\dot\tau(O))-\frac{1}{2}\alpha\mu\dot{}\,(r^{-1},[O,O]).
\vphantom{\Big]}
\label{liecmyz4}
\end{align}  
\end{prop}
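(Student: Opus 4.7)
The plan is to derive the two formulas by a direct computation that parallels the derivation of the Maurer--Cartan expressions in Proposition \cref{prop:mcexpr} and of the formulas for $DCC^{-1}$, $C^{-1}DC$ in the preceding proposition, now with the degree $1$ coboundary $d_{\dot\tau}$ playing the role of the derivation $D$. Writing $C(\alpha)=e^{\alpha O}r$ as in \ceqref{liecmyz0}, the target expressions live in $\iMap(N,\DD\mathfrak{m}[1])$, whose elements have the canonical form $S(\alpha)=j-\alpha J$ with $j\in\iMap(N,\mathfrak{g}[1])$, $J\in\iMap(N,\mathfrak{e}[2])$ according to \ceqref{dercm9/1} at $p=1$. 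Comparison with the right--hand sides of \ceqref{liecmyz3}, \ceqref{liecmyz4} reduces the claim to showing that the $(j,J)$--pairs of $d_{\dot\tau}CC^{-1}$ and $C^{-1}d_{\dot\tau}C$ are $(\dot\tau(O),-\tfrac{1}{2}[O,O])$ and $(\Ad r^{-1}(\dot\tau(O)),\tfrac{1}{2}\mu\dot{}\,(r^{-1},[O,O]))$ respectively.

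The operator $d_{\dot\tau}$, initially given on $\iMap(N,\ZZ\DD\mathfrak{m})$ by the formula \ceqref{liecmx12/1}, is extended to the group--valued internal function space $\iMap(N,\DD\mathsans{M})$ by the same formal prescription that underlies the construction of the Maurer--Cartan forms \ceqref{liecmy1}, \ceqref{liecmy2}. Acting on the two building blocks of $C=e^{\alpha O}r$, namely $r\in\iMap(N,\mathsans{G})$ and $O\in\iMap(N,\mathfrak{e}[1])$, this extension is pinned down by the identifications $d_{\dot\tau}r\cdot r^{-1}=\dot\tau(O)$ and $d_{\dot\tau}O=\tfrac{1}{2}[O,O]$. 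The first captures the image in $\iMap(N,\mathfrak{g}[1])$ of the $\mathfrak{e}[1]$--datum of $C$ via $\dot\tau$; the second supplies the Maurer--Cartan self--bracket characteristic of the odd--degree element $O$.

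With these identifications in hand, I substitute $D=d_{\dot\tau}$, $p=1$, into the formulas $DCC^{-1}(\alpha)=Drr^{-1}+(-1)^p\alpha(DO-\dot{}\mu\dot{}\,(Drr^{-1},O))$ and $C^{-1}DC(\alpha)=r^{-1}Dr+(-1)^p\alpha\mu\dot{}\,(r^{-1},DO)$ of the preceding proposition. The $\alpha$--independent pieces immediately give $\dot\tau(O)$ and $\Ad r^{-1}(\dot\tau(O))$ respectively. The $\alpha$--coefficient of $d_{\dot\tau}CC^{-1}$ becomes $-\bigl(\tfrac{1}{2}[O,O]-\dot{}\mu\dot{}\,(\dot\tau(O),O)\bigr)$; at this point the Lie algebraic Peiffer identity \ceqref{liecm6} in the form $\dot{}\mu\dot{}\,(\dot\tau(O),O)=[O,O]$ collapses the cross term to $[O,O]$, and the total reduces to $\tfrac{1}{2}\alpha[O,O]$, matching \ceqref{liecmyz3}. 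The $\alpha$--coefficient of $C^{-1}d_{\dot\tau}C$ simplifies in parallel fashion to $-\tfrac{1}{2}\alpha\mu\dot{}\,(r^{-1},[O,O])$, yielding \ceqref{liecmyz4}.

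The principal obstacle in executing this plan will be rigorously justifying the extension of $d_{\dot\tau}$ from $\iMap(N,\ZZ\DD\mathfrak{m})$ to $\iMap(N,\DD\mathsans{M})$ and the concomitant rules $d_{\dot\tau}r\cdot r^{-1}=\dot\tau(O)$, $d_{\dot\tau}O=\tfrac{1}{2}[O,O]$. This amounts to checking compatibility with the semidirect product decomposition \ceqref{dercm5/2} and the exponential parametrization \ceqref{dercm5/1}, together with careful sign bookkeeping, given that $\alpha\in\mathbb{R}[1]$ and $O\in\iMap(N,\mathfrak{e}[1])$ are both odd and that $d_{\dot\tau}$ raises degree by one. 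Once these extensions are in place, the remainder is a mechanical Lie algebraic assembly exactly analogous to the one used to produce the Maurer--Cartan expressions in Proposition \cref{prop:mcexpr}.
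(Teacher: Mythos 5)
Your plan has a genuine gap at its foundation. The two ``identifications'' $d_{\dot\tau}r\cdot r^{-1}=\dot\tau(O)$ and $d_{\dot\tau}O=\tfrac{1}{2}[O,O]$, on which everything else rests, are not derivable from the definition of $d_{\dot\tau}$: the coboundary is $d_{\dot\tau}=\dot\tau\, d/d\alpha$, i.e.\ it differentiates the explicit $\alpha$--dependence of $C(\alpha)=\ee^{\alpha O}r$ and then applies the crossed module map $\dot\tau$. Since $r$ and $O$ are both $\alpha$--independent, any honest action of $d/d\alpha$ on them gives zero; in particular $d_{\dot\tau}r$ cannot equal $\dot\tau(O)\,r$, because an operator acting on $r$ alone has no access to $O$. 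Your identifications are reverse--engineered so that, after insertion into the formulas \ceqref{liecmyz1}, \ceqref{liecmyz2} and cancellation of the cross term by the Peiffer identity, the desired answer appears; that makes the argument circular rather than a proof. A second, related problem is that \ceqref{liecmyz1}, \ceqref{liecmyz2} were established for $D$ a derivation of $N$, acting on $C$ by pullback of the Maurer--Cartan forms of $\DD\mathsans{M}$ and contraction with $D$ as a vector field on $N$; $d_{\dot\tau}$ is not of this type (it acts in the $\mathbb{R}[1]$ direction composed with $\dot\tau$, which moves values from $\mathfrak{e}$ to $\mathfrak{g}$), so substituting $D=d_{\dot\tau}$ into those formulas is not legitimate.

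The paper's own proof avoids the splitting entirely: it applies the variational identities $\delta\ee^{\alpha O}\ee^{-\alpha O}=\frac{\exp(\alpha\ad O)-1}{\alpha\ad O}\delta(\alpha O)$ and $\ee^{-\alpha O}\delta\ee^{\alpha O}=\frac{1-\exp(-\alpha\ad O)}{\alpha\ad O}\delta(\alpha O)$ directly to the whole of $C$, with $\delta=d_{\dot\tau}=\dot\tau\, d/d\alpha$, so that $\delta(\alpha O)=\dot\tau(O)$, the series truncates after the $\tfrac{1}{2}\alpha\ad O$ term by nilpotence of $\alpha$, and the Peiffer identity converts $[O,\dot\tau(O)]$ into $[O,O]$; the $r$ factor contributes only through conjugation in the second formula. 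Your final algebraic assembly (the degree bookkeeping in $\iMap(N,\DD\mathfrak{m}[1])$ and the use of \ceqref{liecm6}) is fine, but to repair the argument you should replace the Leibniz--type decomposition by this direct computation, or else give an actual definition of $d_{\dot\tau}$ on $\iMap(N,\DD\mathsans{M})$ from which your two identities follow rather than being postulated.
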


\begin{proof}
\ceqref{liecmyz3}, \ceqref{liecmyz4} follow straightforwardly from the well--known variational identities %\linebreak 
$\delta\ee^{\alpha O}\ee^{-\alpha O}=\frac{\exp(\alpha\ad O)-1}{\alpha\ad O}\delta(\alpha O)$,
$\ee^{-\alpha O}\delta\ee^{\alpha O}=\frac{1-\exp(-\alpha\ad O)}{\alpha\ad O}\delta(\alpha O)$ %\pagebreak 
with $\delta=d_{\dot\tau}=\dot\tau d/d\alpha$ and taking the nilpotence of $\alpha$ into account. 
\end{proof}

%\vfil\eject 

\subsection{\textcolor{blue}{\sffamily Derived Lie group and algebra cross modality}}\label{subsec:crossmdly}

In the setup of subsect. \cref{subsec:mapder}, 
one may consider the degenerate finite dimensional case where the graded manifold 
$N$ is the singleton manifold $*$. It is then natural to express everything %the resulting constructs  
in terms of the cross functor $(-)^+=\iMap(*,-)$ introduced in app. \cref{subsec:convnot}. 
This allows us to obtain a ``cross modality'' of the derived Lie group and algebras 
introduced in subsect. \cref{subsec:dercm}. 

In subsect. \cref{subsec:dercm}, it has been shown that 
with a given Lie group crossed module $\mathsans{M}=(\mathsans{E},\mathsans{G},\tau,\mu)$
there is associated a derived Lie group $\DD\mathsans{M}$. Through the cross functor $(-)^+$, 
the crossed modality derived Lie group $\DD\mathsans{M}^+$ can be also constructed.
$\DD\mathsans{M}^+$ enjoys properties analogous to those of $\DD\mathsans{M}$. 
The elements $F\in\DD\mathsans{M}^+$ are of the form \ceqref{dercm5/1} with  
$m\in\mathsans{G}^+\simeq\mathsans{G}$, $Q\in\mathfrak{e}[1]^+$ $\simeq\mathbb{R}[-1]\otimes\mathfrak{e}$
and the group operations read as in \ceqref{liecmx2/1}, \ceqref{liecmx3/1}, analogously to
\ceqref{liecmx1} and \ceqref{liecmx2}, \ceqref{liecmx3}. 
By \ceqref{dercm5/2}, we have further that 
$\DD\mathsans{M}^+\simeq\mathfrak{e}[1]^+\rtimes_{\mu{}\dot{}^+}\mathsans{G}^+$ as graded Lie groups,
similarly to \ceqref{liecm13}.
Finally, by virtue of \ceqref{dercm5/3}, for any Lie group crossed module morphism $\beta:\mathsans{M}'\rightarrow\mathsans{M}
=(\varPhi:\mathsans{E}'\rightarrow\mathsans{E}$, $\phi:\mathsans{G}'\rightarrow\mathsans{G})$
we have a graded Lie group morphism $\DD\beta^+:\DD\mathsans{M}^{\prime\,+}\rightarrow\DD\mathsans{M}^+$
with $\DD\beta^+=\dot\varPhi^+\times\phi^+$, analogously to \ceqref{liecm13/1}.  
The derived Lie group $\DD\mathsans{M}$ and its cross modality \pagebreak $\DD\mathsans{M}^+$ 
are however related in a deeper way as now we explain. 

The graded structure of 
$\DD\mathsans{M}$ stems from its being the subgroup of the internal function Lie group
$\iMap(\mathbb{R}[-1],\mathsans{E}\rtimes_{\mu}\mathsans{G})$ formed by the elements 
of the form \ceqref{liecmx1}. The degree $1$ carried by the variable $\bar\alpha\in\mathbb{R}[-1]$
is excluded in the degree counting. Upon including it instead, $\DD\mathsans{M}$ becomes a graded Lie group
concentrated in degree $0$, that can be treated as an ordinary Lie group. 
Completely analogous considerations apply for the cross modality derived Lie group $\DD\mathsans{M}^+$.
In the following, we shall regard these Lie groups in the way just 
described.

\begin{defi}\label{defi:crossmdly1}
Let $z_{\mathsans{M}}:\DD\mathsans{M}\rightarrow\DD\mathsans{M}^+$ be the map defined by 
\begin{equation}
z_{\mathsans{M}}P(\alpha)=\ee^{\alpha\,\zeta_{\mathfrak{e},1}(L)}a, \quad \alpha\in\mathbb{R}[1],
\label{crossmdly1}
\end{equation}
for $P\in\DD\mathsans{M}$ of the form \ceqref{liecmx1}.
\end{defi}

\noindent
Above, $\zeta_{\mathfrak{e},1}:\mathfrak{e}[1]\xrightarrow{~\simeq~}\mathfrak{e}[1]^+$ is the suspension 
isomorphism defined in app. \cref{subsec:convnot}. 

\begin{prop}\label{prop:crossmdly1}
$z_{\mathsans{M}}$ is a Lie group isomorphism.  
\end{prop}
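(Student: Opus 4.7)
The plan is to exploit the isomorphisms $\DD\mathsans{M}\simeq\mathfrak{e}[1]\rtimes_{\mu{}\dot{}}\mathsans{G}$ and $\DD\mathsans{M}^+\simeq\mathfrak{e}[1]^+\rtimes_{\mu{}\dot{}^+}\mathsans{G}^+$ provided by relation \ceqref{liecm13} and its cross modality analog, and to show that under these identifications $z_{\mathsans{M}}$ reduces to the product map $\zeta_{\mathfrak{e},1}\times\id_{\mathsans{G}}$ (recalling that $\mathsans{G}^+\simeq\mathsans{G}$ canonically because $\mathsans{G}$ is concentrated in degree $0$). Since the suspension map $\zeta_{\mathfrak{e},1}:\mathfrak{e}[1]\xrightarrow{\simeq}\mathfrak{e}[1]^+$ is an isomorphism of graded vector spaces by its definition in the appendix, this product map is immediately a bijective smooth map of graded manifolds, settling the bijectivity and smoothness halves of the claim.

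The remaining task is to verify that $z_{\mathsans{M}}$ respects the group law. Taking $P,Q\in\DD\mathsans{M}$ written as $P(\bar\alpha)=\ee^{\bar\alpha L}a$, $Q(\bar\alpha)=\ee^{\bar\alpha N}b$, the multiplication rule \ceqref{liecmx2} gives $PQ(\bar\alpha)=\ee^{\bar\alpha(L+\mu{}\dot{}(a,N))}ab$, whence by \ceqref{crossmdly1}
\begin{equation}
z_{\mathsans{M}}(PQ)(\alpha)=\ee^{\alpha\,\zeta_{\mathfrak{e},1}(L+\mu{}\dot{}(a,N))}ab.
\notag
\end{equation}
On the other hand, applying the cross modality version of \ceqref{liecmx2} to $z_{\mathsans{M}}(P)$ and $z_{\mathsans{M}}(Q)$ yields
\begin{equation}
z_{\mathsans{M}}(P)z_{\mathsans{M}}(Q)(\alpha)=\ee^{\alpha(\zeta_{\mathfrak{e},1}(L)+\mu{}\dot{}^+(a,\zeta_{\mathfrak{e},1}(N)))}ab.
\notag
\end{equation}
Agreement of the two expressions reduces to the identity $\zeta_{\mathfrak{e},1}(L+\mu{}\dot{}(a,N))=\zeta_{\mathfrak{e},1}(L)+\mu{}\dot{}^+(a,\zeta_{\mathfrak{e},1}(N))$, which holds because $\zeta_{\mathfrak{e},1}$ is a linear suspension isomorphism intertwining the $\mathsans{G}$--actions $\mu{}\dot{}$ and its cross prolongation $\mu{}\dot{}^+$ by construction. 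Compatibility with inversion, following from \ceqref{liecmx3}, is then automatic.

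The main subtlety to be careful about is the bookkeeping of the two different degree shifts in play: the $\bar\alpha\in\mathbb{R}[-1]$ used to define $\DD\mathsans{M}$ versus the $\alpha\in\mathbb{R}[1]$ used to define $\DD\mathsans{M}^+$, and the fact that $\zeta_{\mathfrak{e},1}$ trades the former shift of $\mathfrak{e}$ for the latter. Once one adopts the convention, stated just before Definition \cref{defi:crossmdly1}, of treating $\DD\mathsans{M}$ and $\DD\mathsans{M}^+$ as ordinary Lie groups by absorbing the grading of $\bar\alpha$ and $\alpha$, the computation above is algebraically identical to checking that $\zeta_{\mathfrak{e},1}\times\id_{\mathsans{G}}:\mathfrak{e}[1]\rtimes_{\mu{}\dot{}}\mathsans{G}\to\mathfrak{e}[1]^+\rtimes_{\mu{}\dot{}^+}\mathsans{G}$ is a semidirect product isomorphism, which is the transparent formulation of the result.
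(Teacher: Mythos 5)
Your proposal is correct and follows essentially the same route as the paper's own proof: both verify the morphism property by comparing the multiplication rule \ceqref{liecmx2} in $\DD\mathsans{M}$ with its cross modality counterpart \ceqref{liecmx2/1} via the linearity of the suspension $\zeta_{\mathfrak{e},1}$, and both deduce invertibility from that of $\zeta_{\mathfrak{e},1}$. Your write-up merely makes explicit the semidirect product bookkeeping that the paper leaves to the reader.
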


\begin{proof}
Using \ceqref{liecmx2}, the linearity of $\zeta_{\mathfrak{e},1}$ and 
\ceqref{liecmx2/1}, one readily checks that $z_{\mathsans{M}}$ is a Lie group morphism.
The invertibility of $z_{\mathsans{M}}$ follows from that of $\zeta_{\mathfrak{e},1}$. 
\end{proof}   %\zeta_

In subsect. \cref{subsec:dercm}, it has been also shown that 
with a given Lie algebra crossed module $\mathfrak{m}=(\mathfrak{e},\mathfrak{g},t,m)$
there is associated a derived Lie algebra $\DD\mathfrak{m}$. Through the cross functor $(-)^+$, 
the crossed modality derived Lie algebra $\DD\mathfrak{m}^+$ can be also constructed.
$\DD\mathfrak{m}^+$ enjoys properties analogous to those of $\DD\mathfrak{m}$. 
The elements $S\in\DD\mathfrak{m}^+$ are of the form \ceqref{dercm7/1} with 
$j\in\mathfrak{g}^+\simeq\mathfrak{g}$, $J\in\mathfrak{e}[1]^+\simeq\mathbb{R}[-1]\otimes\mathfrak{e}$ 
and the Lie algebra operations read as in \ceqref{liecmx6/1}, similarly to \ceqref{liecmx5} and \ceqref{liecmx6}. 
By \ceqref{dercm6/2}, we have also that 
$\DD\mathfrak{m}^+\simeq\mathfrak{e}[1]^+\rtimes_{m^+}\mathfrak{g}^+$ as graded Lie algebras,
analogously to \ceqref{liecm14}.
Finally, by virtue of \ceqref{dercm6/3}, with every Lie algebra crossed module morphism 
$p:\mathfrak{m}'\rightarrow\mathfrak{m}$ 
$=(H:\mathfrak{e}'\rightarrow\mathfrak{e},h:\mathfrak{g}'\rightarrow\mathfrak{g})$ there is associated 
a graded Lie algebra morphism $\DD p^+:\DD\mathfrak{m}^{\prime\,+}\rightarrow\DD\mathfrak{m}^+$
with $\DD p^+=H^+\times h^+$, similarly to \ceqref{liecm14/1} Just as in the group case, the derived Lie algebra 
$\DD\mathfrak{m}$ and its cross modality $\DD\mathfrak{m}^+$ 
are related also in another deeper way. 

The graded structure of $\DD\mathfrak{m}$ stems \pagebreak from its being the subalgebra of the 
internal function Lie algebra $\iMap(\mathbb{R}[-1],\mathfrak{e}\rtimes_{m}\mathfrak{g})$ 
formed by the elements of the form \ceqref{liecmx5}. Including the degree $1$ of the variable
$\bar\alpha\in\mathbb{R}[-1]$ in the degree counting, $\DD\mathfrak{m}$ becomes a graded Lie algebra concentrated
in degree $0$ and as such can be treated as an ordinary Lie algebra. 
Completely analogous considerations apply for the cross mode derived Lie algebra 
$\DD\mathfrak{m}^+$. 
In the following, as in the group case, we shall regard these Lie algebras in the way just 
described.

\begin{defi}\label{defi:crossmdly2}
Let $\zeta_{\mathfrak{m}}:\DD\mathfrak{m}\rightarrow\DD\mathfrak{m}^+$ be the map defined by 
\begin{equation}
\zeta_{\mathfrak{m}}Y(\alpha)=u+\alpha\,\zeta_{\mathfrak{e},1}(U), \quad \alpha\in\mathbb{R}[1],
\label{crossmdly2}
\end{equation}
for $Y\in\DD\mathfrak{m}$ of the form \ceqref{liecmx5}.
%, where the suspension isomorphism
%$\zeta_{\mathfrak{e},1}:\mathfrak{e}[1]\xrightarrow{~\simeq~}\mathfrak{e}[1]^+$ is defined in app. 
%\cref{subsec:convnot}. 
\end{defi}

\begin{prop}\label{prop:crossmdly2}
$\zeta_{\mathfrak{m}}$ is a Lie algebra isomorphism.  
\end{prop}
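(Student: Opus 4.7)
The plan is to mirror the argument used for Proposition \cref{prop:crossmdly1} concerning $z_{\mathsans{M}}$, exploiting the fact that $\zeta_{\mathfrak{e},1}$ is an isomorphism of graded vector spaces and that the bracket on $\DD\mathfrak{m}^+$ has the same structural form as the one on $\DD\mathfrak{m}$, as shown by comparing \ceqref{liecmx6} with \ceqref{liecmx6/1} restricted to the degree--$0$ component.

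First I would verify that $\zeta_{\mathfrak{m}}$ is well defined and linear. Given $Y\in\DD\mathfrak{m}$ of the form \ceqref{liecmx5}, the assignment $\zeta_{\mathfrak{m}}Y(\alpha)=u+\alpha\,\zeta_{\mathfrak{e},1}(U)$ indeed lies in $\DD\mathfrak{m}^+$ since $u\in\mathfrak{g}\simeq\mathfrak{g}^+$ and $\zeta_{\mathfrak{e},1}(U)\in\mathfrak{e}[1]^+$, matching the shape \ceqref{dercm7/1} of elements of $\DD\mathfrak{m}^+$. Linearity follows at once from the linearity of $\zeta_{\mathfrak{e},1}$.

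Next I would check that $\zeta_{\mathfrak{m}}$ preserves the Lie bracket. For $Y,W\in\DD\mathfrak{m}$ with $Y(\bar\alpha)=u+\bar\alpha U$ and $W(\bar\alpha)=v+\bar\alpha V$, formula \ceqref{liecmx6} gives
\begin{equation*}
[Y,W](\bar\alpha)=[u,v]+\bar\alpha(m(u,V)-m(v,U)),
\end{equation*}
so $\zeta_{\mathfrak{m}}[Y,W](\alpha)=[u,v]+\alpha\,\zeta_{\mathfrak{e},1}(m(u,V)-m(v,U))$. On the other hand, \ceqref{liecmx6/1} yields
\begin{equation*}
[\zeta_{\mathfrak{m}}Y,\zeta_{\mathfrak{m}}W](\alpha)=[u,v]+\alpha(m^+(u,\zeta_{\mathfrak{e},1}(V))-m^+(v,\zeta_{\mathfrak{e},1}(U))),
\end{equation*}
and equality of the two expressions reduces to the equivariance of $\zeta_{\mathfrak{e},1}$ with respect to the $\mathfrak{g}$--action, i.e.\ $\zeta_{\mathfrak{e},1}\circ m(u,-)=m^+(u,-)\circ\zeta_{\mathfrak{e},1}$, which is a defining property of the suspension isomorphism recalled in app.\ \cref{subsec:convnot}.

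Finally, bijectivity is immediate: an explicit two--sided inverse is given by $\zeta_{\mathfrak{m}}{}^{-1}S(\bar\alpha)=j+\bar\alpha\,\zeta_{\mathfrak{e},1}{}^{-1}(J)$ for $S(\alpha)=j+\alpha J\in\DD\mathfrak{m}^+$, whose well--definedness and inverse property follow from those of $\zeta_{\mathfrak{e},1}{}^{-1}$. The only step that requires any care is the bracket computation, and even that is routine once one keeps track of how the suspension isomorphism intertwines the two copies of the $\mathfrak{g}$--action $m$; I do not foresee any real obstacle.
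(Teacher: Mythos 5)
Your proof is correct and follows essentially the same route as the paper's: the paper likewise checks the morphism property by comparing \ceqref{liecmx6} with \ceqref{liecmx6/1} using the linearity of $\zeta_{\mathfrak{e},1}$ (which, being the suspension identification of the same underlying space $\mathfrak{e}$, automatically intertwines the two copies of the action $m$ — the "equivariance" you invoke, though the appendix does not state it as an explicit axiom), and deduces invertibility from that of $\zeta_{\mathfrak{e},1}$. Your version merely spells out the details the paper leaves to the reader.
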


\begin{proof}
Using \ceqref{liecmx6}, %\pagebreak 
the linearity of $\zeta_{\mathfrak{e},1}$ and 
\ceqref{liecmx6/1}, one readily checks that $\zeta_{\mathfrak{m}}$ is a Lie algebra morphism.
The invertibility of $\zeta_{\mathfrak{m}}$ follows from that of $\zeta_{\mathfrak{e},1}$. 
\end{proof}

When $\mathfrak{m}$ is the Lie algebra crossed module of 
a Lie group crossed module $\mathsans{M}$, $\DD\mathfrak{m}^+$
is the Lie algebra of the Lie group $\DD\mathsans{M}^+$. Further, for a Lie group crossed 
module morphism $\beta:\mathsans{M}'\rightarrow\mathsans{M}$ with associated Lie algebra crossed 
module morphism $\dot\beta:\mathfrak{m}'\rightarrow\mathfrak{m}$, the Lie algebra morphism
$\DD \dot\beta^+$ is the Lie differential $\dot{\DD}\beta^+$ of the Lie group morphism $\DD\beta^+$.  

\begin{prop}
$\zeta_{\mathfrak{m}}$ is the Lie differential of $z_{\mathsans{M}}$.
\begin{equation}
\zeta_{\mathfrak{m}}=\dot z_{\mathsans{M}}.
\end{equation}
\end{prop}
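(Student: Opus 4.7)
The plan is to compute $\dot z_{\mathsans{M}}$ directly via curves and match the resulting formula with the explicit expression for $\zeta_{\mathfrak{m}}$ supplied by Definition \cref{defi:crossmdly2}. Since $\DD\mathsans{M}$ is treated here as an ordinary (degree--$0$) Lie group, Lie differentiation proceeds in the usual way: evaluate $z_{\mathsans{M}}$ on a smooth curve through the identity and take the derivative at $t=0$.

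First I would recall, from the proof that $\DD\mathfrak{m}$ is the Lie algebra of $\DD\mathsans{M}$, that every tangent vector $Y\in\DD\mathfrak{m}$ at $1\in\DD\mathsans{M}$ arises as
\begin{equation*}
Y(\bar\alpha)=\varGamma(t)^{-1}\tfrac{d}{dt}\varGamma(t)(\bar\alpha)\big|_{t=0}
=\gamma(t)^{-1}\tfrac{d\gamma}{dt}\big|_{t=0}+\bar\alpha\,\tfrac{d\varXi}{dt}\big|_{t=0}
\end{equation*}
for some curve $\varGamma(t)(\bar\alpha)=\ee^{\bar\alpha\varXi(t)}\gamma(t)$ in $\DD\mathsans{M}$ with $\gamma(0)=1_{\mathsans{G}}$ and $\varXi(0)=0$, so that, writing $Y(\bar\alpha)=u+\bar\alpha U$ as in \ceqref{liecmx5}, one has $u=\gamma^{-1}\dot\gamma|_0$ and $U=\dot\varXi|_0$.

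Next I would apply $z_{\mathsans{M}}$ to $\varGamma(t)$ using Definition \cref{defi:crossmdly1}, obtaining the curve
\begin{equation*}
z_{\mathsans{M}}\varGamma(t)(\alpha)=\ee^{\alpha\,\zeta_{\mathfrak{e},1}(\varXi(t))}\gamma(t),\qquad \alpha\in\mathbb{R}[1],
\end{equation*}
in $\DD\mathsans{M}^+$. Since $z_{\mathsans{M}}$ maps $1\in\DD\mathsans{M}$ to $1\in\DD\mathsans{M}^+$,
differentiating at $t=0$ and using the linearity (hence smoothness) of $\zeta_{\mathfrak{e},1}$ yields
\begin{equation*}
\dot z_{\mathsans{M}}(Y)(\alpha)=\gamma^{-1}\tfrac{d\gamma}{dt}\big|_{t=0}+\alpha\,\zeta_{\mathfrak{e},1}\!\left(\tfrac{d\varXi}{dt}\big|_{t=0}\right)=u+\alpha\,\zeta_{\mathfrak{e},1}(U).
\end{equation*}

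Finally, a comparison with \ceqref{crossmdly2} identifies the right-hand side with $\zeta_{\mathfrak{m}}Y(\alpha)$, yielding the claim. There is no real obstacle here; the only point requiring attention is the bookkeeping of the two distinct odd parameters $\bar\alpha\in\mathbb{R}[-1]$ and $\alpha\in\mathbb{R}[1]$, and the fact that the suspension $\zeta_{\mathfrak{e},1}$ is what reconciles the two degree conventions — this is precisely why $\zeta_{\mathfrak{m}}$ is built from $\zeta_{\mathfrak{e},1}$ in the same position in which $z_{\mathsans{M}}$ is, so that Lie differentiation commutes with the suspension and the identity $\dot z_{\mathsans{M}}=\zeta_{\mathfrak{m}}$ emerges tautologically.
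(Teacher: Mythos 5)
Your proof is correct and follows the same route as the paper, which simply declares the identity ``apparent from inspecting'' the defining formulas \ceqref{crossmdly1} and \ceqref{crossmdly2}; you merely make explicit the curve computation (differentiating $z_{\mathsans{M}}\varGamma(t)(\alpha)=\ee^{\alpha\,\zeta_{\mathfrak{e},1}(\varXi(t))}\gamma(t)$ at $t=0$ and invoking the linearity of $\zeta_{\mathfrak{e},1}$) that the paper leaves implicit. No gaps.
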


\begin{proof}
This is apparent from inspecting \ceqref{crossmdly1}, \ceqref{crossmdly2}.
\end{proof}

%\vfil\eject 

\subsection{\textcolor{blue}{\sffamily Lie group crossed module spaces}}\label{subsec:cmsp}

\vspace{-.05mm}
The morphism and object manifolds of a principal $2$--bundle are equipped with the right action of the 
morphism and object groups of the structure Lie $2$--group, respectively. They are so special cases of 
Lie group spaces (cf. def. \cref{defi:lgrsp1}). %as defined in subsect. \cref{subsec:opers}.
In the synthetic crossed module theoretic version of the theory, Lie group spaces of this kind 
is instances of Lie group crossed module spaces, a notion we introduce next. %d in the present subsection. 

%The morphism manifold of a principal $2$--bundle is equipped with the right action of the 
%morphism group of the structure Lie $2$--group. It is therefore a special case of a 
%Lie group space (cf. def. \cref{defi:lgrsp1}). %as defined in subsect. \cref{subsec:opers}.
%In the synthetic crossed module theoretic version of the theory, a Lie group space of this kind 
%is an instance of a Lie group crossed module space, a notion introduced in the present subsection. 

%Below, we let $\mathsans{M}=(\mathsans{E},\mathsans{G},\tau,\mu)$ be a Lie group crossed module,
%$\mathfrak{m}=(\mathfrak{e},\mathfrak{g},\dot\tau,{}\dot{}\mu{}\dot{})$ be 
%its associated Lie algebra crossed module
%and $\DD\mathsans{M}$ and $\DD\mathfrak{m}$ the derived Lie group and algebra thereof
%(cf. subsects.  \cref{subsec:liecm}, \ceqref{subsec:dercm}). 

\begin{defi}
A Lie group crossed module space $S$ is a Lie group space of the special form 
$(P,\DD\mathsans{M},R)$
for some graded manifold $P$, Lie group crossed module $\mathsans{M}$ %=(\mathsans{E},\mathsans{G})$
and right action $R$ of $\DD\mathsans{M}$ on $P$.
\end{defi} 

\noindent 
%See  subsects. \cref{subsec:liecm}, \cref{subsec:dercm}. 
Above, the derived Lie group $\DD\mathsans{M}$ (cf. subsect. \cref{subsec:dercm}) 
is regarded as an ordinary Lie group in the sense explained in subsect. \cref{subsec:crossmdly}. 
We shall denote the space $S$ through the list $(P,\mathsans{M},R)$ of its defining data.
%and shall distinguish different Lie group spaces and their data by means of apexes. 

%A morphism $T:S'\rightarrow S$ of Lie group crossed module spaces consists 
%of a map $F:P'\rightarrow P$ of graded manifolds and a morphism 
%$\beta:\mathsans{M}'\rightarrow\mathsans{M}$
%%=(\varPhi:\mathsans{E}'\rightarrow\mathsans{E},\phi:\mathsans{G}'\rightarrow\mathsans{G})$ 
%of Lie group crossed modules such that the pair $(F:P'\rightarrow P,
%\DD\beta:\DD\mathsans{M}'\rightarrow\DD\mathsans{M})$
%is a Lie group space morphism of $(P',\DD\mathsans{M}',R)$ 
%into $(P,\DD\mathsans{M},R)$ (cf. subsects. \cref{subsec:opers} and \cref{subsec:liecm}). 

A notion of Lie group crossed module space morphism can be formulated as a specialization of that of Lie
group space morphism. (cf. def. \cref{defi:lgrsp2}). 

\begin{defi}
A morphism $T:S'\rightarrow S$ of Lie group crossed module spaces consists in a morphism of the Lie group spaces 
$(P',\DD\mathsans{M}',R')$, $(P,\DD\mathsans{M},R)$ underlying $S'$, $S$ of the special form 
$(F:P'\rightarrow P$, $\DD\beta:\DD\mathsans{M}'\rightarrow\DD\mathsans{M})$ 
for some graded manifold morphism $F:P'\rightarrow P$ and 
Lie group crossed module morphism $\beta:\mathsans{M}'\rightarrow\mathsans{M}$.
\end{defi} 

\noindent 
Above, $\DD\beta$ is the derived Lie group morphism of $\beta$ 
(cf. subsect. \cref{subsec:dercm}). % and \cref{subsec:mapder}). 
We shall denote the morphism $T:S'\rightarrow S$ 
by the list $(F:P'\rightarrow P,\beta:\mathsans{M}'\rightarrow\mathsans{M})$ of its defining data,
omitting indicating sources and targets when possible.

In this way, Lie group crossed module spaces and their morphisms form a category $\bfs{\mathrm{Lcmsp}}$
that can be identified with a subcategory of the category $\bfs{\mathrm{Lsp}}$ of Lie group spaces
defined in subsect. \cref{subsec:opers}. 

With any Lie group crossed module space there is associated an operation on general grounds
(cf. subsect. \cref{subsec:opers}, prop. \cref{prop:lgrsp1}).

\begin{defi}
For any Lie group  crossed module space $S=(P,\mathsans{M},R)$, 
we let $\iOOO S$ be the operation $(\iFun(T[1]P),\DD\mathfrak{m})$ of the Lie group space 
$(P,\DD\mathsans{M},R)$ underlying $S$. % (cf. subsect. \cref{subsec:opers}).
\end{defi}

\noindent
Above, $\mathfrak{m}$ is the Lie algebra crossed module associated with $\mathsans{M}$
(cf. subsect. \cref{subsec:dercm}). Further, $\DD\mathfrak{m}$ is regarded as an ordinary Lie algebra
in the sense explained in subsect. \cref{subsec:crossmdly}. 
In keeping with our notational conventions, we shall denote $\iOOO S$ 
concisely as $(\iFun(T[1]P),\mathfrak{m})$. % or more briefly $(\iFun(T[1]P),\mathfrak{m})$. %As a rule,
We shall also denote the operation derivations by $d_P$, $j_P$, $l_P$. 
In similar fashion, any morphism of Lie group crossed module spaces determines 
a morphism of the associated operations again on general grounds
(cf. subsect. \cref{subsec:opers}, prop. \cref{prop:lgrsp2}).

\begin{defi}
For any Lie group crossed module space morphism $T:S'\rightarrow S$ $=
(F:P'\rightarrow P,\beta:\mathsans{M}'\rightarrow\mathsans{M})$, 
we denote by $\,\iOOO T:\iOOO S\rightarrow\iOOO S'$ the morphism 
$(F^*:\iFun(T[1]P)\rightarrow\iFun(T[1]P'), 
\DD\dot\beta:\DD\mathfrak{m}'\rightarrow\DD\mathfrak{m})$
of the operations $(\iFun(T[1]P),\DD\mathfrak{m})$, 
$(\iFun(T[1]P'),\DD\mathfrak{m}')$ underlying $\iOOO S$, $\iOOO S'$.
\end{defi}

\noindent
Above, $\dot\beta$ %:\mathfrak{m}'\rightarrow\mathfrak{m}$
is the Lie algebra crossed module morphism  yield\-ed by the Lie group crossed module morphism 
$\beta$ %:\mathsans{M}'\rightarrow\mathsans{M}$ 
by Lie differentiation and $\DD\dot\beta$ is the derived Lie
algebra morphism of $\dot\beta$ 
(cf. subsects. \cref{subsec:liecm}, \cref{subsec:dercm}). 
%We shall denote the morphism $\iOOO T$, omitting indicating sources and targets when possible,
%by $(F^*:\iFun(T[1]P)\rightarrow\iFun(T[1]P',\dot\beta:\mathfrak{m}'\rightarrow\mathfrak{m})$.
% or more briefly as $(F^*,\dot\beta)$.

%of the lie algebra $\DD\mathfrak{m}$ on the graded commutative algebra $\iFun(T[1]P)$.
%a $\mathsans{G}'$--space $P'$ into a $\mathsans{G}$-space $P$ 
%over the  Lie group crossed module morphism $\phi:\mathsans{G}'\rightarrow\mathsans{G}$, 
%there is associated an operation morphism $F^*:\iFun(T[1]P)\rightarrow\iFun(T[1]P')$ over the Lie algebra 
%morphism $h_*:\DD\mathfrak{m}'\rightarrow
%\DD\mathfrak{m}$ associated with $\phi$. %
%
%With a $\mathsans{G}$--space there is associated the operation $\iFun(T[1]P)$ over the Lie algebra 
%$\DD\mathfrak{m}$ with de Rham, contraction and Lie derivations  $d_P$, $j_P$, $l_P$, 
%by means of the construction detailed in subsect. \cref{subsec:opers}. 

Consider  the special case of a morphism $T:S'\rightarrow S$ of Lie group crossed module spaces specified by a pair
$(F:P'\rightarrow P,\beta:\mathsans{M}'\rightarrow\mathsans{M})$, where $P=P'$ as graded 
manifolds and $F=\id_P$ as a graded manifold map. We then denote $S'$ by $\beta^*S$ and call it 
the pull-back of $S$ by $\beta$ and the operation $\iOOO S'$ by $\dot\beta^*\iOOO S$ and call it the 
pull-back by $\dot\beta$ of $\iOOO S$, so that $\iOOO\beta^*S=\dot\beta^*\iOOO S$. In fact, as Lie group spaces, 
$S'$ is just the pull--back $\DD\beta^*S$ of $S$ by the Lie group morphism $\DD\beta$. Furthermore, 
as operations of Lie group spaces, $\iOOO S'$ is just the pull--back $\DD\dot\beta{}^*\iOOO S$ of the operation $\iOOO S$
by the Lie algebra morphism $\DD\dot\beta$. See again  subsect. \cref{subsec:opers}. 

In the spirit of subsect. \cref{subsec:opers} above, we can think of a Lie group crossed module space 
$S=(P,\mathsans{M},R)$ as a generalized principal $\DD\mathsans{M}$--bundle
over $P/\DD\mathsans{M}$. In this way, we can identify the complexes
$(\iFun(T[1]P),d_P)$ and $(\iFun(T[1]P)_{\mathrm{basic}},d_P)$ of the associated operation $\iOOO S$ 
with the de Rham complexes $(\Omega^\bullet(P),d_{dR\,P})$ and $(\Omega^\bullet(P/\DD\mathsans{M}), 
d_{dR\,P/\DD\mathsans{M}})$, respectively.

%\vfil\eject

\subsection{\textcolor{blue}{\sffamily Total space operations of  a principal 2--bundle}}\label{subsec:2prinbundop}

In this final subsection, we introduce and study the morphism and object space of a principal $2$--bundle 
and their associated operations. In this way, we have set the foundations for 
the operational total space theory of principal $2$--bundles, 
in particular of the $2$--connection and $1$-- and $2$-- gauge transformation theory of II, 
which is the goal of the present endeavour. 

Theor. \cref{theor:2lgr2cm} states that any strict Lie $2$--group $\hat{\matheul{K}}$ is fully 
described by a Lie group crossed module. % $\mathsans{M}$. 

\begin{prop}
The Lie group crossed module codifying $\hat{\matheul{K}}$ is $\mathsans{M}=(\mathsans{E},\mathsans{G},\tau,\mu)$,  
where $\mathsans{E}=\ker\hat{s}$, $\mathsans{G}=\hat{\mathsans{K}}_0$, $\tau=\hat{\epsilon}$
and $\mu=\hat{\lambda}$.
\end{prop}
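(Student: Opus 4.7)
The plan is to verify directly the four axioms of Definition \cref{defi:lgcm} for the candidate data $\mathsans{E}=\ker\hat{s}$, $\mathsans{G}=\hat{\mathsans{K}}_0$, $\tau=\hat{\epsilon}=\hat{t}|_{\ker\hat{s}}$, $\mu=\hat{\lambda}$ with $\hat{\lambda}(a,Z)=aZa^{-1}$, thereby explicitly identifying the crossed module whose existence is asserted by Theorem \cref{theor:2lgr2cm}. Before doing so, I would collect a few preliminary facts that are used throughout: $\mathsans{E}$ and $\mathsans{G}$ are genuine Lie subgroups of $\hat{\mathsans{K}}$ (the first as a closed normal kernel, the second by hypothesis); the identity-assigning map is the inclusion $\hat{\mathsans{K}}_0\subset\hat{\mathsans{K}}$, so that for $a\in\hat{\mathsans{K}}_0$ one has $\hat{s}(a)=\hat{t}(a)=a$; and $\hat{s}$, $\hat{t}$ are Lie group morphisms on $\hat{\mathsans{K}}$.

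Next I would check that $\mu$ and $\tau$ are well defined. For the former, if $a\in\hat{\mathsans{K}}_0$ and $Z\in\ker\hat{s}$, then $\hat{s}(aZa^{-1})=a\cdot 1\cdot a^{-1}=1$, so $\hat\lambda(a,Z)\in\ker\hat s$; for the latter, $\tau=\hat t|_{\ker\hat s}$ is a smooth group morphism into $\hat{\mathsans{K}}$ whose image lies in $\hat{\mathsans{K}}_0$ by a standard fact about strict $2$-groups (every morphism has source and target of the same object type). Conditions (1)--(2) of Definition \cref{defi:lgcm} then follow immediately because $\mu(a,\cdot)$ is conjugation by $a\in\hat{\mathsans{K}}$, which is always a Lie group automorphism of the normal subgroup $\ker\hat{s}$, and $a\mapsto\mathrm{Ad}_a|_{\ker\hat{s}}$ is a Lie group morphism. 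Equivariance (3) is a direct calculation:
\begin{equation*}
\tau(\mu(a,A))=\hat{t}(aAa^{-1})=\hat{t}(a)\,\hat{t}(A)\,\hat{t}(a)^{-1}=a\,\tau(A)\,a^{-1},
\end{equation*}
using that $\hat{t}$ is a group morphism and $\hat{t}(a)=a$ for $a\in\hat{\mathsans{K}}_0$.

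The only genuinely nontrivial axiom---and the main obstacle---is the Peiffer identity (4), which asserts $\mu(\tau(A),B)=ABA^{-1}$ for $A,B\in\ker\hat{s}$, i.e.\ $\hat{t}(A)\,B\,\hat{t}(A)^{-1}=ABA^{-1}$. Unlike (1)--(3), this relation is not a mere consequence of $\hat{\mathsans{K}}$ being a group: it is where the $2$-groupoid structure genuinely enters. I would prove it by invoking the interchange law between the categorical composition $\circ$ of $\hat{\mathsans{K}}$ and the group multiplication $\cdot$, which holds because multiplication is a groupoid functor. Writing $A=\hat{t}(A)\circ A$ and $B=B\circ\hat{s}(B)^{-1}\cdot\hat{s}(B)=B\circ 1$, one applies interchange to
\begin{equation*}
A\cdot B=(\hat{t}(A)\circ A)\cdot(B\circ 1)=(\hat{t}(A)\cdot B)\circ(A\cdot 1)=(\hat{t}(A)B)\circ A,
\end{equation*}
and dually $A\cdot B=(A\hat{t}(B))\circ B$. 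Applying the same trick to the full product $A\cdot B\cdot A^{-1}$, decomposing each factor as a composition of an identity $2$-morphism with itself, and using the two expressions above, one obtains after cancellation in the groupoid that $ABA^{-1}$ and $\hat{t}(A)B\hat{t}(A)^{-1}$ agree as $2$-morphisms from $1$ to $\hat{t}(A)\hat{t}(B)\hat{t}(A)^{-1}$.

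In summary, the proof is largely routine bookkeeping except for the Peiffer identity; all four axioms can then be assembled into the statement that $(\mathsans{E},\mathsans{G},\tau,\mu)$ is the Lie group crossed module codifying $\hat{\matheul{K}}$, consistently with the isomorphism \eqref{2prinbund3} and the structural description following Theorem \cref{theor:2lgr2cm}.
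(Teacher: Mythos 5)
Your verification is correct, but it takes a genuinely different route from the paper. The paper's own proof is a single sentence: it declares the proposition to be a restatement of theor. \cref{theor:2lgr2cm} in the crossed--module notation of subsect. \cref{subsec:liecm}, and that theorem is itself quoted without proof as the classical equivalence between strict Lie $2$--groups and Lie group crossed modules (the Baez--Lauda result cited in \ccite{Baez5}). You instead verify the four axioms of def. \cref{defi:lgcm} from scratch: well--definedness of $\mu=\hat\lambda$ and $\tau=\hat\epsilon$, axioms (1)--(2) from normality of $\ker\hat s$, equivariance from $\hat t$ being a group morphism restricting to the identity on $\hat{\mathsans{K}}_0$, and the Peiffer identity from the interchange law between the groupoid composition and the group multiplication. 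That last step is the only substantive one and your argument is the standard (and correct) one; the ``cancellation'' you allude to can be made fully explicit by writing $ABA^{-1}=\bigl((\hat t(A)B)\circ A\bigr)\cdot\bigl(\hat t(A)^{-1}\circ A^{-1}\bigr)=\bigl(\hat t(A)B\hat t(A)^{-1}\bigr)\circ(AA^{-1})=\hat t(A)B\hat t(A)^{-1}$, using $A^{-1}=\hat t(A)^{-1}\circ A^{-1}$ and the fact that composing with the identity morphism of the object $1$ is trivial. What each approach buys: the paper's deferral keeps the exposition short and leans on a well--known equivalence, at the cost of leaving the actual crossed--module axioms unchecked anywhere in the text; your version is self--contained and makes visible exactly where the $2$--categorical structure (the interchange law) enters, namely only in the Peiffer identity, the other three axioms being pure group theory.
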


\noindent Here, $\hat{s}$ %,\,\hat{t}:\hat{\mathsans{K}}\rightarrow\hat{\mathsans{K}}_0$ 
is the source  map of $\hat{\mathsans{K}}$ and $\hat{\lambda}$ and $\hat{\epsilon}$ 
are the action and target maps defined in eqs. \ceqref{2prinbund4}
and \ceqref{2prinbund4/1}, respectively.  

\begin{proof}
This is just a restatement of theor. \cref{theor:2lgr2cm} in the notation of subsect. \cref{subsec:liecm}. 
\end{proof}

\begin{prop}
The Lie group crossed module $\mathsans{M}$ contains the submodule 
$\mathsans{M}_0=(1_{\mathsans{E}},\mathsans{G},\tau_0,\mu_0)$,  
where $\tau_0$ and $\mu_0$ are the restrictions of $\tau$ and $\mu$ to $1_{\mathsans{E}}\subset\mathsans{E}$
and $\mathsans{G}\times 1_{\mathsans{E}}\subset \mathsans{G}\times\mathsans{E}$, respectively. 
\end{prop}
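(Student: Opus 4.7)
The plan is to verify that $\mathsans{M}_0$ satisfies the four conditions of Definition \ref{defi:lgcm} and then exhibit the inclusion morphism realizing $\mathsans{M}_0$ as a submodule of $\mathsans{M}$ in the sense of Definition \ref{defi:morlgcm}. Since $\mathsans{M}_0$ has a trivial $\mathsans{E}$--component, almost every verification collapses to an identity, so the task is really one of bookkeeping rather than genuine argument.

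The first step is to observe that the restrictions $\tau_0$ and $\mu_0$ make sense. For $\tau_0$ this is automatic because $\tau(1_\mathsans{E})=1_\mathsans{G}$. For $\mu_0$, the point is that for each $a\in\mathsans{G}$ the map $\mu(a,\cdot)$ is an element of $\Aut(\mathsans{E})$ by condition (i) of Definition \ref{defi:lgcm}, hence fixes the identity: $\mu(a,1_\mathsans{E})=1_\mathsans{E}$. Consequently $\mu_0$ lands in $1_\mathsans{E}$ as required.

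The second step is to check the four defining properties of a Lie group crossed module for $\mathsans{M}_0$. Each $\mu_0(a,\cdot)$ is the identity automorphism of the trivial group, so (i) holds trivially; the assignment $a\mapsto\mu_0(a,\cdot)$ is the constant morphism to $\Aut(1_\mathsans{E})=\{\id\}$, so (ii) holds; the equivariance relation $\tau_0(\mu_0(a,1_\mathsans{E}))=a\tau_0(1_\mathsans{E})a^{-1}$ reduces on both sides to $1_\mathsans{G}$; and the Peiffer identity $\mu_0(\tau_0(A),B)=ABA^{-1}$ for $A,B\in 1_\mathsans{E}$ reduces on both sides to $1_\mathsans{E}$.

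The final step is to exhibit the submodule structure by the pair $\beta_0=(\iota_\mathsans{E}:1_\mathsans{E}\hookrightarrow\mathsans{E},\ \id_\mathsans{G}:\mathsans{G}\to\mathsans{G})$ and to verify the two conditions of Definition \ref{defi:morlgcm}. Relation \eqref{liecm3} reads $\tau(\iota_\mathsans{E}(A))=\id_\mathsans{G}(\tau_0(A))$, both sides being $1_\mathsans{G}$ for $A\in 1_\mathsans{E}$; relation \eqref{liecm4} reads $\iota_\mathsans{E}(\mu_0(a,A))=\mu(a,\iota_\mathsans{E}(A))$, both sides being $1_\mathsans{E}$ by the automorphism argument above. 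There is no real obstacle here: the only substantive ingredient is that automorphisms fix the identity, which is precisely why $1_\mathsans{E}$ is a legitimate subobject in the crossed-module category.
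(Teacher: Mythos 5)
Your proof is correct and takes essentially the same route as the paper, which simply observes that the statement follows from $\tau_0$ and $\mu_0$ being restrictions of $\tau$ and $\mu$. Your version merely spells out the routine verifications (automorphisms fix the identity, the axioms collapse to trivialities on $1_{\mathsans{E}}$) that the paper leaves implicit.
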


\noindent
Note that $\tau_0$ and $\mu_0$ are necessarily trivial. 

\begin{proof}
The statement follows from $\tau_0$ and $\mu_0$ being restrictions of $\tau$ and $\mu$. 
\end{proof}

\noindent
In fact, $\mathsans{M}_0$ is just the discrete crossed module of the group $\mathsans{G}$. 
$\mathsans{M}_0$ provides a crossed module theoretic coding of $\mathsans{G}$,
which will turn out to be quite useful in the following. 

With the Lie Lie $2$--group $\hat{\matheul{K}}$, there are associated the synthetic Lie groups 
$\mathsans{K}$, $\mathsans{K}_0$ (cf. def. \cref{defi:synthk}). 
%From comparing \ceqref{ex2lgr2cm1} and \ceqref{liecmx1}, 
Recalling the derived construction expounded in subsect. \cref{subsec:dercm},
it is apparent that $\mathsans{K}$ is nothing but the 
derived Lie group $\DD\mathsans{M}$ of $\mathsans{M}$ 
\begin{equation}
\mathsans{K}=\DD\mathsans{M}.
\label{subsec:2prinbundop1}
\end{equation}
Analogously, $\mathsans{K}_0$ can be described as the derived group $\DD\mathsans{M}_0$, 
\begin{equation}
\mathsans{K}_0=\DD\mathsans{M}_0.
\label{subsec:2prinbundop2}
\end{equation}

In the synthetic formulation, with a principal $\hat{\matheul{K}}$--$2$--bundle $\hat{\mathcal{P}}$ 
there are associated synthetic forms of the morphism and object manifolds $\hat P$, $\hat P_0$, 
viz $P$, $P_0$ (cf. def. \cref{defi:synthp}). Importantly, $P$, $P_0$ are equipped with right 
$\mathsans{K}$--, $\mathsans{K}_0$--actions $R$, $R_0$, respectively (cf. def. \cref{defi:synthract} 
and prop. \cref{prop:synthract}). 

By \ceqref{subsec:2prinbundop1}, the $\mathsans{K}$--action $R$ of $P$ 
can be expressed as one of $\DD\mathsans{M}$. \pagebreak 
Relying on framework constructed in subsect. \cref{subsec:cmsp}, we can 
then state the following definition.

\begin{defi}
The morphism space of the principal $2$--bundle $P$ is the Lie group crossed module space 
$S_{P}=(P,\mathsans{M},R)$. 
\end{defi}

\noindent 
Similarly, by \ceqref{subsec:2prinbundop2}, the $\mathsans{K}_0$ --action $R_0$ of $P_0$ can be expressed as one 
of $\DD\mathsans{M}_0$. In the same framework, we can state a further definition.

\begin{defi}
The object space of the principal $2$--bundle $P$ is the Lie group crossed module space 
$S_{P0}=(P_0,\mathsans{M}_0,R_0)$. 
\end{defi}

\noindent
As detailed in subsect. \cref{subsec:cmsp}, 
with the spaces $S_{P}$ and $S_{P0}$ there are associated operations 
$\iOOO S_{P}=(\iFun(T[1]P),\mathfrak{m})$ and $\iOOO S_{P0}=(\iFun(T[1]P_0),\mathfrak{m}_0)$.

%The crossed module spaces 
$S_{P0}$ and $\iOOO S_{P0}$ should be compatible with $S_{P}$ and $\iOOO S_{P}$ 
in the appropriate sense, as $P_0$ is a submanifold of $P$, $\mathsans{M}_0$ is a submodule of $\mathsans{M}$
and $R_0$ is a restriction of $R$. Such congruity is codified by an appropriate crossed module space morphism
and its appended operation morphism along the lines of subsect. \cref{subsec:cmsp}. 

\begin{defi}
Let $L:S_{P0}\rightarrow S_{P}$ be the Lie group crossed module space morphism 
$L=(I,\varsigma)$, where $I:P_0\rightarrow P$

and $\varsigma:\mathsans{M}_0\rightarrow\mathsans{M}$ are the inclusion maps of $P_0$ and
$\mathsans{M}_0$ into $P$ and $\mathsans{M}$, respectively. 
\end{defi} 

\noindent 
Associated with $L$ there is a  morphism $\iOOO L:\iOOO S_{P}\rightarrow\iOOO S_{P0}$ of operations, 
namely $\iOOO L=(I^*,\dot\varsigma)$, $I^*:\iFun(P)\rightarrow\iFun(P_0)$ being the restriction pull-back 
and $\dot\varsigma:\mathfrak{m}_0\rightarrow\mathfrak{m}$ the Lie algebra crossed module morphism induced by the 
$\varsigma$ by Lie differentiation. 

Our operational synthetic theory so closely parallels that of the graded geometric version of 
the ordinary theory reviewed in subsect. \cref{subsec:scope} with a few important differences. 
To begin with, unlike the ordinary theory, it is not directly based on the relevant principal 
$\hat{\matheul{K}}$--$2$--bundle $\hat{\mathcal{P}}$ but on its attached synthetic setup. This 
hinges on the synthetic morphism and object groups $\mathsans{K}$, $\mathsans{K}_0$
and manifolds $P$, $P_0$, which, as discussed at length in subsect. \cref{subsec:synth},
do not constitute a true synthetic $\matheul{K}$--$2$--bundle $\mathcal{P}$ in spite 
of having many properties of one. Moreover, again unlike the ordinary theory, 
two operations, viz $\iOOO S_{P}$, $\iOOO S_{P0}$,  rather that just one appear and are potentially relevant. 
More technically, furthermore, the internal function algebras $\iFun(T[1]P)$, $\iFun(T[1]P_0)$ instead 
of the ordinary algebras $\Fun(T[1]P)$, $\Fun(T[1]P_0)$ are used here, 
a feature that the end is bound to make a difference at the end.  

The above dissimilarities notwithstanding, the ordinary theory provides a simple model, reference to which 
considerably aids intuition. In particular, it is useful to think of the whole synthetic setup of 
$\mathsans{K}$, $\mathsans{K}_0$ and $P$, $P_0$ as if it were some kind of 
synthetic $\matheul{K}$--$2$--bundle $\mathcal{P}$, though, as we have recalled,  
strictly speaking it is not lacking as it does a compositional structure.

\appendix

\vfil\eject

\section{\textcolor{blue}{\sffamily Notation and conventions}}\label{subsec:convnot}

%Before enetering into the details of the topic of this subsection, w
We recall below some of the basic notions and conventions of graded algebra and geometry 
we use throughout the present paper. 

For a pair $M$, $N$ of graded manifolds, we denote by $\iMap(M,N)$ 
the set of internal functions of $M$ into $N$. Thus, when expressed in terms of local 
body and soul coordinates $t^a$ and $y^r$ of $M$, the components of one such function 
with respect to local body and soul coordinates $u^i$ and $z^h$ of $N$ are polynomials
in the $y^r$ with coefficients which are smooth functions of the $t^a$ of possibly non zero degree. 
When $N$ is a graded vector space, group, Lie algebra etc.,
so is $\iMap(M,N)$ with the pointwise operations induced by those of $N$. 

In what follows, $E$ stands for an ungraded finite dimensional real vector space. 
Much of what we shall say can be formulated also for a graded vector space, but we shall not 
need to do so. 

For any integer $p$, we denote by $E[p]$  the degree $p$ shift of $E$, 
a copy of $E$ placed in degree $-p$. If we conventionally think of $E$ 
as a graded vector space supported in degree $0$, as it is customarily done, we may 
identify $E[0]$ with $E$ itself. 

A linear coordinate of $E[p]$ is just a non zero element of $E[p]^\vee=E^\vee[-p]$ and has 
hence degree $p$, where ${}^\vee$ denotes duality. Given a set of vectors $e_i\in E$ 
constituting a basis, there exists a 
unique set of linear coordinates $x_p{}^i$ of $E[p]$ dual to the basis, that is such that 
$x_p{}^i(e_j)=s^p\delta^i{}_j$, where $s^p$ denotes $p$--fold suspension raising degree by $p$ units. 

If we equip $E[p]$ with a set of linear coordinates $x_p{}^i$ as above, $E[p]$ becomes a graded manifold with 
singleton body. As such, $E[p]$ is concentrated in degree $p$ because the $x_p{}^i$ have degree $p$. 
For a graded manifold $M$, the set $\iMap(M,E[p])$ of $E[p]$--valued internal functions of $M$ has so a natural 
structure of graded vector space supported in degree $p$. % induced by the vector space structure of $E$. 

The particular case where $E=\mathbb{R}$ deserves a special mention for its relevance. %For any integer $p$, 
%The degree $p$ shift $\mathbb{R}[p]$ of $\mathbb{R}$ is a copy of $\mathbb{R}$ placed in degree $-p$
%with $\mathbb{R}[0]=\mathbb{R}$.
%In $\mathbb{R}[p]$, there exists a canonical linear coordinate $t$ of degree $p$, 
%that dual to the canonical basis $1$ of $\mathbb{R}$, rendering $\mathbb{R}[p]$ a graded manifold
%in degree $p$. 
For each $p$, we have the graded vector space 
\begin{equation}
\iFun_p(M)=\iMap(M,\mathbb{R}[p])
\label{convnot2}
\end{equation}
of $\mathbb{R}[p]$--valued internal functions. The spaces $\iFun_p(M)$ for varying $p$ span together the 
graded vector space \hphantom{xxxxxxxxxxxx}
\begin{equation}
\iFun(M)=\ddd_{p=-\infty}^\infty\iFun_p(M). %  \iMap(M,\mathbb{R}[p]). 
\label{convnot1}
\end{equation}
Thanks to the existence of graded commutative products 
$\iFun_p(M)\times\iFun_q(M)$ $\rightarrow\iFun_{p+q}(M)$,
$\iFun(M)$ is a graded commutative algebra. $\iFun(M)$ is just the algebra of internal functions of $M$ 
and for each $p$ $\iFun_p(M)$ is the subspace of $\iFun(M)$ of degree $p$ internal functions. 

There exists a canonical graded vector space isomorphism 
\begin{equation}
\iMap(M,E[p])\simeq\iFun_p(M)\otimes E.
\label{dercm3}
\end{equation}  
Indeed, upon choosing basis vectors $e_i$ of $E$, a given function $f\in\iMap(M,E[p])$ 
is fully specified by a set of degree $p$ internal functions $f^i\in\iFun_p(M)$, 
the components of $f$ with respect to the linear coordinates $x_p{}^i$ dual to the $e_i$. 
The $f^i$ in turn define an element $\hat f\in\iFun_p(M)\otimes E$ given by %\hphantom{xxxxxxxxxxxxxx}
\begin{equation}
\hat f=f^i\otimes e_i. 
\label{dercm3/1}
\end{equation}
$\hat f$ is by construction independent from the choice of the basis $e_i$. 
The correspondence $f\rightarrow\hat f$ yields the isomorphism \ceqref{dercm3}. 
By \ceqref{dercm3}, 
the functions of $\iMap(M,E[p])$ can be regarded as specially structured collections of functions of 
$\iFun_p(M)$. In this way, further, any linear operation on $\iFun(M)$, e. g. a derivation, immediately 
induces a corresponding operation on $\iMap(M,E[p])$. 

The function space $\iMap(E[p],E[p])$ contains a tautological element $x_p$, 
\begin{equation}
x_p=x_p{}^i\otimes e_i,
\label{dercm3/3}
\end{equation}
of degree $p$. $x_p$ corresponds to the identity function $\id_{E[p]}$ of $E[p]$ regarded as a graded manifold.

The internal function spaces $\iMap(*,N)$ with $*$ the singleton manifold play an important role in 
the analysis of this paper. $\iMap(*,-)$ is a functor from the category of graded manifolds to the category 
of sets. We call it the cross functor and denote it with the simplified notation $(-)^+$. 

When restricted to graded vector spaces regarded as graded manifolds, the cross functor 
can be described rather explicitly. For $\mathbb{R}$, by \ceqref{convnot2}, we have 
\begin{equation}
\mathbb{R}[p]^+%=\iMap(*,\mathbb{R}[p])
=\iFun_p(*)\simeq \mathbb{R}[-p]. 
\label{convnot3}
\end{equation}
By \ceqref{dercm3}, for the vector space $E$, we have similarly  \hphantom{xxxxxxxxxxxx} %\pagebreak
\begin{equation}
E[p]^+\simeq\iFun_p(*)\otimes E %\iMap(*,E[p])
\simeq \mathbb{R}[-p]\otimes E. 
\label{convnot4}
\end{equation}
Thus, by \ceqref{dercm3/1}, chosen a basis $e_i$ of $E$ an element $v\in E[p]^+$ expands as %
\begin{equation}
v=v^i\otimes e_i, 
\label{dercm3/2}
\end{equation}
where $v^i\in\mathbb{R}[-p]$. 
Notice that $\mathbb{R}[p]^+$ differs from $\mathbb{R}[p]$ for $p\not=0$, being in fact 
$\mathbb{R}[p]^+=s^{2p}\mathbb{R}[p]$. 
Similarly, $E[p]^+$ differs from $E[p]$ but it is related to $E[p]$ be means of the $2p$--fold
suspension isomorphism $\zeta_{E,p}:E[p]\xrightarrow{~\simeq~}E[p]^+$
defined by the sequence 
$E[p]\simeq\mathbb{R}\otimes E[p]\xrightarrow{~s^p\otimes s^p~}
s^p\mathbb{R}\otimes s^pE[p]\simeq\mathbb{R}[-p]\otimes E\simeq E[p]^+$. 
%acting for $x\in E$ as $x\mapsto s^p1\otimes s^px$. %{-p}^+[p]^+[p]^+[p]

It is sometimes useful to assemble the degree shifts $E[p]$ of $E$ for all $p$'s 
into the full degree extension $\ZZ E$ of $E$, the infinite dimensional graded vector space
\begin{equation}
\ZZ E=\ddd_{p=-\infty}^\infty E[p]. 
\label{dercm2}
\end{equation}
Similarly, we can assemble the $E[p]$--valued internal functions for  all $p$'s 
into the graded vector space
\begin{equation}
\iMap(M,\ZZ E)=\ddd_{p=-\infty}^\infty \iMap(M,E[p]). 
\label{dercm2/1}
\end{equation}
%As we shall see, extensions will enter our discussion at several key points. will often
When $E$ is endowed with additional structures such as those of graded commutative algebra,
graded Lie algebra, etc, these are inherited by $\iMap(M,\ZZ E)$. 

An important instance is the full degree extension $\ZZ \mathbb{R}$ of 
$\mathbb{R}$. $\iMap(M,\ZZ\mathbb{R})$ is a graded commutative algebra 
thanks to the ordinary multiplicative structure of $\mathbb{R}$. Indeed, 
$\iMap(M,\ZZ\mathbb{R})$ is nothing but the internal function algebra 
$\iFun(M)$ described earlier. 

In this paper, we are mostly concerned with internal function spaces of graded manifolds
such as $\iMap(M,N)$. The above setup can also be formulated for ordinary function spaces
$\Map(M,N)$ with similar results but also a few noticeable exceptions. \pagebreak Such spaces are more restrictive 
than the internal ones, since the coefficient functions of soul coordinates polynomials are 
required to be degree $0$ functions of the body coordinates. For the singleton manifold $M=*$, one has for instance
$\Map(*,\mathbb{R}[p])=\delta_{p,0}\mathbb{R}$ and $\Map(*,E[p])=\delta_{p,0}E$.

%The tautological form of $E[p]$ is defined as 
%\begin{equation}
%\xi_{[p]}=x^i\otimes e_i
%\label{dercm1}
%\end{equation}
%independently from the choice of the basis $e_i$. $\xi_{[p]}\in E^\vee[-p]\otimes E$ 
%and so $\xi_{[p]}$ has degree $p$. 
%Employing the tautological form it is often possible to write complicated 
%coordinate expressions avoiding the use of indices. We shall suppress the 
%subscript $p$ in $\xi_{[p]}$, when no confusion can arise. 

%, $E[p]$, 
%is isomorphic as a graded vector space to the tensor product of the 
%degree $-p$ shift  of $\mathbb{R}$, $\mathbb{R}[p]$, and $E$ itself, \hphantom{xxxxxxxxxxxx}
%\begin{equation}
%E[p]\simeq\mathbb{R}[p]\otimes E. 
%\label{dercm1xyz}
%\end{equation}
%The isomorphism is provided by the combined suspension--desuspension mapping $s^p\otimes s^{-p}:E[p]\simeq\mathbb{R}
%\otimes E[p]\longrightarrow \!\!\!\!\!\!\!\!{}^{\simeq}\,\,\,\,
%s^p\mathbb{R}\otimes s^{-p}E[p]\simeq\mathbb{R}[p]\otimes E$ acting for \linebreak $x\in E$ as $x\mapsto s^p1\otimes s^{-p}x$. 
%In practice, upon choosing a basis $e_i$ of $E$, we are identifying $x^is^pe_i$, where 
%$x^i\in\mathbb{R}$, with $s^px^ie_i$, where $s^px^i\in\mathbb{R}[p]$. In the fol\-lowing, we shall tacitly 
%identify $E[p]$ through \ceqref{dercm1}. 
%\footnote{$\vphantom{\bigg[}$ Recall that $s^p$ and $s^{-p}$ respectively raise and lower degree by 
%$p$ units.}.

%\vfil\eject 

\section{\textcolor{blue}{\sffamily Basic identities}}\label{app:ident}

Let $\mathsans{M}=(\mathsans{E},\mathsans{G},\tau,\mu)$ be a Lie group crossed module.
We collect below a number of structure relations which are used throughout the main text of the paper.  

The relevant differentiated structure maps are $\dot\tau:\mathfrak{e}\rightarrow\mathfrak{g}$,
$\mu\dot{}\,:\mathsans{G}\times\mathfrak{e}\rightarrow\mathfrak{e}$,
$\dot{}\mu:\mathfrak{g}\times\mathsans{E}\rightarrow\mathfrak{e}$ and 
$\dot{}\mu\dot{}\,:\mathfrak{g}\times\mathfrak{e}\rightarrow\mathfrak{e}$. They are defined as 
\begin{align}
&\dot\tau(X)=\frac{d\tau(C(v))}{dv}\Big|_{v=0},
\vphantom{\Big]}
\label{ident1}
\\
&\mu\dot{}\,(a,X)=\frac{d}{dv}\mu(a,C(v))\Big|_{v=0},
\vphantom{\Big]}
\label{ident2}
\\
&\dot{}\mu(x,A)=\frac{d}{du}\mu(c(u),A)A^{-1}\Big|_{u=0}, 
\vphantom{\Big]}
\label{ident3}
\\
&\dot{}\mu\dot{}\,(x,X)
=\frac{\partial}{\partial u}\Big(\frac{\partial\mu(c(u),C(v))}{\partial v}\Big|_{v=0}\Big)\Big|_{u=0}
\vphantom{\Big]}
\label{ident4}
\end{align}
for $a\in\mathsans{G}$, $A\in\mathsans{E}$,
$x\in\mathfrak{g}$, $X\in\mathfrak{e}$, where $c(u)$ and $C(v)$ are curves in $\mathsans{G}$ 
and $\mathsans{E}$ with  $c(u)\big|_{u=0}$ $=1_{\mathsans{G}}$ and $C(v)\big|_{v=0}=1_{\mathsans{E}}$ and 
$dc(u)/du\big|_{u=0}=x$ and $dC(v)/dv\big|_{v=0}$ $=X$, respectively, whose choice is immaterial. 

The following algebraic identities hold:
\begin{align}
&\dot\tau(\,\,\dot{}\mu(x,A))=x-\Ad\tau(A)(x), 
\vphantom{\Big]}
\label{ident5}
\\
&\,\dot{}\mu(\dot\tau(X),A)=X-\Ad A(X), 
\vphantom{\Big]}
\label{ident6}
\\
&\,\dot{}\mu([x,y],A)
=\dot{}\mu\dot{}\,(x,\,\dot{}\mu(y,A))-\dot{}\mu\dot{}\,(y,\,\dot{}\mu(x,A))-[\,\,\dot{}\mu(x,A),\,\dot{}\mu(y,A)],
\vphantom{\Big]}
\label{ident7}
\\
&\,\dot{}\mu(x,AB)=\dot{}\mu(x,A)+\Ad A(\,\,\dot{}\mu(x,B)), 
\vphantom{\Big]}
\label{ident8}
\\
&\,\dot{}\mu(\Ad a(x),\mu(a,A))=\mu\dot{}\,(a,\,\dot{}\mu(x,A)),
\vphantom{\Big]}
\label{ident9}
\\
&\Ad A(\,\dot{}\mu\dot{}\,(x,X))=\dot{}\mu\dot{}\,(x,\Ad A(X))-[\,\,\dot{}\mu(x,A),\Ad A(X)],
\vphantom{\Big]}
\label{ident110}
\end{align}
where $a\in\mathsans{G}$, $A,B\in\mathsans{E}$, $x,y\in\mathfrak{g}$, $X\in\mathfrak{e}$.

The following variational identities hold: \pagebreak
\begin{align}
&\delta\mu(a,A)\mu(a,A)^{-1}
=\mu\dot{}\,(a,\,\dot{}\mu(a^{-1}\delta a,A)+\delta AA^{-1}), 
\vphantom{\Big]}
\label{ident11}
\\
&\delta\mu\dot{}\,(a,X)=\mu\dot{}\,(a,\dot{}\mu\dot{}\,(a^{-1}\delta a,X)+\delta X), 
\vphantom{\Big]}
\label{ident12}
\\
&\delta\,\,\dot{}\mu(x,A)=\dot{}\mu(\delta x,A)+\dot{}\mu\dot{}\,(x,\delta AA^{-1})-[\,\,\dot{}\mu(x,A),\delta AA^{-1}],
\vphantom{\Big]}
\label{ident13}
\end{align}
where $a\in\mathsans{G}$, $A\in\mathsans{E}$, $x\in\mathfrak{g}$, $X\in\mathfrak{e}$.

\vfil\eject

\noindent
\textcolor{blue}{Acknowledgements.} 
The author thanks R. Picken, J. Huerta and C. Saemann for useful discussions.
He acknowledges financial support from INFN Research Agency
under the provisions of the agreement between University of Bologna and INFN. 
He also thanks the organizer of the 2018 EPSRC Durham Symposium 
on ``Higher Structures in M-Theory'' during which part of this work was done.

\vfil\eject

\end{document}